\documentclass[11pt]{article}

\usepackage[utf8]{inputenc}
\usepackage[T1]{fontenc}
\usepackage{lmodern}
\usepackage[a4paper,margin=2.5cm]{geometry}
\usepackage{microtype}
\usepackage{authblk} 
\usepackage{amsmath,amssymb,amsthm,mathtools}
\usepackage{bm}
\usepackage{stmaryrd}

\usepackage{enumitem}
\setlist[itemize]{leftmargin=2em}
\setlist[enumerate]{leftmargin=2.5em}

\usepackage{tikz}
\usetikzlibrary{calc,positioning,arrows.meta,decorations.pathmorphing}

\usepackage{xcolor}
\usepackage[
colorlinks=true,
linkcolor=blue,
citecolor=blue,
urlcolor=blue
]{hyperref}
\usepackage[nameinlink,noabbrev]{cleveref}
\usepackage{authblk}
\newtheorem{theorem}{Theorem}[section]
\newtheorem{proposition}[theorem]{Proposition}
\newtheorem{lemma}[theorem]{Lemma}
\newtheorem{corollary}[theorem]{Corollary}

\theoremstyle{definition}
\newtheorem{definition}[theorem]{Definition}
\newtheorem{example}[theorem]{Example}

\theoremstyle{remark}
\newtheorem{remark}[theorem]{Remark}

\crefname{section}{Section}{Sections}
\crefname{theorem}{Theorem}{Theorems}
\crefname{proposition}{Proposition}{Propositions}
\crefname{lemma}{Lemma}{Lemmas}
\crefname{corollary}{Corollary}{Corollaries}
\crefname{definition}{Definition}{Definitions}
\crefname{example}{Example}{Examples}
\crefname{remark}{Remark}{Remarks}

\newcommand{\Prop}{\mathsf{Prop}}
\newcommand{\Lang}{\mathcal{L}}
\newcommand{\Fac}{\mathsf{Fac}}

\newcommand{\St}{\mathsf{St}}
\newcommand{\True}[1]{\llbracket #1 \rrbracket}
\newcommand{\Pow}{\mathcal{P}}
\newcommand{\forces}{\models}

\title{A Logic of Secrecy on Simplicial Models}
\author[1]{Shanxia Wang\thanks{Email: wangshanxia@htu.edu.cn}}
\affil[1]{School of Computer Science and Information Engineering(College of Artificial Intelligence), Henan Normal University, Xinxiang, Henan, China}

\begin{document}
\date{} 
\maketitle


\begin{abstract}
	We develop a logic of secrecy on simplicial models for multi-agent systems. Standard simplicial models provide a geometric semantics for knowledge by representing global states as facets of a chromatic simplicial complex and agents' local states as coloured vertices. However, secrecy cannot in general be captured as a genuinely new modality by relying on the ordinary simplicial knowledge structure alone. This motivates the introduction of an additional secrecy layer.
	
	To this end, we define \emph{simplicial secrecy models}, which enrich standard simplicial epistemic models with agent-relative secrecy neighborhood functions attached to local states. On this basis, we introduce a primitive secrecy operator $S_a\varphi$. Semantically, $S_a\varphi$ holds when agent $a$ knows $\varphi$ in the ordinary simplicial sense and, moreover, the truth set of $\varphi$ belongs to one of the designated secrecy neighborhoods associated with $a$'s current local state. The clause for secrecy thus combines an ordinary knowledge requirement with an additional local-state-based neighborhood requirement, while the frame condition ensures that designated secrecy events remain non-trivial from the perspective of every other agent.
	
	We formulate a system $\mathsf{SSL}$ for the resulting language and show that it is sound with respect to the class of simplicial secrecy models. For the genuinely multi-agent case $|A|\ge 2$, we prove completeness by first constructing an auxiliary-colour canonical model and then representing it inside the original class of pure $A$-chromatic simplicial secrecy models. The resulting framework yields a primitive, local-state-based, and geometrically grounded account of secrecy on simplicial models, together with a sound axiomatization and, in the genuinely multi-agent case, a complete one.
\end{abstract}


\section{Introduction}\label{sec:introduction}

\subsection{Motivation: secrecy beyond ordinary epistemic knowledge}

Secrecy is a central notion in epistemic reasoning about multi-agent systems.
In distributed computation, security protocols, and information flow, one often
needs to express that some piece of information is available to one agent while
remaining hidden from others~\cite{halpern1990common,fagin1995reasoning,
	fagin1992machines}. In standard epistemic logic, such situations are usually
analyzed in possible-worlds frameworks, most notably in Kripke semantics, where
secrecy is represented by combining one agent's knowledge with the ignorance of
other agents~\cite{halpern2002secrecy,halpern2005anonymity,dechesne2010security}.
A closely related notion is that of \emph{exclusive knowledge}, studied in the
context of secrecy logics by Xiong and \r{A}gotnes~\cite{xiong2023secrets,
	xiong2026solitary}, where a proposition counts as secret for agent $a$ if $a$
knows it and no other agent does. The epistemic and logical analysis of secrecy
has also been explored from the perspective of intentional secret-keeping~\cite{aldini2024logical},
and in connection with concrete epistemic puzzles such as the Russian Cards
Problem~\cite{vanditmarsch2003russiancards,rajsbaum2023russian,leyva2021russian}.

These approaches are natural and mathematically fruitful, but they leave open a
structural question that becomes especially significant in multi-agent settings:
what does secrecy look like when epistemic states are not treated as atomic
worlds, but as combinations of local states? In ordinary Kripke-style secrecy
logics, secrecy is typically not a genuinely new semantic notion. Instead, it is
usually definable as a combination of one agent's knowledge and the failure of
other agents' knowledge, for example in the form
\[
K_a\varphi \land \bigwedge_{b\neq a}\neg K_b\varphi.
\]
If epistemic states are instead assembled from local pieces of information, then
one may ask whether secrecy should still be understood merely as such a
definable combination, or whether it admits a more structural and genuinely
local interpretation.

\subsection{Why simplicial semantics}

Simplicial semantics provides exactly the kind of framework in which this
question becomes meaningful. The use of simplicial complexes in distributed
computing goes back to the foundational work of Herlihy and
Shavit~\cite{herlihy1999topological}, who established the topological structure
of asynchronous computability, a line of research developed into a comprehensive
combinatorial-topological approach to distributed systems
in~\cite{herlihy2014distributed}. Building on this foundation, Goubault, Ledent,
and Rajsbaum introduced simplicial epistemic models as a geometric semantics
for epistemic logic~\cite{goubault2018simplicial,goubault2021simplicial}, with
a systematic development in the doctoral thesis of Ledent~\cite{ledent2019geometric}
and a comprehensive survey given by van Ditmarsch, Goubault, Ledent, and
Rajsbaum~\cite{vanditmarsch2022knowledge}.

In simplicial epistemic models, global states are represented by facets of a
chromatic simplicial complex, while vertices encode agents' local states.
Epistemic accessibility is induced by local-state sharing: two facets are
indistinguishable for agent $a$ whenever they contain the same $a$-coloured
vertex. Thus knowledge is not imposed by an external accessibility relation, but
emerges from the local-state structure itself. This local-state perspective has
proved fruitful for the interpretation of knowledge and its
dynamics~\cite{goubault2018simplicial,goubault2021simplicial,ledent2019geometric,
	vanditmarsch2021equality,goubault2022kb4n}, as well as for the study of
distributed knowledge~\cite{goubault2023semisimplicial,galimullin2024varieties,
	balbiani2024dynamicdistributed}, and for extensions to more general simplicial
frameworks such as impure complexes and related
structures~\cite{randrianomentsoa2023impure,goubault2024hypergraphs}.

It is therefore natural to ask whether secrecy can also be given a genuinely
simplicial treatment. However, this requires some care. If one tries to define
secrecy only in terms of the ordinary simplicial knowledge structure, then
secrecy does not become a genuinely new modality. It simply reduces to the same
knowledge-and-ignorance pattern familiar from Kripke-based secrecy logics and
acquires no independent semantic status of its own. In other words, if secrecy
is to be studied as a primitive modal notion on simplicial models, it must be
supported by an additional semantic layer beyond the standard simplicial
epistemic structure~\cite{goubault2021simplicial,vanditmarsch2022knowledge}.

\subsection{Main idea of the paper}

The main idea of this paper is to preserve the ordinary simplicial semantics for
knowledge while adding an extra, local-state-based semantic layer for secrecy.
More precisely, we enrich standard chromatic simplicial epistemic models with
agent-relative secrecy neighborhood functions attached to vertices, that is, to
agents' local states. In this way, secrecy is anchored directly in the local
geometry of the simplicial complex rather than treated as a purely abstract
non-normal modality.

On this basis, we introduce a primitive secrecy operator $S_a\varphi$.
Semantically,
\[
S_a\varphi
\]
is interpreted as the conjunction of two requirements: first, agent $a$ knows
$\varphi$ in the ordinary simplicial sense, and second, the truth set of
$\varphi$ belongs to one of the secrecy neighborhoods associated with $a$'s
current local state. In slogan form, the intended reading is:
\[
S_a\varphi
\quad\text{means}\quad
K_a\varphi \text{ plus a local-state-based secrecy designation.}
\]
The knowledge conjunct guarantees that all of $a$'s epistemic possibilities
satisfy $\varphi$, while the neighborhood component selects which such known
events count as secrets.

A further frame condition, denoted by \textbf{(SN)}, ensures that any designated
secret event remains epistemically non-trivial from the perspective of every
other agent. Intuitively, if an event is designated as secret for agent $a$ at a
given local state, then every other agent must still consider some
indistinguishable alternative outside that event. As a consequence, whenever
$\varphi$ is secret for $a$, agent $a$ knows $\varphi$, $\varphi$ is true, and
every other agent fails to know both $\varphi$ and $\neg\varphi$. This
epistemic profile parallels familiar conclusions from Kripke-based secrecy
logics~\cite{xiong2023secrets,halpern2002secrecy,halpern2005anonymity,
	aldini2024logical}, but here it emerges from a genuinely simplicial and
vertex-based semantics.

Conceptually, our framework occupies a middle ground between two familiar
perspectives. On the one hand, it preserves the standard simplicial
interpretation of epistemic knowledge~\cite{goubault2021simplicial,
	vanditmarsch2022knowledge} and therefore retains the local-state and
geometric intuitions characteristic of simplicial semantics. On the other hand,
it treats secrecy as a primitive operator whose semantics depends on an
additional neighborhood layer, in the spirit of neighborhood and evidence-based
epistemic logics~\cite{chellas1980modal,vanbenthem2014evidence,
	vanbenthem2012evidence,wang2024psnl}. In this sense, our framework is not
merely a reformulation of knowledge-based secrecy~\cite{xiong2023secrets,
	xiong2026solitary}, but a structured simplicial semantics for secrecy in its
own right.

\subsection{Main results and contributions}

The contribution of the paper is therefore not simply to transplant an existing
secrecy notion into a new setting, but to develop a logic of secrecy that is
simultaneously epistemic, simplicial, and primitive. The main contributions are
as follows.
\begin{enumerate}
	\item We introduce \emph{simplicial secrecy models}, extending standard
	$A$-chromatic simplicial epistemic models with agent-relative secrecy
	neighborhood functions attached to local states.
	
	\item We define a primitive secrecy operator $S_a$ whose semantics is
	extensional and vertex-based: $S_a\varphi$ requires that agent $a$ knows
	$\varphi$ and that the truth set of $\varphi$ is designated as a secrecy
	event at $a$'s current local state.
	
	\item We formulate the system $\mathsf{SSL}$ for the resulting language,
	incorporating the owner-locality principle forced by the semantics, and
	derive additional valid principles characterizing secrecy.
	
	\item For the genuinely multi-agent case $|A|\ge 2$, we prove soundness and
	completeness of $\mathsf{SSL}$ with respect to simplicial secrecy models.
	The completeness proof proceeds by first constructing an auxiliary-colour
	canonical model and then representing it inside the original class of pure
	$A$-chromatic simplicial secrecy models.
\end{enumerate}

The resulting framework provides a new foundation for the study of secrecy in
simplicial epistemic logic. Conceptually, it shows how secrecy can be added as
a genuine new semantic layer on top of standard simplicial
knowledge~\cite{goubault2021simplicial,vanditmarsch2022knowledge,
	ledent2019geometric}. Technically, for the genuinely multi-agent case
$|A|\ge 2$, it yields a sound and complete axiomatization of secrecy on
simplicial models. The technical approach combines the canonical model method
for neighborhood logics~\cite{chellas1980modal,vanbenthem2014evidence,
	vanbenthem2012evidence} with the simplicial representation technique
characteristic of this line of research~\cite{goubault2021simplicial,
	goubault2022kb4n,goubault2023semisimplicial,ledent2019geometric}.

The remainder of the paper is organized as follows. Section~\ref{sec:relatedwork}
discusses related work. Section~\ref{sec:syntax-semantics} introduces the
syntactic and semantic framework, including simplicial secrecy models and their
truth conditions. Section~\ref{sec:axiomatic-system} presents the axiomatic
system $\mathsf{SSL}$ and its basic proof-theoretic properties.
Section~\ref{sec:metatheory} establishes soundness and completeness, for
$|A|\ge 2$, by combining an auxiliary-colour canonical construction with a
representation theorem back into pure $A$-chromatic simplicial secrecy models.
Section~\ref{sec:conclusion} concludes.


\section{Related Work}\label{sec:relatedwork}

The present paper sits at the intersection of three research streams: the
logical analysis of secrecy in multi-agent systems, simplicial models for
epistemic logic, and neighborhood semantics for non-normal modalities. This
section is intended only to locate the present contribution within these
literatures and to indicate the main point of departure from each of them.

\subsection{Logical analyses of secrecy}

The epistemic analysis of secrecy in multi-agent systems has a substantial
history. The foundational framework for knowledge in distributed systems was
laid by Halpern and Moses~\cite{halpern1990common} and developed
comprehensively in the textbook by Fagin, Halpern, Moses, and
Vardi~\cite{fagin1995reasoning}; the properties of knowledge specific to
distributed environments were further studied in~\cite{fagin1992machines}.
Within this tradition, Halpern and O'Neill introduced formal treatments of
secrecy and anonymity in multiagent systems~\cite{halpern2002secrecy,
	halpern2005anonymity}, defining secrecy in terms of what one agent knows and
what other agents fail to know. Their approach is Kripke-based: secrecy is a
condition expressible by combining knowledge and ignorance, rather than a
separate semantic primitive.

The most directly related logical work is that of Xiong and
\r{A}gotnes~\cite{xiong2023secrets,xiong2026solitary}. In~\cite{xiong2023secrets},
they develop a logic of secrets in which a proposition $\varphi$ is secret for
agent $a$ if $a$ knows $\varphi$ and no other agent does, that is,
\[
S_a\varphi := K_a\varphi \land \bigwedge_{b\neq a}\neg K_b\varphi.
\]
In that framework, secrecy is again a \emph{definable} combination of
knowledge and ignorance, and the main technical results concern the
interpolation properties of the resulting logic. The follow-up
work~\cite{xiong2026solitary} further investigates the relationship between
exclusive knowledge and secrecy logics.

A different line is pursued by Aldini et al.~\cite{aldini2024logical}, who
study the intentional dimension of secret-keeping by combining knowledge,
belief, and intention. Their interest lies in the mental states of the
secret-keeper rather than in the local-state geometry of a distributed
epistemic model.

The Russian Cards Problem and related secrecy puzzles provide natural
benchmarks for epistemic analyses of hidden information~\cite{vanditmarsch2003russiancards}.
From a distributed computing perspective, Rajsbaum connects unconditionally
secure information transmission in Russian cards problems to simplicial
models~\cite{rajsbaum2023russian}, while Leyva-Acosta, Pascual-Aseff, and
Rajsbaum study protocol aspects of information exchange in that
setting~\cite{leyva2021russian}. These works illustrate the importance of
secrecy in settings where agents hold partial local information, but they do
not provide a primitive simplicial semantics for secrecy itself.

Taken together, these approaches show that secrecy has been extensively studied
in epistemic logic, but most existing analyses treat it not as a primitive
modality, but as a pattern built from knowledge and ignorance. The present
paper departs from that tradition by giving secrecy an independent, vertex-based
semantics on simplicial models.

\subsection{Simplicial models for epistemic logic}

The use of simplicial complexes in distributed computing originates with the
topological characterization of asynchronous computability by Herlihy and
Shavit~\cite{herlihy1999topological}, with a comprehensive treatment in the
monograph of Herlihy, Kozlov, and Rajsbaum~\cite{herlihy2014distributed}. In
this setting, protocol complexes represent collections of compatible local
states, and facets represent complete global configurations.

The connection between simplicial complexes and epistemic logic was established
by Goubault, Ledent, and Rajsbaum in a series of foundational
papers~\cite{goubault2018simplicial,goubault2021simplicial}, with a systematic
development in Ledent's doctoral thesis~\cite{ledent2019geometric} and a
comprehensive survey in van Ditmarsch, Goubault, Ledent, and
Rajsbaum~\cite{vanditmarsch2022knowledge}. In this framework, chromatic
simplicial complexes provide a geometric semantics for S5 multi-agent
epistemic logic: vertices represent agents' local states, facets represent
global states, and epistemic indistinguishability is determined by vertex
identity. Thus the epistemic relation $\sim_a$, and hence ordinary knowledge,
emerges from local-state sharing rather than from an independently stipulated
accessibility relation.

This line of research has been extended in several directions. Dynamic
epistemic analyses of equality negation and related covering tasks appear
in~\cite{vanditmarsch2021equality}. The logic $KB4_n$, modeling agents that may
crash and lose their local state, was given a simplicial semantics
in~\cite{goubault2022kb4n}. Distributed and semi-simplicial set models for
distributed knowledge were developed in~\cite{goubault2023semisimplicial},
with further variations studied in~\cite{galimullin2024varieties,
	balbiani2024dynamicdistributed}. The simplicial perspective has also been
extended beyond the standard pure setting, for example to impure simplicial
complexes and to chromatic hypergraphs~\cite{randrianomentsoa2023impure,
	goubault2024hypergraphs}.

What unifies this literature is that it studies knowledge and its variants in
geometric, local-state-based terms. By contrast, the present paper adds a
genuinely new modality on top of the standard simplicial epistemic base.
Rather than varying the treatment of knowledge itself, we introduce secrecy as
a primitive operator supported by an additional semantic layer.

\subsection{Neighborhood semantics for non-normal modalities}

Neighborhood semantics, introduced by Chellas~\cite{chellas1980modal}, provides
a general framework for non-normal modalities. Instead of interpreting a modal
operator by means of an accessibility relation, one assigns to each state a
collection of sets of states, its neighborhood, and a modal formula is true at
that state when the truth set of the formula belongs to the neighborhood.

In the epistemic setting, neighborhood structures have been used to model
evidence and belief~\cite{vanbenthem2012evidence,vanbenthem2014evidence},
topological reasoning about knowledge~\cite{moss1992topological}, subset-space
logic~\cite{wang2013subset}, and more recent frameworks such as point-set
neighborhood logic~\cite{wang2024psnl}. In distributed and
communication-oriented settings, related ideas also appear in dynamic models of
communication patterns and epistemic information flow~\cite{velazquez2021communication,
	castaneda2024pattern}.

Our use of neighborhoods is continuous with this tradition, but differs from it
in a crucial respect. In standard neighborhood semantics, neighborhoods are
typically attached to worlds or global states. In the present paper, by
contrast, secrecy neighborhoods are attached to \emph{vertices}, that is, to
agents' local states in a simplicial complex. This makes the additional modal
layer genuinely local-state-based. Moreover, owner-knowledge is imposed
directly in the truth clause for $S_a$, while the frame condition
\textbf{(SN)} is tailored specifically to the secrecy reading: it ensures that
any designated secret event remains epistemically non-trivial from the
perspective of every other agent at every facet compatible with the owner's
local state.

Thus neighborhood semantics provides the right semantic technology for a
primitive secrecy operator, but the present framework adapts that technology to
the geometry of simplicial models by anchoring the neighborhood layer in local
states rather than in abstract worlds.

\subsection{Position of the present work}

The present paper occupies a distinctive position relative to these three
traditions. Relative to Kripke-based secrecy logics~\cite{halpern2002secrecy,
	halpern2005anonymity,xiong2023secrets,xiong2026solitary}, our framework
treats secrecy as a \emph{primitive} modality rather than as a definable
combination of knowledge and ignorance. Relative to existing simplicial
epistemic work~\cite{goubault2021simplicial,vanditmarsch2022knowledge,
	goubault2022kb4n,goubault2023semisimplicial}, we add a genuinely new modal
layer instead of extending or varying the treatment of knowledge itself.
Finally, relative to standard neighborhood semantics~\cite{chellas1980modal,
	vanbenthem2014evidence,moss1992topological,wang2024psnl}, our neighborhoods
are not world-based but vertex-based, and are therefore anchored directly in
the local-state geometry of simplicial complexes.

In this sense, the contribution of the paper is to bring together three ideas
that had previously remained separate: secrecy as an epistemic phenomenon,
simplicial models as a semantics of local states, and neighborhood structure as
a semantics for non-normal modalities. The result is a primitive,
vertex-based, and geometrically grounded logic of secrecy on simplicial
models.


\section{Syntax and Semantics}\label{sec:syntax-semantics}

In this section we introduce the formal framework of the paper. We begin with the
standard simplicial semantics for epistemic knowledge, where global states are
represented by facets of a chromatic simplicial complex and an agent's knowledge
is determined by the local state represented by her coloured vertex. We then add
a secrecy layer, consisting of agent-relative secrecy neighborhoods attached to
local states. This yields a primitive semantics for secrecy on top of the
ordinary simplicial epistemic base.

\subsection{Underlying simplicial epistemic structures}

Let $A$ be a finite non-empty set of agents, and let $\Prop$ be a countable set
of propositional variables.

\begin{definition}[Simplicial complex]
	A \emph{simplicial complex} is a pair $\mathcal{C}=(V,\mathcal{F})$ such that:
	\begin{enumerate}
		\item $V$ is a non-empty set of vertices;
		\item $\mathcal{F}\subseteq \mathcal{P}_{\mathrm{fin}}(V)$ is a non-empty family of finite non-empty subsets of $V$, called \emph{faces};
		\item $\mathcal{F}$ is downward closed: if $X\in\mathcal{F}$ and $\emptyset\neq Y\subseteq X$, then $Y\in\mathcal{F}$.
	\end{enumerate}
\end{definition}

Intuitively, a face represents a compatible partial configuration of local
states. In epistemic applications, maximal faces represent full global states, a
perspective grounded in the topological approach to distributed
computing~\cite{herlihy1999topological,herlihy2014distributed}.

\begin{definition}[Facets]
	Let $\mathcal{C}=(V,\mathcal{F})$ be a simplicial complex. A face
	$X\in\mathcal{F}$ is a \emph{facet} if it is maximal under inclusion, that is,
	if whenever $X\subseteq Y\in\mathcal{F}$, then $X=Y$. We write
	$\Fac(\mathcal{C})$ for the set of all facets of $\mathcal{C}$.
\end{definition}

\begin{definition}[Chromatic simplicial complex]
	An \emph{$A$-chromatic simplicial complex} is a triple
	\[
	\mathcal{C}=(V,\mathcal{F},\chi)
	\]
	such that:
	\begin{enumerate}
		\item $(V,\mathcal{F})$ is a simplicial complex;
		\item $\chi:V\to A$ is a colouring function;
		\item for every face $X\in\mathcal{F}$, the restriction $\chi|_X$ is injective.
	\end{enumerate}
\end{definition}

The injectivity condition ensures that no face contains two distinct local states
of the same agent, formalizing the principle that each agent occupies a unique
local state in any configuration; compare the standard chromatic setting in
simplicial semantics for distributed systems and epistemic
logic~\cite{herlihy2014distributed,goubault2018simplicial,goubault2021simplicial}.

\begin{definition}[Chromatic simplicial epistemic model]
	An \emph{$A$-chromatic simplicial epistemic model} is a tuple
	\[
	M=(V,\mathcal{F},\chi,\nu)
	\]
	such that:
	\begin{enumerate}
		\item $(V,\mathcal{F},\chi)$ is an $A$-chromatic simplicial complex;
		\item every facet $X\in\Fac(M)$ satisfies $\chi[X]=A$;
		\item every vertex belongs to some facet of $M$, i.e.
		\[
		V=\bigcup_{X\in\Fac(M)} X;
		\]
		\item $\nu:\Fac(M)\to\mathcal{P}(\Prop)$ is a valuation function.
	\end{enumerate}
\end{definition}

Thus every facet contains exactly one vertex of each colour, and every vertex
occurs in at least one facet. Hence every local state represented in the model
is globally realizable in some complete configuration, as in the standard pure
simplicial epistemic semantics~\cite{goubault2018simplicial,
	goubault2021simplicial,ledent2019geometric,vanditmarsch2022knowledge}.

Since every facet contains exactly one vertex of each colour, for every agent
$a\in A$ and every facet $X\in\Fac(M)$ there is a unique vertex of colour $a$
contained in $X$. We denote this vertex by
\[
v_a(X).
\]

\begin{definition}[Star]
	Let $M=(V,\mathcal{F},\chi,\nu)$ be an $A$-chromatic simplicial epistemic
	model. For every vertex $v\in V$, define
	\[
	\St(v)=\{X\in\Fac(M)\mid v\in X\}.
	\]
\end{definition}

So $\St(v)$ is the set of all facets containing the local state represented by
$v$. In the simplicial epistemic literature, stars of vertices play exactly the
role of local epistemic ranges determined by an agent's local
state~\cite{goubault2021simplicial,ledent2019geometric,
	vanditmarsch2022knowledge}.

\begin{definition}[Epistemic indistinguishability]\label{def:sim-a}
	Let $M=(V,\mathcal{F},\chi,\nu)$ be an $A$-chromatic simplicial epistemic
	model, let $a\in A$, and let $X,Y\in\Fac(M)$. We define
	\[
	X\sim_a Y
	\quad\text{iff}\quad
	v_a(X)=v_a(Y).
	\]
	Equivalently,
	\[
	X\sim_a Y
	\quad\text{iff}\quad
	Y\in\St(v_a(X)).
	\]
\end{definition}

This is the standard simplicial notion of epistemic accessibility~\cite{goubault2018simplicial,
	goubault2021simplicial,ledent2019geometric,vanditmarsch2022knowledge}: two
global states are indistinguishable for agent $a$ exactly when they contain the
same local state of $a$. This local-state-based characterization of epistemic
indistinguishability is central to the simplicial approach to knowledge in
distributed systems~\cite{herlihy2014distributed}.

\begin{lemma}\label{lem:sim-equiv}
	For every agent $a\in A$, the relation $\sim_a$ on $\Fac(M)$ is an
	equivalence relation.
\end{lemma}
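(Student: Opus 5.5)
The plan is to observe that $\sim_a$ is the kernel of a function and is therefore automatically an equivalence relation. Fix an agent $a\in A$. The paper has already noted that every facet $X\in\Fac(M)$ contains exactly one vertex of colour $a$. This follows from two facts: $\chi[X]=A$ gives at least one such vertex, and injectivity of $\chi|_X$ gives at most one. Hence the assignment $X\mapsto v_a(X)$ is a well-defined total function
\[
v_a:\Fac(M)\to V .
\]
By Definition~\ref{def:sim-a}, $X\sim_a Y$ holds iff $v_a(X)=v_a(Y)$, so $\sim_a$ is exactly the relation ``having the same image under $v_a$''.

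The three properties are then checked directly, each reducing to the corresponding property of equality on $V$.
\begin{enumerate}
	\item \textbf{Reflexivity.} $v_a(X)=v_a(X)$, so $X\sim_a X$.
	\item \textbf{Symmetry.} If $v_a(X)=v_a(Y)$, then $v_a(Y)=v_a(X)$.
	\item \textbf{Transitivity.} If $v_a(X)=v_a(Y)$ and $v_a(Y)=v_a(Z)$, then $v_a(X)=v_a(Z)$.
\end{enumerate}

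The only step with any content is the well-definedness of $v_a$, which relies on both conditions above; without it, reflexivity could fail or $\sim_a$ would not be determined. Since this was already established right after the definition of chromatic simplicial epistemic models, I expect no real obstacle. For the alternative characterization via stars, I would note that $Y\in\St(v_a(X))$ means $v_a(X)\in Y$. Because $v_a(X)$ has colour $a$ and $Y$ has a unique $a$-coloured vertex, this is equivalent to $v_a(X)=v_a(Y)$, so the two formulations in Definition~\ref{def:sim-a} agree and the argument applies to either.
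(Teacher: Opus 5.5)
Your proof is correct and takes essentially the same route as the paper: both derive reflexivity, symmetry, and transitivity of $\sim_a$ directly from the corresponding properties of equality between the vertices $v_a(X)$. Your additional checks, that $v_a$ is well defined and that the two formulations in Definition~\ref{def:sim-a} agree, are accurate and make explicit what the paper leaves implicit.
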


\begin{proof}
	Reflexivity is immediate, since every facet $X$ contains the vertex $v_a(X)$
	and hence $v_a(X)=v_a(X)$. Symmetry is obvious. For transitivity, suppose
	$X\sim_a Y$ and $Y\sim_a Z$. Then $v_a(X)=v_a(Y)$ and $v_a(Y)=v_a(Z)$,
	hence $v_a(X)=v_a(Z)$, so $X\sim_a Z$.
\end{proof}

\subsection{Simplicial secrecy models}

We retain the standard simplicial semantics for knowledge, but enrich the model
with an additional secrecy layer. Intuitively, this extra structure specifies,
for each agent-local state, which epistemic events count as \emph{secret events}
for that agent at that local state. The use of neighborhood functions to capture
such additional semantic structure follows the established tradition in modal
logic for modeling non-normal modalities~\cite{chellas1980modal} and, more
specifically, their epistemic use in evidence-based
frameworks~\cite{vanbenthem2012evidence,vanbenthem2014evidence}.

\begin{definition}[Simplicial secrecy model]\label{def:ssm}
	A \emph{simplicial secrecy model} is a tuple
	\[
	M=(V,\mathcal{F},\chi,\nu,\{N_a^S\}_{a\in A})
	\]
	such that:
	\begin{enumerate}
		\item $(V,\mathcal{F},\chi,\nu)$ is an $A$-chromatic simplicial epistemic model;
		\item for each agent $a\in A$, let
		\[
		V_a=\{v\in V \mid \chi(v)=a\}.
		\]
		Then
		\[
		N_a^S : V_a \to \mathcal{P}\bigl(\mathcal{P}(\Fac(M))\bigr)
		\]
		is a \emph{secrecy neighborhood function};
		\item for every agent $a\in A$, every $v\in V_a$, and every
		$U\in N_a^S(v)$, the following condition holds:
		\begin{enumerate}
			\item[\textbf{(SN)}] \emph{External uncertainty:} for every facet
			$X\in \St(v)$ and every agent $b\in A\setminus\{a\}$, there exists
			a facet $Y\in \Fac(M)$ such that
			\[
			X\sim_b Y
			\qquad\text{and}\qquad
			Y\notin U.
			\]
		\end{enumerate}
	\end{enumerate}
\end{definition}

\begin{remark}
	The knowledge requirement for secrecy is imposed in the truth clause for
	$S_a$, not in the frame conditions. Thus secrecy is modeled explicitly as the
	conjunction of ordinary simplicial knowledge and a local-state-based
	neighborhood designation. This avoids building owner-knowledge twice into the
	semantics.
\end{remark}

\subsection{Language and truth conditions}

We work in the language
\[
\varphi ::= p \mid \neg \varphi \mid (\varphi \land \varphi) \mid K_a\varphi \mid S_a\varphi,
\qquad p\in\Prop,\ a\in A.
\]
We write this language as $\Lang_{KS}$. As usual, $\lor$, $\rightarrow$,
$\leftrightarrow$, $\top$, and $\bot$ are defined by abbreviation.

Formulas are evaluated at facets.

\begin{definition}[Truth conditions]\label{def:ssm-semantics}
	Let
	\[
	M=(V,\mathcal{F},\chi,\nu,\{N_a^S\}_{a\in A})
	\]
	be a simplicial secrecy model, and let $X\in \Fac(M)$. The satisfaction
	relation $M,X\forces \varphi$ is defined inductively as follows:
	\begin{align*}
		M,X &\forces p
		&&\text{iff } p\in \nu(X),\\
		M,X &\forces \neg \varphi
		&&\text{iff } M,X \not\forces \varphi,\\
		M,X &\forces \varphi \land \psi
		&&\text{iff } M,X\forces \varphi \text{ and } M,X\forces \psi,\\
		M,X &\forces K_a\varphi
		&&\text{iff } \forall Y\in\Fac(M)\, (X\sim_a Y \Rightarrow M,Y\forces \varphi),\\
		M,X &\forces S_a\varphi
		&&\text{iff } M,X\forces K_a\varphi
		\text{ and } \True{\varphi}\in N_a^S(v_a(X)),
	\end{align*}
	where
	\[
	\True{\varphi}=\{Y\in \Fac(M)\mid M,Y\forces \varphi\}.
	\]
\end{definition}

\begin{remark}
	The clause for $S_a$ is extensional in the sense that it depends on the truth
	set $\True{\varphi}$ rather than on the syntactic form of $\varphi$.
	Conceptually, $S_a\varphi$ expresses that agent $a$ knows $\varphi$ and that
	the event $\True{\varphi}$ is designated as secret at $a$'s current local
	state.
\end{remark}

\begin{lemma}[Owner-local normalization]\label{lem:owner-local-normalization}
	Let
	\[
	M=(V,\mathcal{F},\chi,\nu,\{N_a^S\}_{a\in A})
	\]
	be a simplicial secrecy model. For each agent $a\in A$ and each $v\in V_a$,
	define
	\[
	N_a^{S,\mathrm{loc}}(v)
	:=
	\{U\in N_a^S(v)\mid \St(v)\subseteq U\}.
	\]
	Let
	\[
	M^{\mathrm{loc}}
	=
	(V,\mathcal{F},\chi,\nu,\{N_a^{S,\mathrm{loc}}\}_{a\in A}).
	\]
	Then $M^{\mathrm{loc}}$ is again a simplicial secrecy model. Moreover, for
	every formula $\varphi\in\Lang_{KS}$ and every facet $X\in\Fac(M)$,
	\[
	M,X\forces \varphi
	\quad\Longleftrightarrow\quad
	M^{\mathrm{loc}},X\forces \varphi.
	\]
\end{lemma}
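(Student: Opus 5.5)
The proof has two parts: first that $M^{\mathrm{loc}}$ is again a simplicial secrecy model, then that satisfaction is preserved, by induction on $\varphi$.

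\emph{$M^{\mathrm{loc}}$ is a simplicial secrecy model.} The underlying $A$-chromatic simplicial epistemic model $(V,\mathcal{F},\chi,\nu)$ is unchanged. Each $N_a^{S,\mathrm{loc}}$ is still a function $V_a\to\mathcal{P}(\mathcal{P}(\Fac(M)))$. Since $N_a^{S,\mathrm{loc}}(v)\subseteq N_a^S(v)$, condition \textbf{(SN)} transfers immediately: it is a universal condition over the members $U$, and it refers only to $\St(v)$, $\sim_b$ and $\Fac(M)$, none of which change.

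\emph{Preservation of truth.} I prove $M,X\forces\varphi \iff M^{\mathrm{loc}},X\forces\varphi$ for all $X$ by structural induction on $\varphi$, simultaneously for all facets. The induction hypothesis then gives $\True{\psi}^{M}=\True{\psi}^{M^{\mathrm{loc}}}$ for every proper subformula $\psi$.

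The cases $p$, $\neg$ and $\land$ are trivial. The case $K_a\psi$ is also routine, because $\sim_a$ depends only on the common chromatic structure.

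The only real step is the case $S_a\psi$. Write $v=v_a(X)$ and $U=\True{\psi}$, the same set in both models.

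For the direction from right to left: $U\in N_a^{S,\mathrm{loc}}(v)$ implies $U\in N_a^S(v)$, and the $K_a\psi$ conjunct transfers by the induction hypothesis.

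For the direction from left to right: suppose $M,X\forces K_a\psi$ and $U\in N_a^S(v)$. We must show $\St(v)\subseteq U$. Take any $Y\in\St(v)$. Then $v_a(Y)=v=v_a(X)$, since $v$ is the unique $a$-coloured vertex of $Y$. By Definition~\ref{def:sim-a} this gives $X\sim_a Y$. The knowledge conjunct then yields $M,Y\forces\psi$, that is, $Y\in U$. Hence $U\in N_a^{S,\mathrm{loc}}(v)$, and together with the transferred $K_a\psi$ conjunct this gives $M^{\mathrm{loc}},X\forces S_a\psi$.

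This step is the crux, and it uses exactly that the knowledge conjunct in the truth clause for $S_a$ forces the truth set to cover the whole star of the owner's vertex. Throughout, one must be careful that the equality of truth sets across the two models is available from the induction hypothesis at the moment it is used.
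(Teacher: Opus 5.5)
Your proposal is correct and follows the paper's proof essentially step for step. \textbf{(SN)} is inherited because $N_a^{S,\mathrm{loc}}(v)\subseteq N_a^S(v)$, and in the $S_a\psi$ case the knowledge conjunct forces $\St(v_a(X))\subseteq\True{\psi}$, while the induction hypothesis on $\psi$ gives equal truth sets and equivalent $K_a\psi$ conjuncts in both models.
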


\begin{proof}
	First, $M^{\mathrm{loc}}$ is again a simplicial secrecy model. Indeed, for
	each agent $a\in A$ and each $v\in V_a$, we have
	\[
	N_a^{S,\mathrm{loc}}(v)\subseteq N_a^S(v).
	\]
	Hence every set in $N_a^{S,\mathrm{loc}}(v)$ already satisfies \textbf{(SN)}
	in $M$, and therefore also in $M^{\mathrm{loc}}$.
	
	We now prove by induction on the structure of $\varphi$ that for every
	$X\in\Fac(M)$,
	\[
	M,X\forces \varphi
	\quad\Longleftrightarrow\quad
	M^{\mathrm{loc}},X\forces \varphi.
	\]
	
	The propositional cases are immediate, since the valuation is unchanged. The
	case of $K_a\psi$ is also immediate, because the underlying simplicial
	epistemic structure is unchanged.
	
	It remains to consider the case $\varphi=S_a\psi$. Let $X\in\Fac(M)$ and
	write
	\[
	v:=v_a(X).
	\]
	By the induction hypothesis, for every facet $Y\in\Fac(M)$,
	\[
	M,Y\forces \psi
	\quad\Longleftrightarrow\quad
	M^{\mathrm{loc}},Y\forces \psi.
	\]
	Hence
	\[
	\llbracket\psi\rrbracket_M
	=
	\llbracket\psi\rrbracket_{M^{\mathrm{loc}}}.
	\]
	Since the underlying simplicial epistemic structure is the same in both
	models, we also have
	\[
	M,X\forces K_a\psi
	\quad\Longleftrightarrow\quad
	M^{\mathrm{loc}},X\forces K_a\psi.
	\]
	
	Suppose first that
	\[
	M,X\forces S_a\psi.
	\]
	Then
	\[
	M,X\forces K_a\psi
	\qquad\text{and}\qquad
	\llbracket\psi\rrbracket_M\in N_a^S(v).
	\]
	Because $M,X\forces K_a\psi$, every facet in $\St(v)$ satisfies $\psi$, so
	\[
	\St(v)\subseteq \llbracket\psi\rrbracket_M.
	\]
	Therefore
	\[
	\llbracket\psi\rrbracket_M\in N_a^{S,\mathrm{loc}}(v).
	\]
	Together with
	\[
	M^{\mathrm{loc}},X\forces K_a\psi,
	\]
	this yields
	\[
	M^{\mathrm{loc}},X\forces S_a\psi.
	\]
	
	Conversely, suppose that
	\[
	M^{\mathrm{loc}},X\forces S_a\psi.
	\]
	Then
	\[
	M^{\mathrm{loc}},X\forces K_a\psi
	\qquad\text{and}\qquad
	\llbracket\psi\rrbracket_{M^{\mathrm{loc}}}\in N_a^{S,\mathrm{loc}}(v).
	\]
	By definition of $N_a^{S,\mathrm{loc}}(v)$, this implies
	\[
	\llbracket\psi\rrbracket_{M^{\mathrm{loc}}}\in N_a^S(v).
	\]
	Using
	\[
	\llbracket\psi\rrbracket_M
	=
	\llbracket\psi\rrbracket_{M^{\mathrm{loc}}}
	\]
	and
	\[
	M,X\forces K_a\psi
	\quad\Longleftrightarrow\quad
	M^{\mathrm{loc}},X\forces K_a\psi,
	\]
	we obtain
	\[
	M,X\forces S_a\psi.
	\]
	
	This completes the induction.
\end{proof}

\subsection{A running example}\label{subsec:running-example}

We will use the following two-agent example throughout the paper. It already
illustrates the simplicial notions of local star and epistemic
indistinguishability, the secrecy frame condition \textbf{(SN)}, and the truth
of a basic secrecy formula. Later, by changing only the valuation and the
secrecy neighborhoods while keeping the same underlying simplicial complex, we
will reuse the same geometry for the countermodels showing that secrecy is
non-normal.

\begin{figure}[t]
	\centering
	\begin{tikzpicture}[x=1.8cm,y=1.2cm,every node/.style={font=\small}]
		\fill[blue!15] (0.5,2) rectangle (1.5,3);
		\fill[blue!15] (1.5,2) rectangle (2.5,3);
		\fill[blue!15] (2.5,2) rectangle (3.5,3);
		\fill[blue!15] (0.5,1) rectangle (1.5,2);
		
		\draw[thick] (0.5,0) rectangle (3.5,3);
		\draw[thick] (1.5,0) -- (1.5,3);
		\draw[thick] (2.5,0) -- (2.5,3);
		\draw[thick] (0.5,1) -- (3.5,1);
		\draw[thick] (0.5,2) -- (3.5,2);
		
		\node[left] at (0.5,2.5) {$u_0$};
		\node[left] at (0.5,1.5) {$u_1$};
		\node[left] at (0.5,0.5) {$u_2$};
		
		\node[above] at (1,3) {$w_1$};
		\node[above] at (2,3) {$w_2$};
		\node[above] at (3,3) {$w_3$};
		
		\node at (1,2.5) {$x_1$};
		\node at (2,2.5) {$x_2$};
		\node at (3,2.5) {$x_3$};
		\node at (1,1.5) {$y_1$};
		\node at (2,1.5) {$y_2$};
		\node at (3,1.5) {$y_3$};
		\node at (1,0.5) {$z_1$};
		\node at (2,0.5) {$z_2$};
		\node at (3,0.5) {$z_3$};
		
		\draw[blue,very thick,rounded corners] (0.45,1.95) rectangle (3.55,3.05);
		
		\draw[red,very thick] (1,0.5) circle (0.14);
		\draw[red,very thick] (2,1.5) circle (0.14);
		\draw[red,very thick] (3,1.5) circle (0.14);
		
		\node[align=left,right] at (5.6,2.3) {\textcolor{blue}{Blue box:} $\St(u_0)$};
		\node[align=left,right] at (5.6,1.7) {\textcolor{red}{Red circles:} witnesses};
		\node[align=left,right] at (5.6,1.1) {Shaded cells: $U$};
	\end{tikzpicture}
	\caption{A two-agent running example. Rows are $a$-equivalence classes and
		columns are $b$-equivalence classes. The shaded event is
		\(
		U=\{x_1,x_2,x_3,y_1\}.
		\)
		The top row is \(\St(u_0)\). In each column there is a facet outside
		\(U\), namely \(z_1\), \(y_2\), and \(y_3\), witnessing condition
		\textbf{(SN)} for the owner \(a\) at the local state \(u_0\).}
	\label{fig:running-example}
\end{figure}
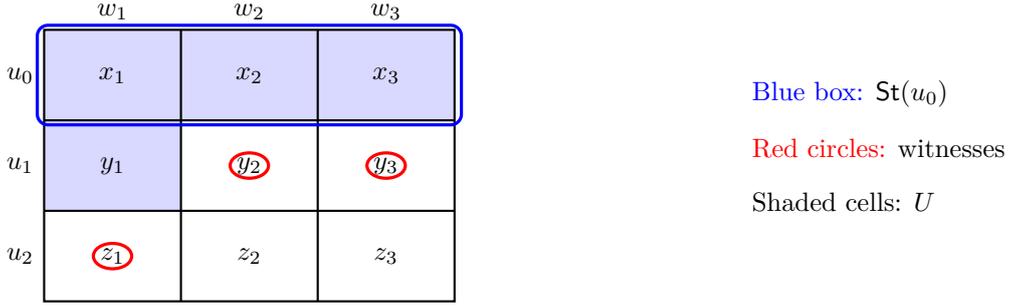

\begin{example}[A two-agent secrecy model]\label{ex:running-example}
	Let
	\[
	A=\{a,b\}.
	\]
	Consider the pure $A$-chromatic simplicial complex with vertices
	\[
	u_0,u_1,u_2 \text{ of colour } a,
	\qquad
	w_1,w_2,w_3 \text{ of colour } b,
	\]
	and facets
	\[
	x_i=\{u_0,w_i\},\qquad
	y_i=\{u_1,w_i\},\qquad
	z_i=\{u_2,w_i\}
	\qquad (i=1,2,3).
	\]
	
	Let
	\[
	U=\{x_1,x_2,x_3,y_1\}.
	\]
	Define the secrecy neighborhoods by
	\[
	N_a^S(u_0)=\{U\},
	\]
	and let all other secrecy neighborhoods be empty.
	
	Let the propositional variable $p$ be true exactly on $U$, i.e.
	\[
	\True{p}=U.
	\]
	
	Then
	\[
	\St(u_0)=\{x_1,x_2,x_3\}\subseteq U.
	\]
	Moreover, condition \textbf{(SN)} holds for the designated event $U$ at
	$u_0$: for each facet in $\St(u_0)$ and each other agent $b$, the
	corresponding $b$-column contains a facet outside $U$. Concretely, the
	witnesses can be chosen as
	\[
	z_1 \notin U \text{ for the column of } x_1,\qquad
	y_2 \notin U \text{ for the column of } x_2,\qquad
	y_3 \notin U \text{ for the column of } x_3.
	\]
	
	Hence $U$ is a legitimate secrecy event for agent $a$ at the local state
	$u_0$. Since all facets in $\St(u_0)$ satisfy $p$, we have
	\[
	M,x_i\forces K_ap
	\qquad (i=1,2,3).
	\]
	Because also
	\[
	\True{p}=U\in N_a^S(u_0),
	\]
	it follows that
	\[
	M,x_i\forces S_ap
	\qquad (i=1,2,3).
	\]
	Thus $p$ is secret for agent $a$ throughout the local state $u_0$.
\end{example}

\noindent
\Cref{fig:running-example} makes the geometry of the example explicit. Sharing
the same row means sharing the same $a$-vertex, and sharing the same column
means sharing the same $b$-vertex. The example therefore visualizes at once the
local star $\St(u_0)$, the relations $\sim_a$ and $\sim_b$, and the
external-uncertainty requirement built into secrecy. In
Section~\ref{sec:axiomatic-system} we will reuse the same $3\times 3$ geometry,
with different valuations and secrecy neighborhoods, to show that secrecy is not
closed under self-iteration, conjunction, distribution, or monotonicity.

\subsection{Immediate semantic consequences}

The preceding definitions are designed so that the most basic intended
principles become valid.

\begin{proposition}\label{prop:new-basic}
	Let $M$ be a simplicial secrecy model, $X\in\Fac(M)$, $a\in A$, and
	$\varphi\in\Lang_{KS}$. If
	\[
	M,X\forces S_a\varphi,
	\]
	then:
	\begin{enumerate}
		\item $M,X\forces K_a\varphi$;
		\item $M,X\forces \varphi$;
		\item for every $b\in A\setminus\{a\}$,
		\[
		M,X\forces \neg K_b\varphi;
		\]
		\item for every $b\in A\setminus\{a\}$,
		\[
		M,X\forces \neg K_b\neg\varphi.
		\]
	\end{enumerate}
\end{proposition}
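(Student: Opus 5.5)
The plan is to unpack the truth clause for $S_a$ from \Cref{def:ssm-semantics} and combine it with reflexivity of $\sim_a$ (\Cref{lem:sim-equiv}) and the frame condition \textbf{(SN)}. Assume $M,X\forces S_a\varphi$ and set $v:=v_a(X)$. By the truth clause this gives two facts: $M,X\forces K_a\varphi$ and $\True{\varphi}\in N_a^S(v)$. Item~1 is the first of these facts.

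For item~2, I will use that $\sim_a$ is reflexive (\Cref{lem:sim-equiv}), so $X\sim_a X$. The knowledge clause then yields $M,X\forces\varphi$.

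For items~3 and~4, fix $b\in A\setminus\{a\}$. Since $v=v_a(X)\in X$, we have $X\in\St(v)$. Applying \textbf{(SN)} to $U:=\True{\varphi}\in N_a^S(v)$ and this facet $X$ gives a facet $Y$ with $X\sim_b Y$ and $Y\notin\True{\varphi}$, so $M,Y\not\forces\varphi$. This witness refutes $K_b\varphi$ at $X$, which is item~3. For item~4, the facet $X$ itself satisfies $X\sim_b X$ and, by item~2, $M,X\forces\varphi$. So $X$ is a $b$-accessible facet falsifying $\neg\varphi$, hence $M,X\forces\neg K_b\neg\varphi$.

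No step is a real obstacle. The only point requiring care is in item~3: \textbf{(SN)} quantifies over facets in $\St(v)$, so one must check that $X\in\St(v_a(X))$. This holds by the definition of $v_a(X)$ as the $a$-coloured vertex of $X$.
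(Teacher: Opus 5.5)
Your proposal is correct and follows the paper's proof essentially step for step. You read item~1 off the truth clause, get item~2 from reflexivity of $\sim_a$, and get item~3 by applying \textbf{(SN)} to $\True{\varphi}\in N_a^S(v_a(X))$ at $X\in\St(v_a(X))$. For item~4 you use item~2 together with $X\sim_b X$.
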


\begin{proof}
	(1) is immediate from the semantic clause for $S_a$.
	
	For (2), if $M,X\forces S_a\varphi$, then by (1) we have
	$M,X\forces K_a\varphi$. Since $\sim_a$ is reflexive by
	\cref{lem:sim-equiv}, it follows that $M,X\forces \varphi$.
	
	For (3), fix $b\neq a$, and let $v=v_a(X)$ and $U=\True{\varphi}$. Since
	$M,X\forces S_a\varphi$, we have $\True{\varphi}\in N_a^S(v)$. Also
	$X\in\St(v)$. By \textbf{(SN)}, there exists $Y\in\Fac(M)$ such that
	\[
	X\sim_b Y
	\qquad\text{and}\qquad
	Y\notin \True{\varphi}.
	\]
	Hence $M,Y\not\forces \varphi$, so $M,X\forces \neg K_b\varphi$.
	
	For (4), by (2) we have $M,X\forces \varphi$, and trivially $X\sim_b X$.
	Hence not all $b$-indistinguishable facets satisfy $\neg\varphi$, and
	therefore
	\[
	M,X\forces \neg K_b\neg\varphi.
	\]
\end{proof}

\begin{corollary}\label{cor:new-basic-truth}
	For every simplicial secrecy model $M$, every facet $X\in\Fac(M)$, every
	agent $a\in A$, and every formula $\varphi\in\Lang_{KS}$,
	\[
	M,X\forces S_a\varphi \rightarrow
	\left(
	K_a\varphi \land \varphi \land
	\bigwedge_{b\in A\setminus\{a\}}
	(\neg K_b\varphi \land \neg K_b\neg\varphi)
	\right).
	\]
\end{corollary}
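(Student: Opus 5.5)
This corollary is just \cref{prop:new-basic} restated as the truth of a single implication at a facet, so the plan is to unfold the semantic clauses for the connectives and invoke that proposition. Fix $M$, $X\in\Fac(M)$, $a\in A$ and $\varphi\in\Lang_{KS}$. Recall that $\rightarrow$ and the finite conjunction are abbreviations built from $\neg$ and $\land$. By \cref{def:ssm-semantics}, $M,X\forces \psi\rightarrow\chi$ holds iff $M,X\not\forces\psi$ or $M,X\forces\chi$. A finite conjunction holds at $X$ iff each conjunct holds at $X$, which follows by a trivial induction on the number of conjuncts.

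Next I split into two cases. If $M,X\not\forces S_a\varphi$, the implication holds vacuously. Otherwise $M,X\forces S_a\varphi$, and we must check each conjunct of the consequent.
\begin{enumerate}
\item Items (1) and (2) of \cref{prop:new-basic} give $M,X\forces K_a\varphi$ and $M,X\forces\varphi$.
\item For each $b\in A\setminus\{a\}$, items (3) and (4) give $M,X\forces\neg K_b\varphi$ and $M,X\forces\neg K_b\neg\varphi$, and hence $M,X\forces \neg K_b\varphi\land\neg K_b\neg\varphi$.
\end{enumerate}
Combining these conjuncts gives the consequent at $X$.

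The only point needing a word is the case $A\setminus\{a\}=\emptyset$. There the big conjunction is empty and is read as $\top$, which holds at every facet. So the argument goes through uniformly, and nothing here is a genuine obstacle.
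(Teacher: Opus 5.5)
Your proposal is correct and takes the same approach as the paper, whose proof is simply ``Immediate from \cref{prop:new-basic}.'' Your unfolding of the semantic clauses for implication and conjunction, and your remark on the empty conjunction when $A\setminus\{a\}=\emptyset$, only spell out that step.
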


\begin{proof}
	Immediate from \cref{prop:new-basic}.
\end{proof}

The corollary shows that secrecy, in the present framework, is stronger than
ordinary knowledge and stronger than the mere ignorance of others. If
$\varphi$ is a secret of agent $a$, then $a$ knows $\varphi$, $\varphi$ is
true, and every other agent remains epistemically unable to settle either
$\varphi$ or its negation. This profile aligns with earlier epistemic analyses
of secrecy and related knowledge-based treatments of informational
hiding~\cite{halpern2002secrecy,halpern2005anonymity,xiong2023secrets}.


\section{Axiomatic System}\label{sec:axiomatic-system}

In this section we formulate the proof system corresponding to the semantics from
\cref{sec:syntax-semantics}. The operator $K_a$ retains the familiar S5 behavior
induced by simplicial indistinguishability, while the operator $S_a$ is treated
as a primitive and genuinely non-normal modality. At the same time, because the
truth of $S_a\varphi$ depends only on the owner's local state together with the
truth set of $\varphi$, the semantics validates an exact owner-locality
principle. This is why the axiom
\[
S_a\varphi \rightarrow K_aS_a\varphi
\]
appears naturally in the system.

We call the resulting proof system \emph{Simplicial Secrecy Logic}, abbreviated
by
\[
\mathsf{SSL}.
\]
We write
\[
\vdash_{\mathsf{SSL}} \varphi
\]
to mean that $\varphi$ is derivable in this system.

\subsection{The system \texorpdfstring{$\mathsf{SSL}$}{SSL}}

\subsubsection*{Axiom schemes}

\begin{enumerate}[label=\textbf{(A\arabic*)},leftmargin=2.8em]
	\item All instances of propositional tautologies.
	
	\item \textbf{(K)} \quad
	\[
	K_a(\varphi\rightarrow\psi)\rightarrow (K_a\varphi\rightarrow K_a\psi),
	\qquad a\in A.
	\]
	
	\item \textbf{(T)} \quad
	\[
	K_a\varphi\rightarrow\varphi,
	\qquad a\in A.
	\]
	
	\item \textbf{(4)} \quad
	\[
	K_a\varphi\rightarrow K_aK_a\varphi,
	\qquad a\in A.
	\]
	
	\item \textbf{(5)} \quad
	\[
	\neg K_a\varphi\rightarrow K_a\neg K_a\varphi,
	\qquad a\in A.
	\]
	
	\item \textbf{(S1)} \quad
	\[
	S_a\varphi\rightarrow K_a\varphi,
	\qquad a\in A.
	\]
	
	\item \textbf{(S2)} \quad
	\[
	S_a\varphi\rightarrow \neg K_b\varphi,
	\qquad a,b\in A,\ b\neq a.
	\]
	
	\item \textbf{(S4)} \quad
	\[
	S_a\varphi\rightarrow K_a S_a\varphi,
	\qquad a\in A.
	\]
\end{enumerate}

\subsubsection*{Inference rules}

\begin{enumerate}[label=\textbf{(R\arabic*)},leftmargin=2.8em]
	\item \textbf{Modus Ponens}: from $\varphi$ and $\varphi\rightarrow\psi$, infer
	$\psi$.
	
	\item \textbf{Knowledge Necessitation}: from $\varphi$, infer $K_a\varphi$ for
	every $a\in A$.
	
	\item \textbf{Replacement of Equivalents for Secrecy}: from
	$\vdash_{\mathsf{SSL}} \varphi\leftrightarrow\psi$, infer
	\[
	\vdash_{\mathsf{SSL}} S_a\varphi\leftrightarrow S_a\psi
	\qquad\text{for every } a\in A.
	\]
\end{enumerate}

\begin{remark}
	The operator $S_a$ remains genuinely non-normal: it is extensional, but in
	general it is neither monotone nor closed under conjunction, and it does not
	validate the usual distribution principles of normal modal logics. We will
	make these failures explicit by countermodels below.
\end{remark}

\subsection{Derived principles}

The system $\mathsf{SSL}$ yields a number of useful consequences that clarify the
epistemic profile of secrecy.

\begin{proposition}\label{prop:ssl-truth}
	For every agent $a\in A$,
	\[
	\vdash_{\mathsf{SSL}} S_a\varphi \rightarrow \varphi.
	\]
\end{proposition}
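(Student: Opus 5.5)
The plan is a short Hilbert-style derivation that chains axiom \textbf{(S1)} with axiom \textbf{(T)} through propositional reasoning. No appeal to semantics is needed. In particular, neither \cref{prop:new-basic} nor any soundness result is required, because the statement asks for derivability in $\mathsf{SSL}$.

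The steps, in order, are as follows. First, take the instance $S_a\varphi\rightarrow K_a\varphi$ of \textbf{(S1)}. Second, take the instance $K_a\varphi\rightarrow\varphi$ of \textbf{(T)}. Third, use the propositional tautology
\[
(S_a\varphi\rightarrow K_a\varphi)\rightarrow\bigl((K_a\varphi\rightarrow\varphi)\rightarrow(S_a\varphi\rightarrow\varphi)\bigr),
\]
which is an instance of hypothetical syllogism and hence available by \textbf{(A1)}. Finally, apply Modus Ponens \textbf{(R1)} twice to obtain $\vdash_{\mathsf{SSL}} S_a\varphi\rightarrow\varphi$. Since $a$ and $\varphi$ are arbitrary, this gives the result for every agent $a\in A$.

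There is no substantive obstacle. The only point to check is that the tautology step is legitimate for formulas containing the modal operators $K_a$ and $S_a$. It is, because \textbf{(A1)} admits all substitution instances of propositional tautologies in $\Lang_{KS}$, and here the modal subformulas are simply treated as propositional atoms.
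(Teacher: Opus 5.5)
Your proposal is correct and matches the paper's proof, which likewise chains \textbf{(S1)} and \textbf{(T)} by propositional reasoning. Your explicit hypothetical-syllogism tautology and the two applications of Modus Ponens simply spell out what the paper calls ``propositional reasoning.''
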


\begin{proof}
	By \textbf{(S1)} we have
	\[
	\vdash_{\mathsf{SSL}} S_a\varphi \rightarrow K_a\varphi.
	\]
	By \textbf{(T)} we also have
	\[
	\vdash_{\mathsf{SSL}} K_a\varphi \rightarrow \varphi.
	\]
	Hence, by propositional reasoning,
	\[
	\vdash_{\mathsf{SSL}} S_a\varphi \rightarrow \varphi.
	\]
\end{proof}

\begin{proposition}[Additional derivable secrecy principles]\label{prop:derived-secrecy-axioms}
	For all distinct agents $a,b\in A$:
	\[
	\vdash_{\mathsf{SSL}} S_a\varphi \rightarrow \neg K_b\neg\varphi,
	\]
	and for every $a\in A$,
	\[
	\vdash_{\mathsf{SSL}} \neg S_a\varphi \rightarrow K_a\neg S_a\varphi.
	\]
\end{proposition}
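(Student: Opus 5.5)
Both claims follow by short propositional derivations in $\mathsf{SSL}$ from the secrecy axioms together with the S5 behaviour of $K_a$ and $K_b$. No semantic reasoning is needed.

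For the first claim, fix distinct $a,b\in A$. By \cref{prop:ssl-truth} we have $\vdash_{\mathsf{SSL}} S_a\varphi\rightarrow\varphi$. Axiom \textbf{(T)} for agent $b$, instantiated with $\neg\varphi$, gives $\vdash_{\mathsf{SSL}} K_b\neg\varphi\rightarrow\neg\varphi$, whose contrapositive (propositional tautology plus modus ponens) is $\vdash_{\mathsf{SSL}} \varphi\rightarrow\neg K_b\neg\varphi$. Chaining the two implications by propositional reasoning yields $\vdash_{\mathsf{SSL}} S_a\varphi\rightarrow\neg K_b\neg\varphi$. Note that this uses \textbf{(S1)} and \textbf{(T)} only, and not \textbf{(S2)}; this mirrors item (4) of \cref{prop:new-basic}.

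For the second claim, the plan is the standard S5 argument that $5$-type introspection transfers to any operator satisfying positive introspection together with veridicality with respect to $K_a$. Write $\chi:=S_a\varphi$.
\begin{enumerate}
\item By \textbf{(S4)} we have $\vdash\chi\rightarrow K_a\chi$. Contraposing gives $\vdash\neg K_a\chi\rightarrow\neg\chi$.
\item Apply \textbf{(R2)} to obtain $\vdash K_a(\neg K_a\chi\rightarrow\neg\chi)$. Then \textbf{(K)} and modus ponens give $\vdash K_a\neg K_a\chi\rightarrow K_a\neg\chi$.
\item Separately, \textbf{(T)} gives $\vdash K_a\chi\rightarrow\chi$, whose contrapositive is $\vdash\neg\chi\rightarrow\neg K_a\chi$.
\item Axiom \textbf{(5)} gives $\vdash\neg K_a\chi\rightarrow K_a\neg K_a\chi$.
\end{enumerate}
Chaining steps 3, 4 and 2 gives $\vdash\neg\chi\rightarrow K_a\neg\chi$, which is exactly $\vdash_{\mathsf{SSL}}\neg S_a\varphi\rightarrow K_a\neg S_a\varphi$.

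The only step needing care is step 2, which lifts the contraposed \textbf{(S4)} under $K_a$. It requires applying necessitation to a theorem, which is legitimate because the formula is derivable and not merely assumed. Everything else is routine propositional bookkeeping.
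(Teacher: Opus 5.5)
Your proposal is correct and matches the paper's proof essentially step for step. For the first claim, both chain \cref{prop:ssl-truth} with the contrapositive of \textbf{(T)} for $\neg\varphi$. For the second, both chain the contrapositive of \textbf{(T)} for $S_a\varphi$, then axiom \textbf{(5)}, then the contrapositive of \textbf{(S4)} lifted under $K_a$ by Knowledge Necessitation and \textbf{(K)}.
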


\begin{proof}
	For the first formula, by \cref{prop:ssl-truth} we have
	\[
	\vdash_{\mathsf{SSL}} S_a\varphi \rightarrow \varphi.
	\]
	Also, by \textbf{(T)} applied to the formula $\neg\varphi$,
	\[
	\vdash_{\mathsf{SSL}} K_b\neg\varphi \rightarrow \neg\varphi.
	\]
	By propositional reasoning,
	\[
	\vdash_{\mathsf{SSL}} \varphi \rightarrow \neg K_b\neg\varphi.
	\]
	Hence
	\[
	\vdash_{\mathsf{SSL}} S_a\varphi \rightarrow \neg K_b\neg\varphi.
	\]
	
	For the second formula, by \textbf{(T)} applied to $S_a\varphi$,
	\[
	\vdash_{\mathsf{SSL}} K_aS_a\varphi \rightarrow S_a\varphi.
	\]
	Hence, by contraposition,
	\[
	\vdash_{\mathsf{SSL}} \neg S_a\varphi \rightarrow \neg K_aS_a\varphi.
	\]
	By \textbf{(5)} applied to the formula $S_a\varphi$,
	\[
	\vdash_{\mathsf{SSL}} \neg K_aS_a\varphi \rightarrow K_a\neg K_aS_a\varphi.
	\]
	Also, by contraposition of \textbf{(S4)},
	\[
	\vdash_{\mathsf{SSL}} \neg K_aS_a\varphi \rightarrow \neg S_a\varphi.
	\]
	By Knowledge Necessitation and \textbf{(K)},
	\[
	\vdash_{\mathsf{SSL}} K_a\neg K_aS_a\varphi \rightarrow K_a\neg S_a\varphi.
	\]
	Combining these implications propositionally, we obtain
	\[
	\vdash_{\mathsf{SSL}} \neg S_a\varphi \rightarrow K_a\neg S_a\varphi.
	\]
\end{proof}

\begin{corollary}[Exact owner-locality]\label{cor:exact-owner-locality}
	For every agent $a\in A$,
	\[
	\vdash_{\mathsf{SSL}} S_a\varphi \leftrightarrow K_aS_a\varphi,
	\qquad
	\vdash_{\mathsf{SSL}} \neg S_a\varphi \leftrightarrow K_a\neg S_a\varphi.
	\]
\end{corollary}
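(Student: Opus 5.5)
The corollary asks for two biconditionals, and each splits into two implications. For every implication we already have a derivation or a single axiom, so the plan is mostly bookkeeping.

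For the first biconditional, $S_a\varphi \leftrightarrow K_aS_a\varphi$:
\begin{itemize}
\item The left-to-right direction $S_a\varphi\rightarrow K_aS_a\varphi$ is exactly the axiom \textbf{(S4)}.
\item The right-to-left direction $K_aS_a\varphi\rightarrow S_a\varphi$ is the instance of \textbf{(T)} for the formula $S_a\varphi$.
\end{itemize}
Combining the two implications by propositional reasoning (A1 and Modus Ponens) gives the biconditional.

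For the second biconditional, $\neg S_a\varphi \leftrightarrow K_a\neg S_a\varphi$:
\begin{itemize}
\item The left-to-right direction $\neg S_a\varphi\rightarrow K_a\neg S_a\varphi$ is the second part of \cref{prop:derived-secrecy-axioms}. We invoke it directly rather than redo the derivation via \textbf{(5)}, contraposed \textbf{(S4)}, Knowledge Necessitation and \textbf{(K)}.
\item The right-to-left direction $K_a\neg S_a\varphi\rightarrow \neg S_a\varphi$ is again an instance of \textbf{(T)}, this time for the formula $\neg S_a\varphi$.
\end{itemize}
Propositional reasoning then closes this biconditional as well.

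No step is a real obstacle, since the substantive work was done in \cref{prop:derived-secrecy-axioms}. The only point needing care is noting that \textbf{(T)} is a schema over all formulas, including those containing $S_a$. It can therefore be instantiated with $S_a\varphi$ and with $\neg S_a\varphi$.
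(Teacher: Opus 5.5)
Your proposal is correct and follows the paper's proof exactly: the first biconditional comes from \textbf{(S4)} together with \textbf{(T)} instantiated at $S_a\varphi$. The second comes from the second part of \cref{prop:derived-secrecy-axioms} together with \textbf{(T)} instantiated at $\neg S_a\varphi$.
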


\begin{proof}
	The left equivalence follows from \textbf{(S4)} together with \textbf{(T)}
	applied to $S_a\varphi$. The right equivalence follows from
	\cref{prop:derived-secrecy-axioms} together with \textbf{(T)} applied to
	$\neg S_a\varphi$.
\end{proof}

\begin{proposition}\label{prop:ssl-summary}
	For every agent $a\in A$,
	\[
	\vdash_{\mathsf{SSL}}
	S_a\varphi \rightarrow
	\left(
	K_a\varphi \land \varphi \land
	\bigwedge_{b\in A\setminus\{a\}}(\neg K_b\varphi \land \neg K_b\neg\varphi)
	\right).
	\]
\end{proposition}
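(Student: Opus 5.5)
The plan is to assemble the conjunction on the right-hand side from implications already available in $\mathsf{SSL}$, each with antecedent $S_a\varphi$, and then combine them by propositional reasoning (A1 and Modus Ponens). No semantic argument is needed; the statement is the syntactic counterpart of \cref{cor:new-basic-truth}.

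The individual conjuncts are supplied as follows:
\begin{itemize}
	\item $S_a\varphi\rightarrow K_a\varphi$ is axiom \textbf{(S1)}.
	\item $S_a\varphi\rightarrow\varphi$ is \cref{prop:ssl-truth}.
	\item For each $b\in A\setminus\{a\}$, $S_a\varphi\rightarrow\neg K_b\varphi$ is axiom \textbf{(S2)}.
	\item For each such $b$, $S_a\varphi\rightarrow\neg K_b\neg\varphi$ is the first part of \cref{prop:derived-secrecy-axioms}.
\end{itemize}

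To combine them, I will use the tautology schema
\[
(\chi\rightarrow\alpha)\rightarrow\bigl((\chi\rightarrow\beta)\rightarrow(\chi\rightarrow(\alpha\land\beta))\bigr).
\]
Applying it repeatedly with Modus Ponens, I first obtain $S_a\varphi\rightarrow(\neg K_b\varphi\land\neg K_b\neg\varphi)$ for each $b\neq a$. Then, by induction over the finite set $A\setminus\{a\}$ (which is finite because $A$ is), I conjoin all of these into $S_a\varphi\rightarrow\bigwedge_{b\in A\setminus\{a\}}(\neg K_b\varphi\land\neg K_b\neg\varphi)$.

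If $A=\{a\}$, the big conjunction is empty. I will read it as $\top$, in which case the corresponding implication is itself a tautology. Finally, I conjoin the result with the first two items to obtain the full statement.

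There is no real obstacle. The only point needing care is this bookkeeping of the finite (possibly empty) conjunction, which is handled by the convention above.
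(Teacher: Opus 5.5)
Your proposal is correct and follows the paper's proof: the paper likewise combines \textbf{(S1)}, \textbf{(S2)}, \cref{prop:ssl-truth}, and the first part of \cref{prop:derived-secrecy-axioms} by propositional reasoning. Your explicit handling of the finite, possibly empty, conjunction over $A\setminus\{a\}$ is harmless extra bookkeeping that the paper leaves implicit.
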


\begin{proof}
	Combine \textbf{(S1)}, \textbf{(S2)}, \cref{prop:ssl-truth}, and
	\cref{prop:derived-secrecy-axioms}.
\end{proof}

\begin{proposition}[Owner knows others' ignorance]\label{prop:owner-knows-ignorance}
	For all distinct agents $a,b\in A$,
	\[
	\vdash_{\mathsf{SSL}} S_a\varphi \rightarrow K_a\neg K_b\varphi,
	\qquad
	\vdash_{\mathsf{SSL}} S_a\varphi \rightarrow K_a\neg K_b\neg\varphi.
	\]
\end{proposition}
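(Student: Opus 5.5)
The plan is to derive both formulas purely syntactically. The idea is to push the secrecy implications already available inside $K_a$, using \textbf{(S4)} to get from $S_a\varphi$ to $K_aS_a\varphi$. The template is the standard normal-modal argument: if $\vdash_{\mathsf{SSL}}\alpha\rightarrow\beta$, then $\vdash_{\mathsf{SSL}} K_a\alpha\rightarrow K_a\beta$. This follows from Knowledge Necessitation (R2) applied to $\alpha\rightarrow\beta$, then \textbf{(K)}, then Modus Ponens.

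For the first formula, fix distinct $a,b$. By \textbf{(S2)} we have $\vdash_{\mathsf{SSL}} S_a\varphi\rightarrow\neg K_b\varphi$. Applying the template above gives
\[
\vdash_{\mathsf{SSL}} K_aS_a\varphi\rightarrow K_a\neg K_b\varphi .
\]
By \textbf{(S4)} we have $\vdash_{\mathsf{SSL}} S_a\varphi\rightarrow K_aS_a\varphi$. Chaining the two implications propositionally yields $\vdash_{\mathsf{SSL}} S_a\varphi\rightarrow K_a\neg K_b\varphi$.

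For the second formula, the argument is the same, with \textbf{(S2)} replaced by the first half of \cref{prop:derived-secrecy-axioms}, namely $\vdash_{\mathsf{SSL}} S_a\varphi\rightarrow\neg K_b\neg\varphi$. Monotonicity of $K_a$ gives $\vdash_{\mathsf{SSL}} K_aS_a\varphi\rightarrow K_a\neg K_b\neg\varphi$. Composing this with \textbf{(S4)} gives the claim.

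There is no real obstacle. The one point to get right is that the lift from $\alpha\rightarrow\beta$ to $K_a\alpha\rightarrow K_a\beta$ applies necessitation only to a \emph{theorem}, which is legitimate under (R2). It also uses owner-locality \textbf{(S4)}, not any monotonicity of $S_a$, which fails in this logic.
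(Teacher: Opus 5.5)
Your proof is correct and follows the paper's argument: you lift \textbf{(S2)}, or for the second formula the first half of \cref{prop:derived-secrecy-axioms}, under $K_a$ via Knowledge Necessitation and \textbf{(K)}, then chain with \textbf{(S4)}. You also rightly note that necessitation is applied only to a theorem.
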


\begin{proof}
	By \textbf{(S2)},
	\[
	\vdash_{\mathsf{SSL}} S_a\varphi \rightarrow \neg K_b\varphi.
	\]
	By Knowledge Necessitation and \textbf{(K)},
	\[
	\vdash_{\mathsf{SSL}} K_aS_a\varphi \rightarrow K_a\neg K_b\varphi.
	\]
	Using \textbf{(S4)}, we obtain
	\[
	\vdash_{\mathsf{SSL}} S_a\varphi \rightarrow K_a\neg K_b\varphi.
	\]
	The second implication is identical, using
	\cref{prop:derived-secrecy-axioms} in place of \textbf{(S2)}.
\end{proof}

\begin{proposition}[Higher-order opacity]\label{prop:higher-order-opacity}
	For all distinct agents $a,b\in A$,
	\[
	\vdash_{\mathsf{SSL}} S_a\varphi \rightarrow \neg K_b S_a\varphi,
	\qquad
	\vdash_{\mathsf{SSL}} S_a\varphi \rightarrow \neg K_b\neg S_a\varphi.
	\]
\end{proposition}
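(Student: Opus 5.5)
The two implications are derived syntactically in $\mathsf{SSL}$. The first uses the truth of knowledge together with \textbf{(S2)}. The second uses truth of secrecy (\cref{prop:ssl-truth}) in the same way that \cref{prop:derived-secrecy-axioms} used it.

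For the first formula, $S_a\varphi\rightarrow\neg K_bS_a\varphi$, we argue by contraposition and show $\vdash_{\mathsf{SSL}} K_bS_a\varphi\rightarrow\neg S_a\varphi$. Since $b\neq a$, \textbf{(S2)} gives $\vdash_{\mathsf{SSL}} S_a\varphi\rightarrow\neg K_b\varphi$. By \textbf{(S1)} we have $\vdash_{\mathsf{SSL}} S_a\varphi\rightarrow K_a\varphi$. A cleaner route is to push $S_a\varphi\rightarrow\varphi$ inside $K_b$. By \cref{prop:ssl-truth} we have $\vdash_{\mathsf{SSL}} S_a\varphi\rightarrow\varphi$. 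Knowledge Necessitation for $b$ and \textbf{(K)} then give $\vdash_{\mathsf{SSL}} K_bS_a\varphi\rightarrow K_b\varphi$. Combining this with \textbf{(T)} for $K_b$, namely $K_bS_a\varphi\rightarrow S_a\varphi$, we get $K_bS_a\varphi\rightarrow S_a\varphi\land K_b\varphi$. The right-hand side contradicts \textbf{(S2)}, so $\vdash_{\mathsf{SSL}} \neg K_bS_a\varphi$ outright. The desired implication follows propositionally. (The \textbf{(T)} step is not even needed: $K_bS_a\varphi\rightarrow K_b\varphi$ and \textbf{(S2)} give $S_a\varphi\land K_bS_a\varphi\rightarrow\bot$ directly.)

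For the second formula, $S_a\varphi\rightarrow\neg K_b\neg S_a\varphi$, we use \textbf{(T)} for $b$ on the formula $\neg S_a\varphi$. This gives $\vdash_{\mathsf{SSL}} K_b\neg S_a\varphi\rightarrow\neg S_a\varphi$, and contraposition yields the claim immediately. No secrecy axiom is needed here, exactly as in the first part of the proof of \cref{prop:derived-secrecy-axioms}.

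No step is a genuine obstacle. The only point needing care is the first formula, where we must route through $K_b\varphi$ with necessitation and \textbf{(K)} applied to the theorem $S_a\varphi\rightarrow\varphi$, rather than trying to use \textbf{(S4)}, which concerns $K_a$ and not $K_b$.
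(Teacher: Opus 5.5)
Your proof is correct and is essentially the paper's argument: the first implication via $\vdash_{\mathsf{SSL}} S_a\varphi\rightarrow\varphi$, Knowledge Necessitation and \textbf{(K)} for $K_b$, combined with \textbf{(S2)}; the second via \textbf{(T)} for $K_b$ applied to $\neg S_a\varphi$. Your extra \textbf{(T)} step in the first part gives the slightly stronger theorem $\vdash_{\mathsf{SSL}}\neg K_bS_a\varphi$, and one small inaccuracy in your write-up is that your opening sentence says the second part uses \cref{prop:ssl-truth}, when in fact it uses only \textbf{(T)}.
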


\begin{proof}
	By \cref{prop:ssl-truth},
	\[
	\vdash_{\mathsf{SSL}} S_a\varphi \rightarrow \varphi.
	\]
	By Knowledge Necessitation and \textbf{(K)},
	\[
	\vdash_{\mathsf{SSL}} K_bS_a\varphi \rightarrow K_b\varphi.
	\]
	Combining this with \textbf{(S2)}, we obtain
	\[
	\vdash_{\mathsf{SSL}} S_a\varphi \rightarrow \neg K_bS_a\varphi.
	\]
	
	For the second implication, by \textbf{(T)} applied to the formula
	$\neg S_a\varphi$,
	\[
	\vdash_{\mathsf{SSL}} K_b\neg S_a\varphi \rightarrow \neg S_a\varphi.
	\]
	Hence, by propositional reasoning,
	\[
	\vdash_{\mathsf{SSL}} S_a\varphi \rightarrow \neg K_b\neg S_a\varphi.
	\]
\end{proof}

\begin{proposition}[No foreign secrets about $b$-local facts]\label{prop:no-foreign-local-secrets}
	If
	\[
	\vdash_{\mathsf{SSL}} \chi \rightarrow K_b\chi,
	\]
	then for every $a\neq b$,
	\[
	\vdash_{\mathsf{SSL}} \neg S_a\chi.
	\]
	In particular, for all distinct $a,b\in A$,
	\[
	\vdash_{\mathsf{SSL}} \neg S_a K_b\psi,\qquad
	\vdash_{\mathsf{SSL}} \neg S_a \neg K_b\psi,\qquad
	\vdash_{\mathsf{SSL}} \neg S_a S_b\psi,\qquad
	\vdash_{\mathsf{SSL}} \neg S_a \neg S_b\psi.
	\]
\end{proposition}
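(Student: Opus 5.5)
The plan is a purely syntactic derivation that combines \textbf{(S2)}, \cref{prop:ssl-truth}, and the hypothesis $\vdash_{\mathsf{SSL}}\chi\rightarrow K_b\chi$. Fix $a\neq b$. By \cref{prop:ssl-truth}, $\vdash_{\mathsf{SSL}} S_a\chi\rightarrow\chi$. Chaining with the hypothesis gives $\vdash_{\mathsf{SSL}} S_a\chi\rightarrow K_b\chi$. On the other hand, \textbf{(S2)} gives $\vdash_{\mathsf{SSL}} S_a\chi\rightarrow\neg K_b\chi$. Propositional reasoning then yields $\vdash_{\mathsf{SSL}} S_a\chi\rightarrow\bot$, that is, $\vdash_{\mathsf{SSL}}\neg S_a\chi$.

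For the four particular cases, I would verify in each case that the formula in question satisfies the hypothesis $\chi\rightarrow K_b\chi$, and then apply the general claim.
\begin{enumerate}
\item For $\chi=K_b\psi$ this is axiom \textbf{(4)}.
\item For $\chi=\neg K_b\psi$ this is axiom \textbf{(5)}.
\item For $\chi=S_b\psi$ this is axiom \textbf{(S4)} for agent $b$.
\item For $\chi=\neg S_b\psi$ this is the second principle of \cref{prop:derived-secrecy-axioms}, with $b$ in the role of the owner.
\end{enumerate}
In each case the premise of the general claim holds, so we obtain $\vdash_{\mathsf{SSL}}\neg S_a\chi$ as required.

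I do not expect a real obstacle. The only point needing care is that \textbf{(S2)} requires $b\neq a$, which is exactly the hypothesis. The previously derived principle $\neg S_b\psi\rightarrow K_b\neg S_b\psi$ supplies the one nontrivial case. Everything else is modus ponens and tautologies under \textbf{(A1)} and \textbf{(R1)}.
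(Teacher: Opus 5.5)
Your proof is correct and follows essentially the same route as the paper: chain $S_a\chi\rightarrow\chi$ with the hypothesis to get $S_a\chi\rightarrow K_b\chi$, contradict \textbf{(S2)}, and obtain the four instances from \textbf{(4)}, \textbf{(5)}, \textbf{(S4)}, and \cref{prop:derived-secrecy-axioms} respectively.
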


\begin{proof}
	Assume $\vdash_{\mathsf{SSL}} \chi \rightarrow K_b\chi$ and $a\neq b$. By
	\cref{prop:ssl-truth},
	\[
	\vdash_{\mathsf{SSL}} S_a\chi \rightarrow \chi.
	\]
	Hence
	\[
	\vdash_{\mathsf{SSL}} S_a\chi \rightarrow K_b\chi.
	\]
	By \textbf{(S2)},
	\[
	\vdash_{\mathsf{SSL}} S_a\chi \rightarrow \neg K_b\chi.
	\]
	So, by propositional reasoning,
	\[
	\vdash_{\mathsf{SSL}} \neg S_a\chi.
	\]
	
	The displayed instances follow respectively from \textbf{(4)}, \textbf{(5)},
	\textbf{(S4)}, and \cref{prop:derived-secrecy-axioms}.
\end{proof}

\begin{proposition}\label{prop:no-secret-top}
	Assume $|A|\ge 2$. Then for every $a\in A$,
	\[
	\vdash_{\mathsf{SSL}} \neg S_a\top.
	\]
\end{proposition}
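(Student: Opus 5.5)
Since $|A|\ge 2$, we can fix some agent $b\in A$ with $b\neq a$; the whole argument runs through this second agent. The idea is that $\top$ is trivially known by everyone, whereas a secret of $a$ cannot be known by $b$ because of \textbf{(S2)}. This is really a special case of \cref{prop:no-foreign-local-secrets} with $\chi:=\top$, and we can either cite that result or carry out the short derivation directly.

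For the direct derivation, we first obtain $\vdash_{\mathsf{SSL}} K_b\top$. The formula $\top$ is a propositional tautology, so it is an instance of \textbf{(A1)}, and Knowledge Necessitation \textbf{(R2)} then yields $K_b\top$. Next, the instance of \textbf{(S2)} for $a$, $b$ and $\top$ gives $\vdash_{\mathsf{SSL}} S_a\top\rightarrow\neg K_b\top$. From these two theorems, propositional reasoning (contraposition together with modus ponens) gives $\vdash_{\mathsf{SSL}}\neg S_a\top$.

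Alternatively, we can go through \cref{prop:no-foreign-local-secrets}. The formula $\top\rightarrow K_b\top$ is derivable from $\vdash K_b\top$ by \textbf{(A1)} and modus ponens, so the hypothesis of that proposition holds with $\chi=\top$, and its conclusion is exactly $\vdash_{\mathsf{SSL}}\neg S_a\top$.

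The derivation itself is routine. The one point that needs care is the role of the assumption $|A|\ge 2$: it is what guarantees that a second agent $b\neq a$ exists, and hence that an instance of \textbf{(S2)} is available at all.
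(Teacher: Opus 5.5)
Your proof is correct and is essentially the paper's own argument: choose $b\neq a$ using $|A|\ge 2$, obtain $\vdash_{\mathsf{SSL}} K_b\top$ by Knowledge Necessitation, and combine it with the \textbf{(S2)} instance $S_a\top\rightarrow\neg K_b\top$ by propositional reasoning. Your alternative route through \cref{prop:no-foreign-local-secrets} with $\chi=\top$ is also valid, though it only repackages the same two steps.
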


\begin{proof}
	Fix $a\in A$ and choose $b\neq a$. Since $\top$ is a theorem, so is $K_b\top$.
	But by \textbf{(S2)},
	\[
	\vdash_{\mathsf{SSL}} S_a\top \rightarrow \neg K_b\top.
	\]
	Hence
	\[
	\vdash_{\mathsf{SSL}} \neg S_a\top.
	\]
\end{proof}

These results show that the system already derives a rich epistemic profile for
secrecy. In particular, secrecy implies truth, exact owner-locality, and a
strong form of higher-order opacity, while ruling out foreign secrets about facts
that are already local to another agent. This is precisely the sort of
proof-theoretic behavior one would expect from the semantics introduced in
\cref{sec:syntax-semantics}.

\subsection{Non-normality: countermodels}

We now show that, despite its owner-locality, the operator $S_a$ remains
genuinely non-normal. To make the comparison transparent, we keep fixed the
underlying $3\times 3$ geometry from the running example in
\cref{subsec:running-example} and vary only the valuation and the secrecy
neighborhoods.

\begin{example}[Failure of secrecy idempotence]\label{ex:no-pos-intro}
	Let $A=\{a,b\}$. Consider the pure $A$-chromatic simplicial complex with
	vertices
	\[
	u_0,u_1,u_2 \text{ of colour } a,
	\qquad
	w_1,w_2,w_3 \text{ of colour } b,
	\]
	and facets
	\[
	x_i=\{u_0,w_i\},\qquad
	y_i=\{u_1,w_i\},\qquad
	z_i=\{u_2,w_i\}
	\qquad (i=1,2,3).
	\]
	
	Let $p$ be true exactly on
	\[
	\True{p}=\{x_1,x_2,x_3,y_1\},
	\]
	let
	\[
	N_a^S(u_0)=\{\True{p}\},
	\]
	and let all other secrecy neighborhoods be empty.
	
	The condition \textbf{(SN)} is satisfied: for each $x_i\in\St(u_0)$, the
	$b$-indistinguishable column of $x_i$ contains a facet outside $\True{p}$
	(namely $z_1$ for $x_1$, and $y_i$ or $z_i$ for $i=2,3$).
	
	Now, at each $x_i$, we have $K_ap$ and $\True{p}\in N_a^S(u_0)$, so
	\[
	M,x_i\forces S_ap.
	\]
	But $\True{S_ap}=\{x_1,x_2,x_3\}$, and this set is not in $N_a^S(u_0)$.
	Hence
	\[
	M,x_i\not\forces S_aS_ap.
	\]
	Therefore
	\[
	\not\models S_a\varphi\rightarrow S_aS_a\varphi.
	\]
\end{example}

\begin{example}[Failure of conjunction closure, distribution, and monotonicity]\label{ex:no-distribution}
	On the same underlying simplicial complex, let propositional variables
	$p,q,r$ be interpreted by
	\[
	\True{p}=\{x_1,x_2,x_3,y_1,y_2,y_3\},
	\]
	\[
	\True{q}=\{x_1,x_2,x_3\},
	\]
	\[
	\True{r}=\{x_1,x_2,x_3,z_1,z_2,z_3\},
	\]
	and let
	\[
	N_a^S(u_0)=\{\True{p},\True{r}\},
	\]
	with all other secrecy neighborhoods empty.
	
	Again \textbf{(SN)} is satisfied: each $b$-column contains a witness outside
	$\True{p}$ and also a witness outside $\True{r}$.
	
	At each $x_i$, both $p$ and $r$ are known by $a$, and their truth sets are
	designated at $u_0$. Hence
	\[
	M,x_i\forces S_ap\land S_ar.
	\]
	But
	\[
	\True{p\land r}=\{x_1,x_2,x_3\}=\True{q},
	\]
	and this set is not in $N_a^S(u_0)$. Therefore
	\[
	M,x_i\not\forces S_a(p\land r),
	\]
	so
	\[
	\not\models (S_a\varphi\land S_a\psi)\rightarrow S_a(\varphi\land\psi).
	\]
	
	Moreover,
	\[
	\True{p\rightarrow q}=\True{r}.
	\]
	So at each $x_i$,
	\[
	M,x_i\forces S_a(p\rightarrow q)\land S_ap,
	\]
	but
	\[
	M,x_i\not\forces S_aq.
	\]
	Hence
	\[
	\not\models S_a(\varphi\rightarrow\psi)\rightarrow (S_a\varphi\rightarrow S_a\psi).
	\]
	
	The same model also witnesses failure of monotonicity. Since
	\[
	\models p\rightarrow (p\lor r),
	\]
	monotonicity of $S_a$ would imply
	\[
	\models S_ap\rightarrow S_a(p\lor r).
	\]
	But
	\[
	\True{p\lor r}=\Fac(M),
	\]
	and
	\[
	\Fac(M)\notin N_a^S(u_0).
	\]
	Hence at each $x_i$,
	\[
	M,x_i\forces S_ap
	\qquad\text{but}\qquad
	M,x_i\not\forces S_a(p\lor r).
	\]
	Therefore
	\[
	\not\models S_ap\rightarrow S_a(p\lor r),
	\]
	so $S_a$ is not monotone.
\end{example}

These countermodels show that owner-locality should not be mistaken for
normality. Even though secrecy is constant across the owner's epistemic range,
it is not closed under self-iteration, conjunction, material implication, or
monotone consequence.

\subsection{Proof-theoretic stance}

For the genuinely multi-agent case $|A|\ge 2$, which is the setting of the
completeness theorem in \cref{sec:metatheory}, the system $\mathsf{SSL}$
should be understood as the exact axiomatic system for the current
vertex-based secrecy semantics. It incorporates owner-locality because that
principle is already forced by the dependence of $S_a$ on the owner's local
state. At the same time, the countermodels above show that secrecy remains
genuinely non-normal: one should not add stronger normal-style distribution
or closure principles without imposing further restrictions on the secrecy
neighborhoods.

In this sense, the proof theory mirrors the semantics exactly. The system is
strong enough to derive the intended epistemic consequences of secrecy, but it
stops short of imposing normal modal behavior that the semantics does not
support. In the next section we show that this system is sound with respect to
simplicial secrecy models.


\section{Soundness and Completeness}\label{sec:metatheory}

In this section we establish the metatheory of the system $\mathsf{SSL}$
introduced in \cref{sec:axiomatic-system}. We first prove soundness with respect
to the class of simplicial secrecy models from \cref{sec:syntax-semantics}. We
then prove completeness, for the genuinely multi-agent case $|A|\ge 2$, by
combining an auxiliary-colour canonical construction with a representation
theorem back into the original class of pure $A$-chromatic simplicial secrecy
models.

\subsection{Soundness}

We begin with the standard semantic notion of validity.

\begin{definition}[Validity]
	Let $M$ be a simplicial secrecy model and let $\varphi\in\Lang_{KS}$.
	\begin{enumerate}
		\item We write
		\[
		M \models \varphi
		\]
		if for every facet $X\in\Fac(M)$,
		\[
		M,X\forces \varphi.
		\]
		\item We write
		\[
		\models \varphi
		\]
		if for every simplicial secrecy model $M$,
		\[
		M\models \varphi.
		\]
	\end{enumerate}
\end{definition}

\begin{proposition}[Semantic owner-locality]\label{prop:owner-locality-valid}
	For every agent $a\in A$,
	\[
	\models S_a\varphi \leftrightarrow K_aS_a\varphi.
	\]
	Consequently,
	\[
	\models S_a\varphi \rightarrow K_aS_a\varphi,
	\qquad
	\models \neg S_a\varphi \rightarrow K_a\neg S_a\varphi.
	\]
\end{proposition}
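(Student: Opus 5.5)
The plan is to show that the truth value of $S_a\varphi$ at a facet $X$ depends only on the owner's vertex $v_a(X)$. The biconditional then follows from the $K_a$ clause.

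\emph{Key observation.} Fix a model $M$, an agent $a$ and a formula $\varphi$. Let $X,Y\in\Fac(M)$ with $X\sim_a Y$, so that $v:=v_a(X)=v_a(Y)$. I claim
\[
M,X\forces S_a\varphi \iff M,Y\forces S_a\varphi .
\]
The neighborhood conjunct $\True{\varphi}\in N_a^S(v)$ refers only to the global truth set $\True{\varphi}$ and the vertex $v$. It is therefore literally the same condition at $X$ and at $Y$. For the knowledge conjunct, $M,X\forces K_a\varphi$ says that every facet $Z$ with $X\sim_a Z$ satisfies $\varphi$, i.e.\ that $\St(v)\subseteq\True{\varphi}$ (by \cref{def:sim-a}). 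The same description applies at $Y$. Alternatively, one can use that $\sim_a$ is an equivalence relation (\cref{lem:sim-equiv}), so $X$ and $Y$ have the same $\sim_a$-class. Both conjuncts thus agree at $X$ and $Y$, which proves the claim.

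\emph{The biconditional.} For the left-to-right direction, suppose $M,X\forces S_a\varphi$ and let $Y$ be any facet with $X\sim_a Y$. By the observation, $M,Y\forces S_a\varphi$, so $M,X\forces K_aS_a\varphi$. The right-to-left direction is reflexivity of $\sim_a$ (\cref{lem:sim-equiv}), which is semantic (T). Since $M$ and $X$ were arbitrary, this gives $\models S_a\varphi\leftrightarrow K_aS_a\varphi$.

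\emph{The consequences.} The first consequence, $\models S_a\varphi\rightarrow K_aS_a\varphi$, is just one direction of the biconditional. For the second, suppose $M,X\forces\neg S_a\varphi$ and $X\sim_a Y$. If $M,Y\forces S_a\varphi$, then the observation, applied with $X$ and $Y$ swapped using symmetry of $\sim_a$, would give $M,X\forces S_a\varphi$, a contradiction. Hence every $\sim_a$-alternative of $X$ satisfies $\neg S_a\varphi$, that is, $M,X\forces K_a\neg S_a\varphi$.

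No step here is a real obstacle. The only point needing care is to note explicitly that the neighborhood conjunct is evaluated at the vertex $v_a(X)$ with the model-global set $\True{\varphi}$. This is what makes it insensitive to which facet of $\St(v)$ we evaluate at.
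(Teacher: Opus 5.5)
Your proof is correct and follows essentially the same route as the paper. Both show that $S_a\varphi$ is constant on each $\sim_a$-class, because both conjuncts are determined by the shared vertex $v_a(X)$ and the global truth set $\True{\varphi}$. From this class-constancy together with reflexivity of $\sim_a$, both obtain the biconditional and then $\neg S_a\varphi\rightarrow K_a\neg S_a\varphi$. The only cosmetic difference is that you phrase the knowledge conjunct as $\St(v)\subseteq\True{\varphi}$, whereas the paper writes it as $X\sim_a Z \iff v_a(Z)=v \iff Y\sim_a Z$.
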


\begin{proof}
	Let $M$ be a simplicial secrecy model and let $X\in\Fac(M)$.
	
	We first show that for every facet $Y\in\Fac(M)$ with
	\[
	X\sim_a Y,
	\]
	we have
	\[
	M,X\forces S_a\varphi
	\quad\Longleftrightarrow\quad
	M,Y\forces S_a\varphi.
	\]
	
	So assume $X\sim_a Y$, and write
	\[
	v:=v_a(X)=v_a(Y).
	\]
	Then for every facet $Z\in\Fac(M)$,
	\[
	X\sim_a Z
	\quad\Longleftrightarrow\quad
	v_a(Z)=v
	\quad\Longleftrightarrow\quad
	Y\sim_a Z.
	\]
	Hence
	\[
	M,X\forces K_a\varphi
	\quad\Longleftrightarrow\quad
	M,Y\forces K_a\varphi.
	\]
	Moreover,
	\[
	N_a^S(v_a(X))=N_a^S(v)=N_a^S(v_a(Y)),
	\]
	and the truth set $\True{\varphi}$ is independent of the evaluation point.
	Therefore
	\[
	M,X\forces S_a\varphi
	\quad\Longleftrightarrow\quad
	M,Y\forces S_a\varphi.
	\]
	So $S_a\varphi$ is constant on each $a$-equivalence class.
	
	Now suppose
	\[
	M,X\forces S_a\varphi.
	\]
	Then every facet $Y$ with $X\sim_a Y$ also satisfies $S_a\varphi$, and hence
	\[
	M,X\forces K_aS_a\varphi.
	\]
	
	Conversely, suppose
	\[
	M,X\forces K_aS_a\varphi.
	\]
	Since $\sim_a$ is reflexive, we have $X\sim_a X$, and therefore
	\[
	M,X\forces S_a\varphi.
	\]
	
	Thus
	\[
	M,X\forces S_a\varphi
	\quad\Longleftrightarrow\quad
	M,X\forces K_aS_a\varphi.
	\]
	Since $M$ and $X$ were arbitrary,
	\[
	\models S_a\varphi \leftrightarrow K_aS_a\varphi.
	\]
	
	Finally, suppose
	\[
	M,X\forces \neg S_a\varphi.
	\]
	By the class-constancy proved above, every facet $Y$ with $X\sim_a Y$
	also satisfies $\neg S_a\varphi$. Hence
	\[
	M,X\forces K_a\neg S_a\varphi.
	\]
	Since $M$ and $X$ were arbitrary,
	\[
	\models \neg S_a\varphi \rightarrow K_a\neg S_a\varphi.
	\]
	
	The first displayed consequence now follows immediately from
	\[
	\models S_a\varphi \leftrightarrow K_aS_a\varphi.
	\]
\end{proof}

\begin{theorem}[Soundness of $\mathsf{SSL}$]\label{thm:ssl-sound}
	For every formula $\varphi\in\Lang_{KS}$,
	\[
	\vdash_{\mathsf{SSL}} \varphi
	\quad\Longrightarrow\quad
	\models \varphi.
	\]
\end{theorem}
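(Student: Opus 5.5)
The proof goes by induction on the length of derivations in $\mathsf{SSL}$. We show that every axiom instance is valid in every simplicial secrecy model and that each of the rules (R1)--(R3) preserves validity. Throughout we fix an arbitrary model $M$ and facet $X\in\Fac(M)$.

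\textbf{Axioms.} Propositional tautologies (A1) are valid because the Boolean clauses are classical. For \textbf{(K)}, \textbf{(T)}, \textbf{(4)}, \textbf{(5)}, the truth clause for $K_a$ is the standard Kripke clause for the relation $\sim_a$ on $\Fac(M)$, and $\sim_a$ is an equivalence relation by \cref{lem:sim-equiv}. So the usual S5 arguments apply verbatim: reflexivity gives \textbf{(T)}, transitivity gives \textbf{(4)}, and symmetry plus transitivity give \textbf{(5)}. \textbf{(S1)} is immediate from the first conjunct of the clause for $S_a$. \textbf{(S2)} is item (3) of \cref{prop:new-basic}, which uses the frame condition \textbf{(SN)} applied to $U=\True{\varphi}\in N_a^S(v_a(X))$ and $X\in\St(v_a(X))$. \textbf{(S4)} is given by \cref{prop:owner-locality-valid}.

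\textbf{Rules.} Modus Ponens preserves validity pointwise. For Knowledge Necessitation, suppose $\varphi$ is true at every facet of every model. Then it is true at every $Y$ with $X\sim_a Y$, so $K_a\varphi$ is valid. For Replacement of Equivalents, assume $\models\varphi\leftrightarrow\psi$. In any model $M$ this gives $\True{\varphi}=\True{\psi}$, and hence also $M,X\forces K_a\varphi$ iff $M,X\forces K_a\psi$. Since the clause for $S_a$ depends only on $K_a\varphi$ and on membership of $\True{\varphi}$ in $N_a^S(v_a(X))$, we get $M,X\forces S_a\varphi$ iff $M,X\forces S_a\psi$.

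Here the induction must be on the validity of the premise, not merely on its derivability, since the inductive hypothesis supplies validity of $\varphi\leftrightarrow\psi$ in all models and we apply it inside a fixed $M$. This is exactly where extensionality of the $S_a$ clause is essential. No step is a real obstacle: the only nontrivial semantic facts, \textbf{(S2)} and \textbf{(S4)}, are already established in \cref{prop:new-basic} and \cref{prop:owner-locality-valid}.
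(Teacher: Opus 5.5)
Your proposal is correct and follows the paper's proof essentially step for step: it validates the S5 axioms via \cref{lem:sim-equiv}, \textbf{(S1)} and \textbf{(S2)} via \cref{prop:new-basic}, and \textbf{(S4)} via \cref{prop:owner-locality-valid}, and it handles Replacement of Equivalents through the extensionality of the $S_a$ clause. Your closing caveat about inducting on validity of the premise is harmless but not needed, since replacement in fact preserves validity even within a single fixed model.
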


\begin{proof}
	We show that every axiom scheme of $\mathsf{SSL}$ is valid on simplicial
	secrecy models and that every inference rule preserves validity. The theorem
	then follows by induction on the length of derivations.
	
The proofs for propositional tautologies and for the S5-part of $K_a$
(\textbf{(K)}, \textbf{(T)}, \textbf{(4)}, \textbf{(5)}) are exactly as in
the ordinary simplicial epistemic setting, since by
\cref{lem:sim-equiv} each relation $\sim_a$ is an equivalence relation on
facets.

	For the secrecy axioms:
	\begin{itemize}
		\item \textbf{(S1)} is valid by \cref{prop:new-basic}(1);
		\item \textbf{(S2)} is valid by \cref{prop:new-basic}(3);
		\item \textbf{(S4)} is valid by \cref{prop:owner-locality-valid}.
	\end{itemize}
	
	Modus Ponens and Knowledge Necessitation preserve validity exactly as usual.
	Replacement of Equivalents for Secrecy preserves validity because the truth of
	$S_a\varphi$ depends only on the truth set $\True{\varphi}$ together with the
	ordinary knowledge condition, both of which are preserved under valid
	equivalence.
\end{proof}

\begin{corollary}\label{cor:ssl-derived-sound}
	Every formula derivable in $\mathsf{SSL}$ is valid on the class of simplicial
	secrecy models. In particular, the derived principles from
	\cref{sec:axiomatic-system}, such as
	\[
	S_a\varphi\rightarrow\varphi,
	\qquad
	S_a\varphi\rightarrow
	\bigwedge_{b\in A\setminus\{a\}}
	(\neg K_b\varphi \land \neg K_b\neg\varphi),
	\]
	are semantically valid.
\end{corollary}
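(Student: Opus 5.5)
The corollary follows by chaining \cref{thm:ssl-sound} with the derivations already given in \cref{sec:axiomatic-system}; no new semantic work is needed. The first sentence of the corollary is a restatement of \cref{thm:ssl-sound}: if $\vdash_{\mathsf{SSL}}\varphi$, then $\models\varphi$.

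For the displayed instances, the argument has two steps.

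\begin{enumerate}
\item By \cref{prop:ssl-truth}, $\vdash_{\mathsf{SSL}} S_a\varphi\rightarrow\varphi$. Applying \cref{thm:ssl-sound} gives validity of the first formula.
\item By \cref{prop:ssl-summary}, $\vdash_{\mathsf{SSL}} S_a\varphi\rightarrow\bigl(K_a\varphi\land\varphi\land\bigwedge_{b\neq a}(\neg K_b\varphi\land\neg K_b\neg\varphi)\bigr)$. Weakening the consequent by propositional reasoning (axiom (A1) and Modus Ponens) yields $\vdash_{\mathsf{SSL}} S_a\varphi\rightarrow\bigwedge_{b\neq a}(\neg K_b\varphi\land\neg K_b\neg\varphi)$. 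Soundness then gives its validity.
\end{enumerate}

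The remaining derived principles of \cref{sec:axiomatic-system} are handled the same way. This covers \cref{prop:derived-secrecy-axioms}, \cref{cor:exact-owner-locality}, \cref{prop:owner-knows-ignorance}, \cref{prop:higher-order-opacity}, \cref{prop:no-foreign-local-secrets}, and \cref{prop:no-secret-top} under $|A|\ge 2$. Each is an $\mathsf{SSL}$-theorem, so soundness transfers it directly to validity.

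As a consistency check, the second displayed formula also agrees with the direct semantic route through \cref{cor:new-basic-truth}, which establishes the same implication without passing through the proof system.

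The only point needing any care is the one already settled inside \cref{thm:ssl-sound}: that Replacement of Equivalents for Secrecy preserves validity. It does, because the truth of $S_a\varphi$ depends only on $\True{\varphi}$ and on the fixed relation $\sim_a$, and valid equivalents have identical truth sets in every model.
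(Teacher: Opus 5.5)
Your proposal is correct and is essentially the paper's argument. The paper gives no separate proof and treats the corollary as an immediate consequence of \cref{thm:ssl-sound}, applied to the $\mathsf{SSL}$-theorems \cref{prop:ssl-truth} and \cref{prop:ssl-summary} (the latter weakened propositionally), exactly as you do.
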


\begin{remark}
	The proof of soundness highlights the division of labour between the two
	semantic layers of our framework. The knowledge axioms are validated by the
	ordinary simplicial epistemic base, namely by the fact that the relations
	$\sim_a$ are equivalence relations on facets, as in standard simplicial
	epistemic semantics~\cite{goubault2021simplicial,vanditmarsch2022knowledge}.
	By contrast, the secrecy principles are validated by three ingredients
	working together: the explicit knowledge conjunct in the truth clause for
	$S_a$, the external-uncertainty frame condition \textbf{(SN)}, and the fact
	that secrecy depends only on the owner's local state. This confirms the
	intended methodological picture: secrecy is not reducible to knowledge, but
	is added as a genuinely new semantic layer on top of the standard simplicial
	interpretation of epistemic logic.
\end{remark}

\subsection{Canonical preliminaries}

We now turn to completeness. The canonical construction below uses only finitary
proof-theoretic machinery.

\begin{definition}[Consistency]
	A set $\Sigma\subseteq\Lang_{KS}$ is \emph{$\mathsf{SSL}$-consistent} if there
	are no formulas $\phi_1,\dots,\phi_n\in\Sigma$ such that
	\[
	\vdash_{\mathsf{SSL}}(\phi_1\land\dots\land\phi_n)\rightarrow\bot,
	\]
	where $n\ge 0$ and the empty conjunction is understood as $\top$.
\end{definition}

By the standard Lindenbaum extension lemma for Hilbert systems with finitary
rules, every $\mathsf{SSL}$-consistent set of formulas can be extended to a
maximally $\mathsf{SSL}$-consistent set; compare the standard canonical-model
machinery for modal and neighborhood logics~\cite{chellas1980modal}.

\paragraph{Standing assumption.}
Henceforth, unless stated otherwise, we assume
\[
|A|\ge 2.
\]
The role of this assumption will be discussed explicitly at the end of the
section; see \cref{rem:one-agent-role}.

Let $\mathsf{MCS}$ be the set of all maximally $\mathsf{SSL}$-consistent sets of
formulas.

For $\Gamma,\Delta\in\mathsf{MCS}$ and $a\in A$, define
\[
\Gamma R_a \Delta
\quad\text{iff}\quad
\{\varphi \mid K_a\varphi\in\Gamma\}\subseteq \Delta.
\]

\begin{lemma}\label{lem:Ra-equivalence}
	For every agent $a\in A$, the relation $R_a$ on $\mathsf{MCS}$ is an
	equivalence relation.
\end{lemma}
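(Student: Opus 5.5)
This is the standard canonical-model argument for S5. Reflexivity will use \textbf{(T)}, transitivity will use \textbf{(4)}, and symmetry will use \textbf{(5)} together with \textbf{(T)}. Throughout we rely on the usual facts about a maximally $\mathsf{SSL}$-consistent set $\Gamma$: it contains every theorem of $\mathsf{SSL}$, it is closed under modus ponens, and for each formula $\varphi$ exactly one of $\varphi,\neg\varphi$ belongs to $\Gamma$.

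\emph{Reflexivity.} Let $\Gamma\in\mathsf{MCS}$ and suppose $K_a\varphi\in\Gamma$. The instance $K_a\varphi\rightarrow\varphi$ of \textbf{(T)} is in $\Gamma$. By closure under modus ponens, $\varphi\in\Gamma$. Hence $\Gamma R_a\Gamma$.

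\emph{Transitivity.} Suppose $\Gamma R_a\Delta$ and $\Delta R_a\Theta$, and let $K_a\varphi\in\Gamma$. By \textbf{(4)}, $K_aK_a\varphi\in\Gamma$. Then $\Gamma R_a\Delta$ gives $K_a\varphi\in\Delta$. Then $\Delta R_a\Theta$ gives $\varphi\in\Theta$. Hence $\Gamma R_a\Theta$.

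\emph{Symmetry.} This is the only step needing a small argument. Suppose $\Gamma R_a\Delta$ and $K_a\varphi\in\Delta$; we must show $\varphi\in\Gamma$. Assume for contradiction that $\varphi\notin\Gamma$, so $\neg\varphi\in\Gamma$.
\begin{enumerate}
\item The contrapositive of \textbf{(T)} gives $\neg K_a\varphi\in\Gamma$.
\item Axiom \textbf{(5)} then gives $K_a\neg K_a\varphi\in\Gamma$.
\item Since $\Gamma R_a\Delta$, we get $\neg K_a\varphi\in\Delta$.
\item This contradicts $K_a\varphi\in\Delta$ and the consistency of $\Delta$.
\end{enumerate}
Hence $\varphi\in\Gamma$, so $\Delta R_a\Gamma$.

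No step is a real obstacle. The secrecy axioms and the standing assumption $|A|\ge 2$ play no role, because $R_a$ is defined purely through the $K_a$-fragment, which is ordinary S5.
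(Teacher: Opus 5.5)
Your proof is correct, but it is organized differently from the paper's.

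\textbf{Your route.} You verify reflexivity, transitivity and symmetry directly, pairing each property with its textbook axiom: \textbf{(T)} for reflexivity, \textbf{(4)} for transitivity, and \textbf{(T)} together with \textbf{(5)} for symmetry.

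\textbf{The paper's route.} The paper proves reflexivity from \textbf{(T)} and Euclideanness from \textbf{(5)}. For Euclideanness, if $\Gamma R_a\Delta$, $\Gamma R_a\Theta$ and $K_a\varphi\in\Delta$, then $K_a\varphi\in\Gamma$, since otherwise \textbf{(5)} would force $\neg K_a\varphi$ into $\Delta$. Symmetry and transitivity are then derived purely relationally from reflexivity plus Euclideanness.

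\textbf{Comparison.} The paper never invokes \textbf{(4)}. Its argument shows the canonical relation is already an equivalence using only \textbf{(T)} and \textbf{(5)}, which reflects the redundancy of \textbf{(4)} in $\mathsf{S5}$ (that is, $\mathsf{KT5}=\mathsf{S5}$). Your version uses one more axiom but is more modular: each frame property follows from a single axiom in one step. Since \textbf{(4)} is an axiom of $\mathsf{SSL}$, nothing is lost. Your symmetry step is essentially the paper's Euclidean argument specialized to $\Theta=\Gamma$, with \textbf{(T)} applied at the start in contrapositive form rather than at the end as reflexivity.
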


\begin{proof}
	We first show that $R_a$ is reflexive.
	
	Let $\Gamma\in\mathsf{MCS}$, and suppose $K_a\varphi\in\Gamma$. By axiom
	\textbf{(T)},
	\[
	K_a\varphi\rightarrow \varphi
	\]
	is a theorem of $\mathsf{SSL}$, hence belongs to $\Gamma$. Since $\Gamma$ is
	closed under modus ponens, it follows that $\varphi\in\Gamma$. Therefore
	$\Gamma R_a\Gamma$.
	
	Next we show that $R_a$ is Euclidean.
	
	Assume $\Gamma R_a\Delta$ and $\Gamma R_a\Theta$. We show that
	$\Delta R_a\Theta$. Let $K_a\varphi\in\Delta$. Suppose, for contradiction,
	that $\varphi\notin\Theta$. Then $\neg\varphi\in\Theta$ by maximal
	consistency.
	
	We claim that $K_a\varphi\in\Gamma$. If not, then $\neg K_a\varphi\in\Gamma$.
	By axiom \textbf{(5)},
	\[
	\neg K_a\varphi \rightarrow K_a\neg K_a\varphi
	\]
	belongs to $\Gamma$, so by modus ponens
	\[
	K_a\neg K_a\varphi\in\Gamma.
	\]
	Since $\Gamma R_a\Delta$, it follows that
	\[
	\neg K_a\varphi\in\Delta,
	\]
	contradicting $K_a\varphi\in\Delta$. Hence indeed $K_a\varphi\in\Gamma$.
	
	But then, since $\Gamma R_a\Theta$, we obtain $\varphi\in\Theta$,
	contradicting $\neg\varphi\in\Theta$. Therefore $\varphi\in\Theta$, and so
	$\Delta R_a\Theta$.
	
	Thus $R_a$ is Euclidean.
	
	Now symmetry follows from reflexivity and Euclideanness. Suppose
	$\Gamma R_a\Delta$. Since $\Gamma R_a\Gamma$ and $R_a$ is Euclidean, it
	follows that $\Delta R_a\Gamma$.
	
	Finally, transitivity follows from symmetry and Euclideanness. Suppose
	$\Gamma R_a\Delta$ and $\Delta R_a\Theta$. By symmetry, from
	$\Gamma R_a\Delta$ we obtain $\Delta R_a\Gamma$. Since $R_a$ is Euclidean,
	from $\Delta R_a\Gamma$ and $\Delta R_a\Theta$ we obtain $\Gamma R_a\Theta$.
	
	Therefore $R_a$ is an equivalence relation.
\end{proof}

For $\Gamma\in\mathsf{MCS}$ and $a\in A$, let
\[
[\Gamma]_a:=\{\Delta\in\mathsf{MCS}\mid \Gamma R_a\Delta\}.
\]

\begin{lemma}[Derivable conjunction for knowledge]\label{lem:K-conj}
	For every agent $a\in A$ and formulas $\psi_1,\dots,\psi_n$ with $n\ge 1$,
	\[
	\vdash_{\mathsf{SSL}}
	(K_a\psi_1\land\cdots\land K_a\psi_n)\rightarrow
	K_a(\psi_1\land\cdots\land\psi_n).
	\]
\end{lemma}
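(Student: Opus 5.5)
We argue by induction on $n\ge 1$, using only the normal-modal core of $\mathsf{SSL}$: propositional tautologies (A1), axiom \textbf{(K)}, Knowledge Necessitation (R2), and Modus Ponens (R1). The secrecy axioms play no role.

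For the base case $n=1$ the formula is $K_a\psi_1\rightarrow K_a\psi_1$. This is an instance of a propositional tautology, so it is derivable by (A1).

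For the inductive step, assume the claim holds for $n$, and write $\Psi_n:=\psi_1\land\cdots\land\psi_n$. We first derive
\[
\vdash_{\mathsf{SSL}} (K_a\Psi_n\land K_a\psi_{n+1})\rightarrow K_a(\Psi_n\land\psi_{n+1}).
\]
The tautology $\Psi_n\rightarrow(\psi_{n+1}\rightarrow(\Psi_n\land\psi_{n+1}))$ is derivable by (A1). Knowledge Necessitation turns it into a theorem $K_a(\cdots)$. One application of \textbf{(K)} with Modus Ponens gives
\[
K_a\Psi_n\rightarrow K_a(\psi_{n+1}\rightarrow(\Psi_n\land\psi_{n+1})).
\]
A second instance of \textbf{(K)} gives
\[
K_a(\psi_{n+1}\rightarrow(\Psi_n\land\psi_{n+1}))\rightarrow(K_a\psi_{n+1}\rightarrow K_a(\Psi_n\land\psi_{n+1})).
\]
Chaining these two implications and importing the antecedents propositionally yields the displayed two-conjunct formula.

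Next we combine this with the induction hypothesis $(K_a\psi_1\land\cdots\land K_a\psi_n)\rightarrow K_a\Psi_n$ by propositional reasoning. This gives
\[
(K_a\psi_1\land\cdots\land K_a\psi_{n+1})\rightarrow K_a(\Psi_n\land\psi_{n+1}),
\]
which is the claim for $n+1$ once conjunctions are read as left-associated.

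There is no genuine obstacle. The only point needing care is bracketing: the statement's $n$-ary conjunctions must be identified with the left-nested ones used in the induction. This identification is harmless for the antecedent, since it sits outside any modality and propositional reasoning suffices. Inside $K_a$ we fix the left-nested reading by convention. If another bracketing is intended, it is provably equivalent by a tautology, and that equivalence transfers under $K_a$ via Necessitation and \textbf{(K)}.
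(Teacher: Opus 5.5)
Your proposal is correct and follows the paper's approach: induction on $n$, with the key step obtained from the tautology $\Psi_n\rightarrow(\psi_{n+1}\rightarrow(\Psi_n\land\psi_{n+1}))$ via Knowledge Necessitation and two applications of \textbf{(K)}. The paper writes this out only for $n=2$ and calls the induction step routine; your inductive step is that same $n=2$ argument with $\Psi_n$ in place of $\psi_1$, combined with the induction hypothesis.
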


\begin{proof}
	By induction on $n$. The case $n=1$ is immediate.
	
	For $n=2$, use the propositional tautology
	\[
	\psi_1\rightarrow(\psi_2\rightarrow(\psi_1\land\psi_2)).
	\]
	By Knowledge Necessitation,
	\[
	\vdash_{\mathsf{SSL}}
	K_a\bigl(\psi_1\rightarrow(\psi_2\rightarrow(\psi_1\land\psi_2))\bigr).
	\]
	By two applications of \textbf{(K)},
	\[
	\vdash_{\mathsf{SSL}}
	K_a\psi_1\rightarrow\bigl(K_a\psi_2\rightarrow K_a(\psi_1\land\psi_2)\bigr).
	\]
	So
	\[
	\vdash_{\mathsf{SSL}}
	(K_a\psi_1\land K_a\psi_2)\rightarrow K_a(\psi_1\land\psi_2).
	\]
	The induction step is routine.
\end{proof}

\begin{lemma}[Canonical existence lemma]\label{lem:canonical-existence}
	If $\neg K_a\varphi\in\Gamma$, then there exists $\Delta\in\mathsf{MCS}$ such
	that
	\[
	\Gamma R_a \Delta
	\qquad\text{and}\qquad
	\neg\varphi\in\Delta.
	\]
\end{lemma}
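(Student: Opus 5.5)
The plan is the standard existence argument for normal modal operators. Assume $\neg K_a\varphi\in\Gamma$ and set
\[
\Sigma:=\{\psi\mid K_a\psi\in\Gamma\}\cup\{\neg\varphi\}.
\]
It suffices to show that $\Sigma$ is $\mathsf{SSL}$-consistent. Once that is established, the Lindenbaum extension lemma yields $\Delta\in\mathsf{MCS}$ with $\Sigma\subseteq\Delta$. By construction $\{\psi\mid K_a\psi\in\Gamma\}\subseteq\Delta$, which is exactly $\Gamma R_a\Delta$, and $\neg\varphi\in\Delta$ as required.

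Consistency will be proved by contradiction. Suppose $\Sigma$ is inconsistent. Then there are finitely many $\psi_1,\dots,\psi_n$ with $K_a\psi_i\in\Gamma$ such that
\[
\vdash_{\mathsf{SSL}}(\psi_1\land\cdots\land\psi_n\land\neg\varphi)\rightarrow\bot .
\]
Here I may assume $\neg\varphi$ occurs among the chosen formulas, since adding a conjunct preserves derivability of $\bot$. I also allow $n=0$, in which case the premise is just $\neg\varphi$. Propositional reasoning then gives
\[
\vdash_{\mathsf{SSL}}(\psi_1\land\cdots\land\psi_n)\rightarrow\varphi .
\]
Knowledge Necessitation followed by \textbf{(K)} gives
\[
\vdash_{\mathsf{SSL}}K_a(\psi_1\land\cdots\land\psi_n)\rightarrow K_a\varphi .
\]
Combining this with \cref{lem:K-conj} yields
\[
\vdash_{\mathsf{SSL}}(K_a\psi_1\land\cdots\land K_a\psi_n)\rightarrow K_a\varphi .
\]
Since every $K_a\psi_i\in\Gamma$ and $\Gamma$ is a maximal consistent set, it contains all theorems and is closed under modus ponens and conjunction. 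Hence $K_a\varphi\in\Gamma$, which contradicts $\neg K_a\varphi\in\Gamma$ together with the consistency of $\Gamma$.

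The case $n=0$ has to be handled separately, because \cref{lem:K-conj} is stated only for $n\ge1$. In that case $\vdash_{\mathsf{SSL}}\neg\varphi\rightarrow\bot$, so $\vdash_{\mathsf{SSL}}\varphi$. Knowledge Necessitation then gives $\vdash_{\mathsf{SSL}}K_a\varphi$, hence $K_a\varphi\in\Gamma$, again a contradiction.

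I expect no serious obstacle. The only points needing care are the bookkeeping for the $n=0$ case and checking that the secrecy axioms play no role. This argument uses only the normal-modal part of $K_a$, namely Knowledge Necessitation and \textbf{(K)}, so the non-normal operator $S_a$ never interferes.
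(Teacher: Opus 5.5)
Your proposal is correct and follows essentially the same route as the paper's proof. It uses the same set $\Sigma$, the same split into the case $n=0$ (Knowledge Necessitation alone) and the case $n\ge 1$ (Necessitation, \textbf{(K)}, and \cref{lem:K-conj}), and concludes with a Lindenbaum extension to obtain $\Delta$.
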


\begin{proof}
	Consider the set
	\[
	\Sigma := \{\psi\mid K_a\psi\in\Gamma\}\cup\{\neg\varphi\}.
	\]
	We claim that $\Sigma$ is $\mathsf{SSL}$-consistent.
	
	Suppose not. Then some finite subset of $\Sigma$ is inconsistent. So there
	exist formulas $\psi_1,\dots,\psi_n$ such that $K_a\psi_i\in\Gamma$ for all
	$i=1,\dots,n$ and
	\[
	\vdash_{\mathsf{SSL}} (\psi_1\land\dots\land\psi_n)\rightarrow\varphi,
	\]
	where $n\ge 0$.
	
	If $n=0$, then simply
	\[
	\vdash_{\mathsf{SSL}} \varphi.
	\]
	By Knowledge Necessitation,
	\[
	\vdash_{\mathsf{SSL}} K_a\varphi,
	\]
	contradicting $\neg K_a\varphi\in\Gamma$.
	
	So assume $n\ge 1$. By Knowledge Necessitation,
	\[
	\vdash_{\mathsf{SSL}}
	K_a\bigl((\psi_1\land\dots\land\psi_n)\rightarrow\varphi\bigr).
	\]
	By \textbf{(K)},
	\[
	\vdash_{\mathsf{SSL}}
	K_a(\psi_1\land\dots\land\psi_n)\rightarrow K_a\varphi.
	\]
	By \cref{lem:K-conj}, from $K_a\psi_1,\dots,K_a\psi_n\in\Gamma$ we obtain
	\[
	K_a(\psi_1\land\dots\land\psi_n)\in\Gamma.
	\]
	Hence $K_a\varphi\in\Gamma$, contradicting $\neg K_a\varphi\in\Gamma$.
	
	Therefore $\Sigma$ is consistent. Extend it to some $\Delta\in\mathsf{MCS}$.
	Then $\Gamma R_a\Delta$ by construction, and $\neg\varphi\in\Delta$.
\end{proof}

\subsection{Auxiliary-colour canonical model}

We first introduce a temporary auxiliary colour.

\begin{definition}[Auxiliary-colour simplicial secrecy model]\label{def:aux-ssm}
	Fix a fresh colour $*\notin A$. An \emph{auxiliary-colour simplicial secrecy
		model} (over $A$) is a tuple
	\[
	M^\ast=(V,\mathcal{F},\chi,\nu,\{N_a^S\}_{a\in A})
	\]
	such that:
	\begin{enumerate}
		\item $(V,\mathcal{F},\chi)$ is an $(A\cup\{*\})$-chromatic simplicial
		complex;
		\item every facet $X\in\Fac(M^\ast)$ satisfies
		\[
		\chi[X]=A\cup\{*\};
		\]
		\item every vertex belongs to some facet of $M^\ast$, i.e.
		\[
		V=\bigcup_{X\in\Fac(M^\ast)}X;
		\]
		\item $\nu:\Fac(M^\ast)\to\Pow(\Prop)$ is a valuation function;
		\item for each agent $a\in A$, if
		\[
		V_a=\{v\in V\mid \chi(v)=a\},
		\]
		then
		\[
		N_a^S:V_a\to\Pow\bigl(\Pow(\Fac(M^\ast))\bigr)
		\]
		satisfies \textbf{(SN)} exactly as in \cref{def:ssm}.
	\end{enumerate}
	
	For each $a\in A$ and each facet $X\in\Fac(M^\ast)$, we write $v_a(X)$ for
	the unique vertex of colour $a$ in $X$. Formulas of $\Lang_{KS}$ are
	evaluated on auxiliary-colour models exactly as in
	\cref{def:ssm-semantics}; the auxiliary colour $*$ does not occur in the
	language.
\end{definition}

\begin{definition}[Auxiliary canonical model]\label{def:aux-canonical-model}
	The \emph{auxiliary canonical model} is the structure
	\[
	M^{\ast c}=
	(V^{\ast c},\mathcal{F}^{\ast c},\chi^{\ast c},\nu^{\ast c},
	\{N_a^{S,\ast c}\}_{a\in A})
	\]
	defined as follows.
	\begin{enumerate}
		\item The set of vertices is
		\[
		V^{\ast c}
		=
		\{(*,\Gamma)\mid \Gamma\in\mathsf{MCS}\}
		\ \cup\
		\{(a,[\Gamma]_a)\mid \Gamma\in\mathsf{MCS},\ a\in A\}.
		\]
		
		\item For each $\Gamma\in\mathsf{MCS}$, define the facet
		\[
		X_\Gamma^\ast := \{(*,\Gamma)\}\cup\{(a,[\Gamma]_a)\mid a\in A\}.
		\]
		The set of facets is
		\[
		\Fac(M^{\ast c})=\{X_\Gamma^\ast\mid \Gamma\in\mathsf{MCS}\},
		\]
		and $\mathcal{F}^{\ast c}$ is the downward closure of
		$\Fac(M^{\ast c})$.
		
		\item The colouring function is given by
		\[
		\chi^{\ast c}((*,\Gamma))=*,\qquad
		\chi^{\ast c}((a,[\Gamma]_a))=a.
		\]
		
		\item The valuation on facets is given by
		\[
		\nu^{\ast c}(X_\Gamma^\ast)=\{p\in\Prop\mid p\in\Gamma\}.
		\]
		
		\item For each vertex $v=(a,[\Gamma]_a)$ of colour $a\in A$, define
		\[
		N_a^{S,\ast c}(v)
		:=
		\{\widehat{\varphi}^{\,\ast}\subseteq\Fac(M^{\ast c})
		\mid S_a\varphi\in\Gamma\},
		\]
		where
		\[
		\widehat{\varphi}^{\,\ast}:=
		\{X_\Delta^\ast\in\Fac(M^{\ast c})\mid \varphi\in\Delta\}.
		\]
	\end{enumerate}
\end{definition}

\begin{lemma}[Local invariance of canonical secrecy formulas]\label{lem:well-defined-N}
	If $[\Gamma]_a=[\Delta]_a$, then for every formula $\varphi$,
	\[
	S_a\varphi\in\Gamma
	\quad\Longleftrightarrow\quad
	S_a\varphi\in\Delta.
	\]
	Consequently,
	\[
	N_a^{S,\ast c}((a,[\Gamma]_a))
	=
	N_a^{S,\ast c}((a,[\Delta]_a)).
	\]
\end{lemma}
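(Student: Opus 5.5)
The plan is to derive the first claim from the exact owner-locality principle, \cref{cor:exact-owner-locality}, together with the fact that $R_a$ is an equivalence relation (\cref{lem:Ra-equivalence}). The second claim then follows directly from the definition of $N_a^{S,\ast c}$.

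First I will turn the hypothesis $[\Gamma]_a=[\Delta]_a$ into a relational statement. Since $R_a$ is reflexive, $\Delta\in[\Delta]_a=[\Gamma]_a$, so $\Gamma R_a\Delta$. By symmetry we also have $\Delta R_a\Gamma$, so the situation is symmetric in $\Gamma$ and $\Delta$ and it suffices to prove one direction.

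Now suppose $S_a\varphi\in\Gamma$. The axiom \textbf{(S4)}, $S_a\varphi\rightarrow K_aS_a\varphi$, is a theorem and therefore lies in $\Gamma$. Closure under modus ponens gives $K_aS_a\varphi\in\Gamma$. Since $\Gamma R_a\Delta$, the definition of $R_a$ yields $S_a\varphi\in\Delta$. Exchanging the roles of $\Gamma$ and $\Delta$ gives the converse. The negative half of \cref{cor:exact-owner-locality} is not needed, because symmetry of $R_a$ already handles the other direction.

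For the consequence, the definition sets
\[
N_a^{S,\ast c}((a,[\Gamma]_a))=\{\widehat{\varphi}^{\,\ast}\mid S_a\varphi\in\Gamma\}.
\]
By the first part, the indexing condition $S_a\varphi\in\Gamma$ is equivalent to $S_a\varphi\in\Delta$. Hence the two families of sets coincide.

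The one point that needs care is that the vertex $(a,[\Gamma]_a)=(a,[\Delta]_a)$ is a single object, while the definition of $N_a^{S,\ast c}$ refers to a chosen representative $\Gamma$. The equality just proved shows that the result does not depend on this choice, so $N_a^{S,\ast c}$ is well defined as a function on $V_a$. I do not expect a real obstacle: the argument is short, and the only step that uses anything specific to secrecy is the appeal to \textbf{(S4)}.
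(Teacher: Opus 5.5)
Your proof is correct and follows the paper's argument. Like the paper, you read $[\Gamma]_a=[\Delta]_a$ as $\Gamma R_a\Delta$ and $\Delta R_a\Gamma$, and then carry $S_a\varphi$ across via \textbf{(S4)} and the definition of $R_a$. The paper additionally handles the case $S_a\varphi\notin\Gamma$ with the derived principle $\neg S_a\varphi\rightarrow K_a\neg S_a\varphi$ of \cref{prop:derived-secrecy-axioms}; as you observe, this step is redundant once the symmetry of $R_a$ is used, and the paper in fact uses both.
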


\begin{proof}
	Assume $[\Gamma]_a=[\Delta]_a$, i.e.\ $\Gamma R_a\Delta$ and
	$\Delta R_a\Gamma$.
	
	Let $\varphi$ be arbitrary.
	
	If $S_a\varphi\in\Gamma$, then by axiom \textbf{(S4)} we have
	\[
	K_aS_a\varphi\in\Gamma.
	\]
	Since $\Gamma R_a\Delta$, it follows that
	\[
	S_a\varphi\in\Delta.
	\]
	
	If $S_a\varphi\notin\Gamma$, then $\neg S_a\varphi\in\Gamma$ by maximal
	consistency. By \cref{prop:derived-secrecy-axioms},
	\[
	\vdash_{\mathsf{SSL}} \neg S_a\varphi \rightarrow K_a\neg S_a\varphi.
	\]
	Hence
	\[
	K_a\neg S_a\varphi\in\Gamma.
	\]
	Since $\Gamma R_a\Delta$, we obtain
	\[
	\neg S_a\varphi\in\Delta,
	\]
	and hence $S_a\varphi\notin\Delta$.
	
	By symmetry, the converse implications also hold. Therefore
	\[
	S_a\varphi\in\Gamma
	\quad\Longleftrightarrow\quad
	S_a\varphi\in\Delta.
	\]
	The equality of the neighborhoods follows immediately from
	\cref{def:aux-canonical-model}.
\end{proof}

\begin{lemma}[Canonical accessibility]\label{lem:aux-canonical-accessibility}
	For all $\Gamma,\Delta\in\mathsf{MCS}$ and all $a\in A$,
	\[
	X_\Gamma^\ast \sim_a X_\Delta^\ast
	\quad\Longleftrightarrow\quad
	\Gamma R_a\Delta.
	\]
\end{lemma}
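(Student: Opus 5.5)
The plan is to unfold the definition of $\sim_a$ (\cref{def:sim-a}, which applies verbatim to auxiliary-colour models) and then read off which vertex of colour $a$ sits in each canonical facet. By \cref{def:aux-canonical-model}, the facet $X_\Gamma^\ast$ contains exactly one vertex of colour $a$, namely $(a,[\Gamma]_a)$. The vertices $(b,[\Gamma]_b)$ for $b\neq a$ have colour $b$, and $(*,\Gamma)$ has colour $*$. Hence $v_a(X_\Gamma^\ast)=(a,[\Gamma]_a)$, and we obtain
\[
X_\Gamma^\ast\sim_a X_\Delta^\ast
\iff (a,[\Gamma]_a)=(a,[\Delta]_a)
\iff [\Gamma]_a=[\Delta]_a .
\]
Vertices are ordered pairs, so equality of the vertices is exactly equality of the second components.

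It remains to show that $[\Gamma]_a=[\Delta]_a$ holds iff $\Gamma R_a\Delta$. This is standard equivalence-class reasoning, using \cref{lem:Ra-equivalence}. For the left-to-right direction, reflexivity gives $\Delta\in[\Delta]_a=[\Gamma]_a$, so $\Gamma R_a\Delta$. For the converse, assume $\Gamma R_a\Delta$. If $\Gamma R_a\Theta$, then symmetry gives $\Delta R_a\Gamma$ and transitivity gives $\Delta R_a\Theta$. If $\Delta R_a\Theta$, transitivity gives $\Gamma R_a\Theta$ directly. So the two classes coincide.

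One point needs a brief check: the facets of $\mathcal{F}^{\ast c}$ are exactly the sets $X_\Gamma^\ast$, so that $v_a$ is well defined on them. Each $X_\Gamma^\ast$ is maximal, because the vertex $(*,\Gamma)$ determines $\Gamma$. Consequently $X_\Gamma^\ast\subseteq X_{\Gamma'}^\ast$ forces $\Gamma=\Gamma'$. There is no real obstacle here; the only care needed is to note that this identification of facets with maximally consistent sets is what makes the map $\Gamma\mapsto X_\Gamma^\ast$ injective, so that the accessibility relation on facets transfers faithfully to $R_a$.
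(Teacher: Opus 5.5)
Your proof is correct and follows the paper's argument. You compute $v_a(X_\Gamma^\ast)=(a,[\Gamma]_a)$, reduce $X_\Gamma^\ast\sim_a X_\Delta^\ast$ to $[\Gamma]_a=[\Delta]_a$, and then use that $R_a$ is an equivalence relation (\cref{lem:Ra-equivalence}); you also spell out the equivalence-class step and the check that each $X_\Gamma^\ast$ is maximal, both of which the paper leaves implicit.
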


\begin{proof}
	By definition,
	\[
	X_\Gamma^\ast \sim_a X_\Delta^\ast
	\quad\Longleftrightarrow\quad
	(a,[\Gamma]_a)=(a,[\Delta]_a),
	\]
	which is equivalent to
	\[
	[\Gamma]_a=[\Delta]_a.
	\]
	Since $R_a$ is an equivalence relation by \cref{lem:Ra-equivalence}, this
	holds iff $\Gamma R_a\Delta$.
\end{proof}

\begin{lemma}\label{lem:aux-canonical-ssm}
	The auxiliary canonical structure $M^{\ast c}$ is an auxiliary-colour
	simplicial secrecy model.
\end{lemma}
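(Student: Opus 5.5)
We verify the clauses of \cref{def:aux-ssm} one by one, leaving \textbf{(SN)} for last since it is the only clause that uses the secrecy axioms.

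\emph{Chromatic complex.} $\mathcal{F}^{\ast c}$ is by definition the downward closure of the sets $X_\Gamma^\ast$, so it is a non-empty (since $\mathsf{MCS}\neq\emptyset$ by Lindenbaum, $\mathsf{SSL}$ being consistent by \cref{thm:ssl-sound}), downward closed family of finite non-empty sets. For the colouring, each $X_\Gamma^\ast$ contains exactly one vertex of colour $*$ and one vertex $(a,[\Gamma]_a)$ of each colour $a\in A$. Hence $\chi^{\ast c}$ is injective on each $X_\Gamma^\ast$, and so on every face, and $\chi^{\ast c}[X_\Gamma^\ast]=A\cup\{*\}$.

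\emph{Facets.} We must check that the facets of the downward closure are exactly the $X_\Gamma^\ast$. Every face lies inside some $X_\Gamma^\ast$, so it suffices to show that no $X_\Gamma^\ast$ is properly contained in some $X_\Delta^\ast$. If $X_\Gamma^\ast\subseteq X_\Delta^\ast$, then comparing $*$-vertices gives $(*,\Gamma)=(*,\Delta)$, so $\Gamma=\Delta$. This is where the auxiliary colour does its work: it separates MCSs whose classes agree for every agent.

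\emph{Remaining structural clauses.} Every vertex lies in a facet: $(*,\Gamma)$ and $(a,[\Gamma]_a)$ both lie in $X_\Gamma^\ast$. The valuation is well defined because $\Gamma\mapsto X_\Gamma^\ast$ is a bijection onto the facets. The map $N_a^{S,\ast c}$ is well defined on vertices of colour $a$ by \cref{lem:well-defined-N}.

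\emph{Condition \textbf{(SN)}.} Let $v=(a,[\Gamma]_a)$ and $U\in N_a^{S,\ast c}(v)$, so $U=\widehat{\varphi}^{\,\ast}$ for some $\varphi$ with $S_a\varphi\in\Gamma$. Let $X\in\St(v)$ and $b\neq a$. Then $X=X_\Theta^\ast$ with $[\Theta]_a=[\Gamma]_a$, and by \cref{lem:well-defined-N} we get $S_a\varphi\in\Theta$. Axiom \textbf{(S2)} then gives $\neg K_b\varphi\in\Theta$. By \cref{lem:canonical-existence} there is $\Delta$ with $\Theta R_b\Delta$ and $\neg\varphi\in\Delta$. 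By \cref{lem:aux-canonical-accessibility} we have $X_\Theta^\ast\sim_b X_\Delta^\ast$, and $X_\Delta^\ast\notin\widehat{\varphi}^{\,\ast}$ since $\varphi\notin\Delta$.

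The one subtle point is that $U$ may be representable by several formulas. This is harmless: any representing $\varphi$ with $S_a\varphi\in\Gamma$ suffices, and $X_\Delta^\ast\notin U$ depends only on the set $U$. So the witness does not depend on which representative we choose.

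I expect the main obstacle to be the facet check, and in particular making sure that distinct MCSs give distinct maximal faces. The $*$-vertex settles it directly.
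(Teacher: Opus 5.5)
Your proposal is correct and follows the paper's proof. The \textbf{(SN)} verification is the same chain: owner-local invariance from \cref{lem:well-defined-N}, then \textbf{(S2)}, \cref{lem:canonical-existence}, and \cref{lem:aux-canonical-accessibility}. Beyond that, you spell out structural clauses the paper simply calls immediate: non-emptiness via consistency of $\mathsf{SSL}$, the fact that the facets of the downward closure are exactly the $X_\Gamma^\ast$ because the $*$-vertex separates them, and well-definedness of $\nu^{\ast c}$ and $N_a^{S,\ast c}$. This extra detail is welcome but not essential.
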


\begin{proof}
	The underlying simplicial part is immediate: every facet $X_\Gamma^\ast$
	contains exactly one vertex of each colour in $A\cup\{*\}$, so
	$(V^{\ast c},\mathcal{F}^{\ast c},\chi^{\ast c},\nu^{\ast c})$ is an
	$(A\cup\{*\})$-chromatic simplicial epistemic model.
	
	It remains only to verify \textbf{(SN)}.
	
	Let $v=(a,[\Gamma]_a)\in V^{\ast c}$ and let $U\in N_a^{S,\ast c}(v)$. By
	\cref{def:aux-canonical-model}, there exists a formula $\varphi$ such that
	\[
	U=\widehat{\varphi}^{\,\ast}
	\qquad\text{and}\qquad
	S_a\varphi\in\Gamma.
	\]
	
	Let $X_\Delta^\ast\in\St(v)$ and let $b\in A\setminus\{a\}$. Since
	$[\Delta]_a=[\Gamma]_a$, by \cref{lem:well-defined-N} we have
	\[
	S_a\varphi\in\Delta.
	\]
	By axiom \textbf{(S2)},
	\[
	\neg K_b\varphi\in\Delta.
	\]
	By \cref{lem:canonical-existence}, there exists some
	$\Theta\in\mathsf{MCS}$ such that
	\[
	\Delta R_b\Theta
	\qquad\text{and}\qquad
	\neg\varphi\in\Theta.
	\]
	By \cref{lem:aux-canonical-accessibility},
	\[
	X_\Delta^\ast \sim_b X_\Theta^\ast.
	\]
	Since $\neg\varphi\in\Theta$, we have $\varphi\notin\Theta$, hence
	\[
	X_\Theta^\ast\notin \widehat{\varphi}^{\,\ast}=U.
	\]
	Thus \textbf{(SN)} is satisfied.
\end{proof}

\subsection{Truth lemma}

\begin{lemma}[Extensionality lemma]\label{lem:extensionality}
	If
	\[
	\widehat{\varphi}^{\,\ast}=\widehat{\psi}^{\,\ast},
	\]
	then
	\[
	\vdash_{\mathsf{SSL}}\varphi\leftrightarrow\psi.
	\]
\end{lemma}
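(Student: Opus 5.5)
We argue by contraposition, using the standard fact that maximally consistent sets separate non-equivalent formulas. Suppose
\[
\not\vdash_{\mathsf{SSL}}\varphi\leftrightarrow\psi.
\]
We will produce some $\Delta\in\mathsf{MCS}$ that contains exactly one of $\varphi,\psi$. Then $X_\Delta^\ast$ lies in exactly one of $\widehat{\varphi}^{\,\ast}$ and $\widehat{\psi}^{\,\ast}$, so these sets differ.

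First, since $\varphi\leftrightarrow\psi$ is not a theorem, the singleton $\{\neg(\varphi\leftrightarrow\psi)\}$ is $\mathsf{SSL}$-consistent. Indeed, if $\vdash_{\mathsf{SSL}}\neg(\varphi\leftrightarrow\psi)\rightarrow\bot$, then propositional reasoning with (A1) and Modus Ponens gives $\vdash_{\mathsf{SSL}}\varphi\leftrightarrow\psi$, a contradiction. By the Lindenbaum lemma cited before the standing assumption, we extend this set to some $\Delta\in\mathsf{MCS}$.

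Next we use the standard closure properties of maximally consistent sets: they contain all theorems, are closed under Modus Ponens, and contain exactly one of $\chi$ and $\neg\chi$ for every formula $\chi$. Since $\neg(\varphi\leftrightarrow\psi)\in\Delta$, we have $\varphi\leftrightarrow\psi\notin\Delta$. We then split on whether $\varphi\in\Delta$.
\begin{itemize}
\item If $\varphi\in\Delta$ and $\psi\in\Delta$, then the tautology $\varphi\rightarrow(\psi\rightarrow(\varphi\leftrightarrow\psi))$ yields $\varphi\leftrightarrow\psi\in\Delta$, which is impossible.
\item If $\neg\varphi\in\Delta$ and $\neg\psi\in\Delta$, the analogous tautology for the negations yields the same contradiction.
\end{itemize}
Hence exactly one of $\varphi,\psi$ belongs to $\Delta$. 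By the definition of $\widehat{\cdot}^{\,\ast}$ in \cref{def:aux-canonical-model}, $X_\Delta^\ast$ then belongs to exactly one of $\widehat{\varphi}^{\,\ast}$ and $\widehat{\psi}^{\,\ast}$.

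We must also check that $X_\Delta^\ast$ determines $\Delta$, so that membership of $X_\Delta^\ast$ in $\widehat{\varphi}^{\,\ast}$ is unambiguous. This holds because $X_\Delta^\ast$ contains the vertex $(*,\Delta)$, so $X_\Gamma^\ast=X_\Delta^\ast$ implies $\Gamma=\Delta$. This injectivity is exactly what the auxiliary colour $*$ provides.

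There is no genuine obstacle here. The only point needing care is this injectivity of $\Delta\mapsto X_\Delta^\ast$, which the $*$-vertex guarantees. Without it, distinct maximally consistent sets could collapse to the same facet, and the lemma could fail.
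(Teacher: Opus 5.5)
Your proposal is correct and follows essentially the same argument as the paper: extend $\neg(\varphi\leftrightarrow\psi)$ to a maximally consistent set, rule out the cases where both formulas or both negations belong to it, and conclude that the corresponding facet lies in exactly one of $\widehat{\varphi}^{\,\ast}$ and $\widehat{\psi}^{\,\ast}$. Your explicit remark that $\Delta\mapsto X_\Delta^\ast$ is injective, because of the vertex $(*,\Delta)$, makes precise a step the paper's proof uses only implicitly.
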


\begin{proof}
	Assume for contradiction that
	\[
	\not\vdash_{\mathsf{SSL}}\varphi\leftrightarrow\psi.
	\]
	Then
	\[
	\neg(\varphi\leftrightarrow\psi)
	\]
	is $\mathsf{SSL}$-consistent, and can be extended to some
	$\Gamma\in\mathsf{MCS}$.
	
	Since $\Gamma$ is maximally consistent, for each of $\varphi,\psi$, either
	the formula or its negation belongs to $\Gamma$.
	
	The cases $\varphi,\psi\in\Gamma$ and $\neg\varphi,\neg\psi\in\Gamma$ are
	impossible, because the propositional tautologies
	\[
	\varphi\rightarrow(\psi\rightarrow(\varphi\leftrightarrow\psi))
	\]
	and
	\[
	\neg\varphi\rightarrow(\neg\psi\rightarrow(\varphi\leftrightarrow\psi))
	\]
	would then imply $\varphi\leftrightarrow\psi\in\Gamma$, contradicting
	$\neg(\varphi\leftrightarrow\psi)\in\Gamma$.
	
	Hence exactly one of the following holds:
	\[
	\varphi\in\Gamma \text{ and } \neg\psi\in\Gamma,
	\qquad\text{or}\qquad
	\psi\in\Gamma \text{ and } \neg\varphi\in\Gamma.
	\]
	In the first case,
	\[
	X_\Gamma^\ast\in\widehat{\varphi}^{\,\ast}
	\qquad\text{but}\qquad
	X_\Gamma^\ast\notin\widehat{\psi}^{\,\ast},
	\]
	and in the second case,
	\[
	X_\Gamma^\ast\in\widehat{\psi}^{\,\ast}
	\qquad\text{but}\qquad
	X_\Gamma^\ast\notin\widehat{\varphi}^{\,\ast}.
	\]
	Both contradict
	\[
	\widehat{\varphi}^{\,\ast}=\widehat{\psi}^{\,\ast}.
	\]
	Therefore
	\[
	\vdash_{\mathsf{SSL}}\varphi\leftrightarrow\psi.
	\]
\end{proof}

\begin{lemma}[Truth lemma]\label{lem:truth}
	For every formula $\varphi\in\Lang_{KS}$ and every $\Gamma\in\mathsf{MCS}$,
	\[
	M^{\ast c},X_\Gamma^\ast\forces \varphi
	\quad\Longleftrightarrow\quad
	\varphi\in\Gamma.
	\]
\end{lemma}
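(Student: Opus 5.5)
We argue by induction on the structure of $\varphi$, simultaneously for all $\Gamma\in\mathsf{MCS}$. The induction hypothesis for a subformula $\psi$ gives, for every $\Delta$, that $M^{\ast c},X_\Delta^\ast\forces\psi$ iff $\psi\in\Delta$. In other words, it gives the identity
\[
\True{\psi}_{M^{\ast c}}=\widehat{\psi}^{\,\ast},
\]
which is the form in which the hypothesis will be used in the modal cases.

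The atomic case is immediate from the definition of $\nu^{\ast c}$. The Boolean cases follow by the usual properties of maximally consistent sets: negation completeness, and closure under conjunction and its projections.

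For $K_a\psi$ we take the standard route.
\begin{itemize}
\item Suppose $K_a\psi\in\Gamma$ and $X_\Gamma^\ast\sim_a X_\Delta^\ast$. By \cref{lem:aux-canonical-accessibility} we get $\Gamma R_a\Delta$, hence $\psi\in\Delta$, and the induction hypothesis gives $M^{\ast c},X_\Delta^\ast\forces\psi$.
\item Suppose $K_a\psi\notin\Gamma$. Then $\neg K_a\psi\in\Gamma$, and \cref{lem:canonical-existence} yields $\Delta$ with $\Gamma R_a\Delta$ and $\neg\psi\in\Delta$. By \cref{lem:aux-canonical-accessibility} and the induction hypothesis, $\Delta$ witnesses $M^{\ast c},X_\Gamma^\ast\not\forces K_a\psi$.
\end{itemize}

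The secrecy case $S_a\psi$ is the main step. Write $v=v_a(X_\Gamma^\ast)=(a,[\Gamma]_a)$. By the $K_a$ case already established (applied to $K_a\psi$, whose only proper subformula is $\psi$) and by the identity above, the statement $M^{\ast c},X_\Gamma^\ast\forces S_a\psi$ is equivalent to
\[
K_a\psi\in\Gamma\quad\text{and}\quad\widehat{\psi}^{\,\ast}\in N_a^{S,\ast c}(v).
\]
\begin{itemize}
\item Suppose $S_a\psi\in\Gamma$. Then $K_a\psi\in\Gamma$ by \textbf{(S1)}. Also $\widehat{\psi}^{\,\ast}\in N_a^{S,\ast c}(v)$ directly by \cref{def:aux-canonical-model}, taking $\Gamma$ itself as the representative of $[\Gamma]_a$.
\item For the converse, suppose $\widehat{\psi}^{\,\ast}\in N_a^{S,\ast c}(v)$. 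By definition there are a formula $\varphi$ and some $\Gamma'$ with $[\Gamma']_a=[\Gamma]_a$, $S_a\varphi\in\Gamma'$ and $\widehat{\varphi}^{\,\ast}=\widehat{\psi}^{\,\ast}$. Since the vertex $v$ may be named by different representatives, we first transfer to $\Gamma$: \cref{lem:well-defined-N} gives $S_a\varphi\in\Gamma$. Next, \cref{lem:extensionality} gives $\vdash_{\mathsf{SSL}}\varphi\leftrightarrow\psi$. Rule (R3) then gives $\vdash_{\mathsf{SSL}}S_a\varphi\leftrightarrow S_a\psi$, so $S_a\psi\in\Gamma$.
\end{itemize}
The converse direction does not use the knowledge conjunct at all; that conjunct only matters for soundness of the left-to-right direction.

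The main obstacle is this converse direction. The neighborhood in \cref{def:aux-canonical-model} is indexed by formulas whose truth sets merely coincide with $\widehat{\psi}^{\,\ast}$, and the vertex's neighborhood is defined from a representative of the class rather than from $\Gamma$. Closing the gap requires exactly the combination above: \cref{lem:well-defined-N} for the representative, and \cref{lem:extensionality} together with (R3) for the mismatch between $\varphi$ and $\psi$.
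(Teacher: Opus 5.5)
Your proposal is correct and matches the paper's proof step for step. Both use the same structural induction, the standard $K_a$ case via \cref{lem:aux-canonical-accessibility} and \cref{lem:canonical-existence}, and an $S_a$ case that uses \textbf{(S1)} in one direction and \cref{lem:extensionality} plus rule (R3) in the other. The only difference is that you explicitly invoke \cref{lem:well-defined-N} to pass from an arbitrary representative of $[\Gamma]_a$ to $\Gamma$ itself. The paper leaves that step implicit when it reads $S_a\theta\in\Gamma$ directly off the definition of $N_a^{S,\ast c}$.
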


\begin{proof}
	By induction on the structure of $\varphi$.
	
	\medskip
	\noindent
	\textbf{Case} $\varphi=p$.
	By definition of $\nu^{\ast c}$,
	\[
	M^{\ast c},X_\Gamma^\ast\forces p
	\quad\Longleftrightarrow\quad
	p\in \nu^{\ast c}(X_\Gamma^\ast)
	\quad\Longleftrightarrow\quad
	p\in\Gamma.
	\]
	
	\medskip
	\noindent
	\textbf{Boolean cases.}
	The cases for $\neg$ and $\land$ are immediate by induction and maximal
	consistency.
	
	\medskip
	\noindent
	\textbf{Case} $\varphi=K_a\psi$.
	
	Assume first that $K_a\psi\in\Gamma$. Let $X_\Delta^\ast\in\Fac(M^{\ast c})$
	with $X_\Gamma^\ast\sim_a X_\Delta^\ast$. By
	\cref{lem:aux-canonical-accessibility}, $\Gamma R_a\Delta$. Hence
	$\psi\in\Delta$. By the induction hypothesis,
	\[
	M^{\ast c},X_\Delta^\ast\forces \psi.
	\]
	Since $X_\Delta^\ast$ was arbitrary,
	\[
	M^{\ast c},X_\Gamma^\ast\forces K_a\psi.
	\]
	
	Conversely, assume $K_a\psi\notin\Gamma$. Then $\neg K_a\psi\in\Gamma$. By
	\cref{lem:canonical-existence}, there exists $\Delta\in\mathsf{MCS}$ such
	that
	\[
	\Gamma R_a\Delta
	\qquad\text{and}\qquad
	\neg\psi\in\Delta.
	\]
	By \cref{lem:aux-canonical-accessibility},
	\[
	X_\Gamma^\ast\sim_a X_\Delta^\ast.
	\]
	By the induction hypothesis,
	\[
	M^{\ast c},X_\Delta^\ast\not\forces \psi.
	\]
	Hence
	\[
	M^{\ast c},X_\Gamma^\ast\not\forces K_a\psi.
	\]
	
	Therefore
	\[
	M^{\ast c},X_\Gamma^\ast\forces K_a\psi
	\quad\Longleftrightarrow\quad
	K_a\psi\in\Gamma.
	\]
	
	\medskip
	\noindent
	\textbf{Case} $\varphi=S_a\psi$.
	
	Assume first that $S_a\psi\in\Gamma$. By axiom \textbf{(S1)},
	\[
	K_a\psi\in\Gamma.
	\]
	By the $K_a$-case already proved,
	\[
	M^{\ast c},X_\Gamma^\ast\forces K_a\psi.
	\]
	By the induction hypothesis, the truth set of $\psi$ in $M^{\ast c}$ is
	exactly
	\[
	\llbracket\psi\rrbracket_{M^{\ast c}}=\widehat{\psi}^{\,\ast}.
	\]
	Since $S_a\psi\in\Gamma$, the definition of $N_a^{S,\ast c}$ yields
	\[
	\widehat{\psi}^{\,\ast}\in N_a^{S,\ast c}(v_a(X_\Gamma^\ast)).
	\]
	Hence
	\[
	M^{\ast c},X_\Gamma^\ast\forces S_a\psi.
	\]
	
	Conversely, assume
	\[
	M^{\ast c},X_\Gamma^\ast\forces S_a\psi.
	\]
	Then
	\[
	M^{\ast c},X_\Gamma^\ast\forces K_a\psi
	\qquad\text{and}\qquad
	\llbracket\psi\rrbracket_{M^{\ast c}}
	\in N_a^{S,\ast c}(v_a(X_\Gamma^\ast)).
	\]
	By the $K_a$-case, $K_a\psi\in\Gamma$. By the induction hypothesis,
	\[
	\llbracket\psi\rrbracket_{M^{\ast c}}=\widehat{\psi}^{\,\ast}.
	\]
	Since $\widehat{\psi}^{\,\ast}\in N_a^{S,\ast c}(v_a(X_\Gamma^\ast))$, the
	definition of $N_a^{S,\ast c}$ implies that there exists some formula
	$\theta$ such that
	\[
	S_a\theta\in\Gamma
	\qquad\text{and}\qquad
	\widehat{\theta}^{\,\ast}=\widehat{\psi}^{\,\ast}.
	\]
	By \cref{lem:extensionality},
	\[
	\vdash_{\mathsf{SSL}}\theta\leftrightarrow\psi.
	\]
	By the rule of Replacement of Equivalents for Secrecy,
	\[
	\vdash_{\mathsf{SSL}}S_a\theta\leftrightarrow S_a\psi.
	\]
	Since $S_a\theta\in\Gamma$ and $\Gamma$ is maximally consistent, it follows
	that
	\[
	S_a\psi\in\Gamma.
	\]
	
	This completes the induction.
\end{proof}

\subsection{Share representation theorem}

We now eliminate the auxiliary colour and return to pure $A$-chromatic models.
The representation step is in the spirit of geometric translation techniques
used elsewhere in simplicial epistemic logic, where semantic information is
redistributed over chromatic structure rather than kept in an external label or
world component~\cite{goubault2021simplicial,goubault2022kb4n,
	randrianomentsoa2023impure}.

\begin{definition}[Share model]\label{def:share-model}
	Let
	\[
	M^\ast=(V,\mathcal{F},\chi,\nu,\{N_a^S\}_{a\in A})
	\]
	be an auxiliary-colour simplicial secrecy model over $A$.
	
	Let
	\[
	G:=\mathbb{Z}^{(\Fac(M^\ast))}
	\]
	be the free abelian group on $\Fac(M^\ast)$, and let
	\[
	\iota:\Fac(M^\ast)\to G
	\]
	be the canonical injection sending each facet $X$ to the corresponding basis
	vector $e_X$.
	
	For each facet $X\in\Fac(M^\ast)$ and each function $\sigma:A\to G$
	satisfying
	\[
	\sum_{a\in A}\sigma(a)=\iota(X),
	\]
	define the $A$-coloured facet
	\[
	X^\sigma:=\{(a,v_a(X),\sigma(a))\mid a\in A\}.
	\]
	
	The \emph{share model} associated with $M^\ast$ is the structure
	\[
	\mathsf{Sh}(M^\ast)=
	(V^{\mathrm{sh}},\mathcal{F}^{\mathrm{sh}},\chi^{\mathrm{sh}},
	\nu^{\mathrm{sh}},\{N_a^{S,\mathrm{sh}}\}_{a\in A})
	\]
	defined as follows.
	\begin{enumerate}
		\item The set of facets is
		\[
		\Fac(\mathsf{Sh}(M^\ast))
		=
		\{X^\sigma \mid X\in\Fac(M^\ast),\ \sigma:A\to G,\
		\sum_{a\in A}\sigma(a)=\iota(X)\}.
		\]
		The full face set $\mathcal{F}^{\mathrm{sh}}$ is the downward closure of
		$\Fac(\mathsf{Sh}(M^\ast))$.
		
		\item The set of vertices is
		\[
		V^{\mathrm{sh}}
		=
		\{(a,v_a(X),\sigma(a)) \mid X^\sigma\in\Fac(\mathsf{Sh}(M^\ast)),\ a\in A\}.
		\]
		
		\item The colouring function is given by
		\[
		\chi^{\mathrm{sh}}((a,v,g))=a.
		\]
		
		\item The valuation on facets is given by
		\[
		\nu^{\mathrm{sh}}(X^\sigma)=\nu(X).
		\]
		
		\item For each $U\subseteq\Fac(M^\ast)$, define its \emph{lift}
		\[
		U^\uparrow
		:=
		\{Y^\tau\in\Fac(\mathsf{Sh}(M^\ast))\mid Y\in U\}.
		\]
		
		\item For each vertex $(a,v,g)\in V^{\mathrm{sh}}$, define
		\[
		N_a^{S,\mathrm{sh}}((a,v,g))
		:=
		\{U^\uparrow \mid U\in N_a^S(v)\}.
		\]
	\end{enumerate}
\end{definition}

\begin{lemma}[Uniqueness of share representation]\label{lem:share-unique}
	Let $X,Y\in\Fac(M^\ast)$, and let $\sigma,\tau:A\to G$ satisfy
	\[
	\sum_{a\in A}\sigma(a)=\iota(X),
	\qquad
	\sum_{a\in A}\tau(a)=\iota(Y).
	\]
	If
	\[
	X^\sigma=Y^\tau,
	\]
	then
	\[
	X=Y
	\qquad\text{and}\qquad
	\sigma=\tau.
	\]
\end{lemma}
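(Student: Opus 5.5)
The idea is to recover $\sigma$ and $X$ from the set $X^\sigma$ itself, using the colour tags in its triples. The key observation is that $X^\sigma=\{(a,v_a(X),\sigma(a))\mid a\in A\}$ contains, for each colour $a\in A$, exactly one triple whose first component is $a$. Assume $X^\sigma=Y^\tau$. For a fixed $a\in A$, the triple $(a,v_a(X),\sigma(a))$ lies in $Y^\tau$. By the same observation applied to $Y^\tau$, it must equal $(a,v_a(Y),\tau(a))$. Comparing components gives
\[
v_a(X)=v_a(Y)
\qquad\text{and}\qquad
\sigma(a)=\tau(a)
\qquad\text{for every } a\in A.
\]
Hence $\sigma=\tau$ as functions $A\to G$.

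Next we recover the facet. Summing over $A$ gives
\[
\iota(X)=\sum_{a\in A}\sigma(a)=\sum_{a\in A}\tau(a)=\iota(Y).
\]
Since $\iota$ sends each facet to its own basis vector $e_X$ of the free abelian group $G$, it is injective, so $X=Y$.

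Two remarks on the argument. Recovering $X$ really requires the group-theoretic encoding. Agreement of all $A$-coloured vertices does not determine a facet of $M^\ast$, because two facets may differ only in their $*$-vertex. That gap is exactly what the constraint $\sum_a\sigma(a)=\iota(X)$ closes. The only care needed is in the first step: we use that the first coordinate of each triple records its colour, so matching elements of the two sets can be done colour by colour rather than by an arbitrary bijection. No step is a real obstacle.
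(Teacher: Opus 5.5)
Your proposal is correct and follows essentially the same argument as the paper: compare the triples colour by colour to get $v_a(X)=v_a(Y)$ and $\sigma=\tau$, then sum to get $\iota(X)=\iota(Y)$ and use injectivity of $\iota$ to conclude $X=Y$. Your added remark is also accurate: the $*$-vertex is why the $A$-coloured vertices alone cannot recover $X$, and the sum constraint is what fixes this.
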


\begin{proof}
	Since both $X^\sigma$ and $Y^\tau$ are facets containing exactly one vertex
	of each colour $a\in A$, for every $a\in A$ we have
	\[
	(a,v_a(X),\sigma(a))=(a,v_a(Y),\tau(a)).
	\]
	Hence
	\[
	v_a(X)=v_a(Y)
	\qquad\text{and}\qquad
	\sigma(a)=\tau(a)
	\]
	for all $a\in A$. Therefore $\sigma=\tau$. It follows that
	\[
	\iota(X)=\sum_{a\in A}\sigma(a)=\sum_{a\in A}\tau(a)=\iota(Y).
	\]
	Since $\iota$ is injective, we obtain $X=Y$.
\end{proof}

\begin{lemma}[Injectivity of lifting]\label{lem:lift-injective}
	For all $U,W\subseteq\Fac(M^\ast)$,
	\[
	U^\uparrow=W^\uparrow
	\quad\Longrightarrow\quad
	U=W.
	\]
\end{lemma}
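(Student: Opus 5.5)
The plan is to show that every facet $Y\in\Fac(M^\ast)$ has at least one lift, i.e. that for each $Y$ there is some $\tau:A\to G$ with $\sum_{a\in A}\tau(a)=\iota(Y)$. Granting this, $U$ can be recovered from $U^\uparrow$ as the set of underlying facets of its members, and injectivity follows.

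First I establish surjectivity of the projection. Since $A$ is non-empty, fix some $a_0\in A$. Define $\tau(a_0)=\iota(Y)$ and $\tau(a)=0$ for $a\neq a_0$. Then $\sum_{a}\tau(a)=\iota(Y)$, so $Y^\tau\in\Fac(\mathsf{Sh}(M^\ast))$.

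Next I check that the projection $Y^\tau\mapsto Y$ is well defined on $\Fac(\mathsf{Sh}(M^\ast))$. This is exactly \cref{lem:share-unique}: if $X^\sigma=Y^\tau$ then $X=Y$. So each share facet has a unique underlying facet of $M^\ast$. Consequently, for any $U\subseteq\Fac(M^\ast)$ and any share facet $Y^\tau$, we have $Y^\tau\in U^\uparrow$ iff $Y\in U$. In one direction this is immediate from the definition of the lift. In the other, membership gives $Y^\tau=Z^\rho$ for some $Z\in U$, and \cref{lem:share-unique} yields $Y=Z$.

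Finally, suppose $U^\uparrow=W^\uparrow$ and let $Y\in U$. Choose $\tau$ as in the first step. Then $Y^\tau\in U^\uparrow=W^\uparrow$, so $Y\in W$ by the previous step. By symmetry $W\subseteq U$, and hence $U=W$. The only point needing care is the second step: without \cref{lem:share-unique}, a lifted facet coming from $W$ could coincide with a lift of a different facet. Since that lemma is already available, I expect no real obstacle. The existence of lifts uses only $A\neq\emptyset$ and not $|A|\ge 2$.
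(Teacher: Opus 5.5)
Your proposal is correct and follows essentially the same route as the paper. Both arguments build the canonical lift by concentrating $\iota(X)$ on a single agent $a_0$, use \cref{lem:share-unique} to get $X\in Z$ iff that lift lies in $Z^\uparrow$, and conclude $U=W$; the paper chains the biconditionals, while you argue the two inclusions separately.
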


\begin{proof}
	Suppose $U^\uparrow=W^\uparrow$. Let $X\in\Fac(M^\ast)$ be arbitrary. Since
	$A\neq\emptyset$, choose some $a_0\in A$ and define $\sigma_X:A\to G$ by
	\[
	\sigma_X(a_0)=\iota(X),
	\qquad
	\sigma_X(c)=0 \text{ for all } c\in A\setminus\{a_0\}.
	\]
	Then
	\[
	\sum_{a\in A}\sigma_X(a)=\iota(X),
	\]
	so
	\[
	X^{\sigma_X}\in\Fac(\mathsf{Sh}(M^\ast)).
	\]
	
	We claim that for every $Z\subseteq\Fac(M^\ast)$,
	\[
	X\in Z
	\quad\Longleftrightarrow\quad
	X^{\sigma_X}\in Z^\uparrow.
	\]
	The left-to-right direction is immediate from the definition of $Z^\uparrow$.
	For the right-to-left direction, if
	\[
	X^{\sigma_X}\in Z^\uparrow,
	\]
	then there exist $Y\in Z$ and $\tau:A\to G$ such that
	\[
	Y^\tau\in\Fac(\mathsf{Sh}(M^\ast))
	\qquad\text{and}\qquad
	X^{\sigma_X}=Y^\tau.
	\]
	By \cref{lem:share-unique}, it follows that $X=Y$, hence $X\in Z$.
	
	Applying this claim with $Z=U$ and $Z=W$, we obtain
	\[
	X\in U
	\quad\Longleftrightarrow\quad
	X^{\sigma_X}\in U^\uparrow
	\quad\Longleftrightarrow\quad
	X^{\sigma_X}\in W^\uparrow
	\quad\Longleftrightarrow\quad
	X\in W.
	\]
	Since $X$ was arbitrary, $U=W$.
\end{proof}

\begin{lemma}[Share-completion lemma]\label{lem:share-completion}
	Let $X^\sigma\in\Fac(\mathsf{Sh}(M^\ast))$, let $a\in A$, and let
	$Y\in\Fac(M^\ast)$ be such that
	\[
	X\sim_a Y
	\]
	in the auxiliary model $M^\ast$. Then there exists a function $\tau:A\to G$
	such that
	\[
	Y^\tau\in\Fac(\mathsf{Sh}(M^\ast))
	\qquad\text{and}\qquad
	\tau(a)=\sigma(a).
	\]
\end{lemma}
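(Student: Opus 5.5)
We construct $\tau$ explicitly by keeping the $a$-share fixed and dumping the correction onto a second agent. This is exactly where the standing assumption $|A|\ge 2$ is used. Fix some agent $b\in A\setminus\{a\}$, which exists since $|A|\ge 2$. Define $\tau:A\to G$ by
\[
\tau(a)=\sigma(a),\qquad
\tau(b)=\iota(Y)-\sigma(a),\qquad
\tau(c)=0\ \text{ for all } c\in A\setminus\{a,b\}.
\]
Since $G$ is an abelian group, subtraction is available and every element of $G$ is an admissible share value. The only properties of $G$ needed are that it is a group and that $\iota$ is injective, and both hold by construction.

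Next we verify the membership condition. By construction,
\[
\sum_{c\in A}\tau(c)=\sigma(a)+\bigl(\iota(Y)-\sigma(a)\bigr)+0=\iota(Y).
\]
By the definition of $\Fac(\mathsf{Sh}(M^\ast))$ in \cref{def:share-model}, this sum condition is the only requirement for $Y^\tau$ to be a facet of the share model. Hence $Y^\tau\in\Fac(\mathsf{Sh}(M^\ast))$, and $\tau(a)=\sigma(a)$ holds by definition.

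We also note that the $a$-vertex of $Y^\tau$ is $(a,v_a(Y),\tau(a))=(a,v_a(X),\sigma(a))$, using $v_a(X)=v_a(Y)$, which is what $X\sim_a Y$ means in $M^\ast$. So $Y^\tau$ shares the $a$-coloured vertex of $X^\sigma$. This is the form in which the lemma will be used later, namely to show that $\sim_a$ in the share model projects onto $\sim_a$ in $M^\ast$ and that every $a$-successor lifts.

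No step here is a real obstacle. The only subtle point is the need for a second agent $b$ to absorb the difference $\iota(Y)-\sigma(a)$. When $|A|=1$, the share $\sigma(a)$ equals $\iota(X)$ and cannot equal $\iota(Y)$ for $Y\neq X$, so the construction fails. This is presumably the role of $|A|\ge 2$ discussed in \cref{rem:one-agent-role}.
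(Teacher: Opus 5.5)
Your proposal is correct and is essentially identical to the paper's proof. The paper also fixes $b\in A\setminus\{a\}$ (using $|A|\ge 2$), sets $\tau(a)=\sigma(a)$, $\tau(b)=\iota(Y)-\sigma(a)$ and $\tau(c)=0$ for all other $c$, and checks that $\sum_{c\in A}\tau(c)=\iota(Y)$.
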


\begin{proof}
	Since $|A|\ge 2$, choose some $b\in A\setminus\{a\}$. Define
	$\tau:A\to G$ by
	\[
	\tau(a)=\sigma(a),\qquad
	\tau(b)=\iota(Y)-\sigma(a),\qquad
	\tau(c)=0 \text{ for } c\in A\setminus\{a,b\}.
	\]
	Then
	\[
	\sum_{c\in A}\tau(c)=\iota(Y),
	\]
	so $Y^\tau\in\Fac(\mathsf{Sh}(M^\ast))$, and by construction
	$\tau(a)=\sigma(a)$.
\end{proof}

\begin{lemma}\label{lem:share-model-ssm}
	If $M^\ast$ is an auxiliary-colour simplicial secrecy model over $A$, then
	$\mathsf{Sh}(M^\ast)$ is a simplicial secrecy model in the original sense of
	\cref{def:ssm}.
\end{lemma}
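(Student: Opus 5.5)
The plan is to check the clauses of \cref{def:ssm} one by one, using \cref{lem:share-unique} to compute vertices and stars, and \cref{lem:share-completion} for condition \textbf{(SN)}.

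\emph{Chromatic structure.} Each facet $X^\sigma$ consists of exactly one vertex $(a,v_a(X),\sigma(a))$ for each $a\in A$. These vertices are pairwise distinct because their first coordinates differ, so $\chi^{\mathrm{sh}}$ is injective on $X^\sigma$ and hence on every face in its downward closure. By definition every vertex lies in some $X^\sigma$.

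\emph{Facets are maximal.} We must show that the generating sets $X^\sigma$ are exactly the maximal faces. Every face lies in some generator, so it suffices to check that no generator is strictly contained in another. If $X^\sigma\subseteq Y^\tau$, both are $A$-coloured with one vertex per colour, so they coincide. Hence $\Fac(\mathsf{Sh}(M^\ast))$ is as listed, each facet satisfies $\chi^{\mathrm{sh}}[X^\sigma]=A$, and $\nu^{\mathrm{sh}}$ is a well-defined valuation by \cref{lem:share-unique}.

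\emph{Well-definedness of $N_a^{S,\mathrm{sh}}$.} The neighborhood of $(a,v,g)$ depends only on $v$, which is a vertex of colour $a$ in $M^\ast$. So $N_a^{S,\mathrm{sh}}$ is a function on $V_a^{\mathrm{sh}}$ with values in $\Pow(\Pow(\Fac(\mathsf{Sh}(M^\ast))))$.

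\emph{Condition \textbf{(SN)}.} This is the main step. Fix a vertex $w=(a,v,g)$, a set $U^\uparrow$ with $U\in N_a^S(v)$, a facet $X^\sigma\in\St(w)$, and $b\neq a$.
\begin{enumerate}
\item Since $w\in X^\sigma$, we have $v_a(X)=v$ and $\sigma(a)=g$, so $X\in\St(v)$ in $M^\ast$.
\item By \textbf{(SN)} in $M^\ast$ there is $Y$ with $X\sim_b Y$ and $Y\notin U$.
\item We lift $Y$ to a facet $b$-indistinguishable from $X^\sigma$. Apply \cref{lem:share-completion} with the agent $b$ in place of $a$ and the facet $X^\sigma$. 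It yields $\tau$ with $\tau(b)=\sigma(b)$ and $Y^\tau$ a share facet. (This uses $|A|\ge 2$, via some colour other than $b$ to absorb the difference.)
\item Since $v_b(Y)=v_b(X)$, we get $(b,v_b(Y),\tau(b))=(b,v_b(X),\sigma(b))$, so $X^\sigma\sim_b Y^\tau$.
\item Finally, $Y^\tau\notin U^\uparrow$. Otherwise $Y^\tau=Z^\rho$ for some $Z\in U$, and \cref{lem:share-unique} gives $Y=Z\in U$, a contradiction.
\end{enumerate}

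The only delicate points are the maximality of the generating facets and the correct use of \cref{lem:share-completion} for the other agent $b$ rather than the owner $a$. The membership test for lifts via \cref{lem:share-unique} then closes the argument.
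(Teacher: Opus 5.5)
Your proposal is correct and follows the paper's proof essentially step for step. You check the pure $A$-chromatic structure and the well-definedness of $\nu^{\mathrm{sh}}$ via \cref{lem:share-unique}, then verify \textbf{(SN)} by pulling back to $X\in\St(v)$, invoking \textbf{(SN)} in $M^\ast$, and lifting the witness with \cref{lem:share-completion} applied to the other agent $b$. Your explicit checks that the generating sets $X^\sigma$ are exactly the maximal faces, and that $Y^\tau\notin U^\uparrow$ follows from \cref{lem:share-unique}, fill in details the paper leaves implicit.
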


\begin{proof}
	We first verify that the underlying simplicial part is a pure $A$-chromatic
	simplicial epistemic model.
	
	Every facet $X^\sigma$ contains exactly one vertex of each colour $a\in A$,
	so it is $A$-chromatic and complete. Moreover, by the definition of
	$V^{\mathrm{sh}}$, every vertex of $V^{\mathrm{sh}}$ belongs to some facet
	of $\mathsf{Sh}(M^\ast)$. It remains to check that the valuation is well
	defined. Suppose
	\[
	X^\sigma=Y^\tau.
	\]
	By \cref{lem:share-unique}, we obtain $X=Y$. Hence
	\[
	\nu^{\mathrm{sh}}(X^\sigma)=\nu(X)=\nu(Y)=\nu^{\mathrm{sh}}(Y^\tau).
	\]
	
	It remains to verify \textbf{(SN)}.
	
	Let $(a,v,g)\in V^{\mathrm{sh}}$ and let
	\[
	U^\uparrow\in N_a^{S,\mathrm{sh}}((a,v,g)),
	\]
	where $U\in N_a^S(v)$.
	
	Let $X^\sigma\in\St((a,v,g))$ and let $b\in A\setminus\{a\}$. Then
	\[
	(a,v,g)\in X^\sigma,
	\]
	so
	\[
	v_a(X)=v
	\qquad\text{and}\qquad
	\sigma(a)=g.
	\]
	Hence $X\in\St(v)$ in the auxiliary model $M^\ast$.
	
	Since $U\in N_a^S(v)$ and $M^\ast$ satisfies \textbf{(SN)}, there exists
	$Y\in\Fac(M^\ast)$ such that
	\[
	X\sim_b Y
	\qquad\text{and}\qquad
	Y\notin U.
	\]
	By \cref{lem:share-completion}, there exists $\tau:A\to G$ such that
	\[
	Y^\tau\in\Fac(\mathsf{Sh}(M^\ast))
	\qquad\text{and}\qquad
	\tau(b)=\sigma(b).
	\]
	It follows that
	\[
	X^\sigma\sim_b Y^\tau
	\]
	in the share model, because the $b$-vertices of $X^\sigma$ and $Y^\tau$ are
	both
	\[
	(b,v_b(X),\sigma(b))=(b,v_b(Y),\tau(b)).
	\]
	Since $Y\notin U$, we have $Y^\tau\notin U^\uparrow$. Therefore
	\textbf{(SN)} holds.
\end{proof}

\begin{theorem}[Share representation theorem]\label{thm:representation}
	Let $M^\ast$ be an auxiliary-colour simplicial secrecy model over $A$. For
	every formula $\varphi\in\Lang_{KS}$, every facet $X\in\Fac(M^\ast)$, and
	every function $\sigma:A\to G$ satisfying
	\[
	\sum_{a\in A}\sigma(a)=\iota(X),
	\]
	we have
	\[
	\mathsf{Sh}(M^\ast),X^\sigma\forces \varphi
	\quad\Longleftrightarrow\quad
	M^\ast,X\forces \varphi.
	\]
\end{theorem}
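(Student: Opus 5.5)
We argue by structural induction on $\varphi$, proving simultaneously for all $X\in\Fac(M^\ast)$ and all admissible $\sigma$ that $\mathsf{Sh}(M^\ast),X^\sigma\forces\varphi$ iff $M^\ast,X\forces\varphi$. By \cref{lem:share-unique}, every facet of $\mathsf{Sh}(M^\ast)$ has the form $X^\sigma$ for a unique pair $(X,\sigma)$, so the projection $\pi:X^\sigma\mapsto X$ is well defined. The induction hypothesis at $\psi$ then says exactly that
\[
\llbracket\psi\rrbracket_{\mathsf{Sh}(M^\ast)}=\pi^{-1}\bigl[\llbracket\psi\rrbracket_{M^\ast}\bigr]=\bigl(\llbracket\psi\rrbracket_{M^\ast}\bigr)^\uparrow .
\]
This identity will be used in the secrecy case. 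The atomic case is immediate from $\nu^{\mathrm{sh}}(X^\sigma)=\nu(X)$, and the Boolean cases follow directly from the induction hypothesis.

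For the case $K_a\psi$ we first establish a correspondence between the two accessibility relations. Since the $a$-vertex of $X^\sigma$ is $(a,v_a(X),\sigma(a))$, the condition $X^\sigma\sim_a Y^\tau$ holds iff $v_a(X)=v_a(Y)$ and $\sigma(a)=\tau(a)$. In particular $X^\sigma\sim_a Y^\tau$ implies $X\sim_a Y$. Conversely, if $X\sim_a Y$ in $M^\ast$, then \cref{lem:share-completion} supplies $\tau$ with $\tau(a)=\sigma(a)$, so $X^\sigma\sim_a Y^\tau$; this is the step that uses $|A|\ge 2$.

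With this correspondence the $K_a$ case is routine. Suppose $M^\ast,X\forces K_a\psi$. Every $a$-neighbour $Y^\tau$ of $X^\sigma$ has $Y\sim_a X$, so $M^\ast,Y\forces\psi$, and the induction hypothesis gives $\mathsf{Sh}(M^\ast),Y^\tau\forces\psi$. Conversely, suppose $\mathsf{Sh}(M^\ast),X^\sigma\forces K_a\psi$ and let $Y\sim_a X$. The completion gives some $Y^\tau\sim_a X^\sigma$, hence $\mathsf{Sh}(M^\ast),Y^\tau\forces\psi$, and the induction hypothesis gives $M^\ast,Y\forces\psi$.

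The secrecy case $S_a\psi$ is the main point. The knowledge conjunct is handled by the $K_a$ case just proved. For the neighbourhood conjunct, put $v=v_a(X)$. The $a$-vertex of $X^\sigma$ is $(a,v,\sigma(a))$, and $N_a^{S,\mathrm{sh}}((a,v,\sigma(a)))=\{U^\uparrow\mid U\in N_a^S(v)\}$. We therefore need
\[
\llbracket\psi\rrbracket_{M^\ast}^\uparrow\in\{U^\uparrow\mid U\in N_a^S(v)\}
\iff
\llbracket\psi\rrbracket_{M^\ast}\in N_a^S(v).
\]
The right-to-left direction is trivial. The left-to-right direction is where care is needed: from $\llbracket\psi\rrbracket_{M^\ast}^\uparrow=U^\uparrow$ for some $U\in N_a^S(v)$, \cref{lem:lift-injective} yields $\llbracket\psi\rrbracket_{M^\ast}=U$. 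Combining this with the truth-set identity above closes the induction.
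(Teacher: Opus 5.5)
Your proof is correct and follows essentially the same route as the paper's. It is an induction on $\varphi$. For the $K_a$ case you use the projection $X^\sigma\mapsto X$ in one direction and \cref{lem:share-completion} (where $|A|\ge 2$ enters) in the other. For the $S_a$ case you reduce to the truth-set identity $\llbracket\psi\rrbracket_{\mathsf{Sh}(M^\ast)}=(\llbracket\psi\rrbracket_{M^\ast})^\uparrow$ together with \cref{lem:lift-injective}. The only cosmetic differences are that you argue the converse of the $K_a$ case directly rather than by contraposition, and you make explicit, via \cref{lem:share-unique}, that the projection is well defined.
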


\begin{proof}
	By induction on the structure of $\varphi$.
	
	\medskip
	\noindent
	\textbf{Case} $\varphi=p$.
	By definition of the valuation on the share model,
	\[
	\mathsf{Sh}(M^\ast),X^\sigma\forces p
	\quad\Longleftrightarrow\quad
	p\in \nu^{\mathrm{sh}}(X^\sigma)
	\quad\Longleftrightarrow\quad
	p\in \nu(X)
	\quad\Longleftrightarrow\quad
	M^\ast,X\forces p.
	\]
	
	\medskip
	\noindent
	\textbf{Boolean cases.}
	The cases for $\neg$ and $\land$ are immediate by the induction hypothesis.
	
	\medskip
	\noindent
	\textbf{Case} $\varphi=K_a\psi$.
	
	Assume first that $M^\ast,X\forces K_a\psi$. Let
	$Y^\tau\in\Fac(\mathsf{Sh}(M^\ast))$ be such that
	\[
	X^\sigma\sim_a Y^\tau.
	\]
	Then the $a$-vertices of $X^\sigma$ and $Y^\tau$ are equal, so in particular
	\[
	v_a(X)=v_a(Y).
	\]
	Hence $X\sim_a Y$ in $M^\ast$. Since $M^\ast,X\forces K_a\psi$, we get
	\[
	M^\ast,Y\forces \psi.
	\]
	By the induction hypothesis,
	\[
	\mathsf{Sh}(M^\ast),Y^\tau\forces \psi.
	\]
	Since $Y^\tau$ was arbitrary,
	\[
	\mathsf{Sh}(M^\ast),X^\sigma\forces K_a\psi.
	\]
	
	Conversely, assume
	\[
	M^\ast,X\not\forces K_a\psi.
	\]
	Then there exists $Y\in\Fac(M^\ast)$ such that
	\[
	X\sim_a Y
	\qquad\text{and}\qquad
	M^\ast,Y\not\forces \psi.
	\]
	By \cref{lem:share-completion}, there exists $\tau:A\to G$ such that
	\[
	Y^\tau\in\Fac(\mathsf{Sh}(M^\ast))
	\qquad\text{and}\qquad
	\tau(a)=\sigma(a).
	\]
	Therefore
	\[
	X^\sigma\sim_a Y^\tau
	\]
	in the share model. By the induction hypothesis,
	\[
	\mathsf{Sh}(M^\ast),Y^\tau\not\forces \psi.
	\]
	Hence
	\[
	\mathsf{Sh}(M^\ast),X^\sigma\not\forces K_a\psi.
	\]
	
	Thus
	\[
	\mathsf{Sh}(M^\ast),X^\sigma\forces K_a\psi
	\quad\Longleftrightarrow\quad
	M^\ast,X\forces K_a\psi.
	\]
	
	\medskip
	\noindent
	\textbf{Case} $\varphi=S_a\psi$.
	
	By the $K_a$-case already proved,
	\[
	\mathsf{Sh}(M^\ast),X^\sigma\forces K_a\psi
	\quad\Longleftrightarrow\quad
	M^\ast,X\forces K_a\psi.
	\]
	
	By the induction hypothesis, for every facet
	$Y^\tau\in\Fac(\mathsf{Sh}(M^\ast))$,
	\[
	\mathsf{Sh}(M^\ast),Y^\tau\forces \psi
	\quad\Longleftrightarrow\quad
	M^\ast,Y\forces \psi.
	\]
	Hence the truth set of $\psi$ in the share model is exactly the lift of its
	truth set in the auxiliary model:
	\[
	\llbracket\psi\rrbracket_{\mathsf{Sh}(M^\ast)}
	=
	\bigl(\llbracket\psi\rrbracket_{M^\ast}\bigr)^\uparrow.
	\]
	
	Now $v_a(X^\sigma)=(a,v_a(X),\sigma(a))$, and by definition of the share
	neighborhoods,
	\[
	N_a^{S,\mathrm{sh}}(v_a(X^\sigma))
	=
	\{U^\uparrow \mid U\in N_a^S(v_a(X))\}.
	\]
	Therefore
	\begin{align*}
		\llbracket\psi\rrbracket_{\mathsf{Sh}(M^\ast)}
		\in N_a^{S,\mathrm{sh}}(v_a(X^\sigma))
		&\Longleftrightarrow
		\bigl(\llbracket\psi\rrbracket_{M^\ast}\bigr)^\uparrow
		\in \{U^\uparrow \mid U\in N_a^S(v_a(X))\}\\
		&\Longleftrightarrow
		\llbracket\psi\rrbracket_{M^\ast}\in N_a^S(v_a(X)),
	\end{align*}
	where the second equivalence uses \cref{lem:lift-injective}.
	
	Combining the knowledge part with the neighborhood part, we obtain
	\[
	\mathsf{Sh}(M^\ast),X^\sigma\forces S_a\psi
	\quad\Longleftrightarrow\quad
	M^\ast,X\forces S_a\psi.
	\]
	
	This completes the induction.
\end{proof}

\subsection{Completeness}

We can now prove completeness by contraposition.

\begin{theorem}[Completeness of $\mathsf{SSL}$ for \texorpdfstring{$|A|\ge 2$}{|A|>=2}]\label{thm:completeness}
	For every formula $\varphi\in\Lang_{KS}$,
	\[
	\models \varphi
	\quad\Longrightarrow\quad
	\vdash_{\mathsf{SSL}}\varphi.
	\]
\end{theorem}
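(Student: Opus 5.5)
We argue by contraposition, assembling the pieces already established in this section. Suppose $\not\vdash_{\mathsf{SSL}}\varphi$. Then $\{\neg\varphi\}$ is $\mathsf{SSL}$-consistent: otherwise $\vdash_{\mathsf{SSL}}\neg\varphi\rightarrow\bot$, and propositional reasoning would give $\vdash_{\mathsf{SSL}}\varphi$. By the Lindenbaum extension lemma we extend $\{\neg\varphi\}$ to some $\Gamma\in\mathsf{MCS}$. Since $\varphi\notin\Gamma$, the Truth Lemma (\cref{lem:truth}) yields
\[
M^{\ast c},X_\Gamma^\ast\not\forces\varphi .
\]
By \cref{lem:aux-canonical-ssm}, $M^{\ast c}$ is an auxiliary-colour simplicial secrecy model. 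It is not yet a model in the original class, however, because its facets carry the extra colour $*$.

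Next we pass to the share model $\mathsf{Sh}(M^{\ast c})$. By \cref{lem:share-model-ssm}, which uses the standing assumption $|A|\ge 2$ through \cref{lem:share-completion}, this is a genuine pure $A$-chromatic simplicial secrecy model in the sense of \cref{def:ssm}. We still need a facet of the share model that sits over $X_\Gamma^\ast$. Fix some $a_0\in A$ and define $\sigma(a_0)=\iota(X_\Gamma^\ast)$ and $\sigma(c)=0$ for $c\neq a_0$. Then $\sum_{a\in A}\sigma(a)=\iota(X_\Gamma^\ast)$, so $(X_\Gamma^\ast)^\sigma\in\Fac(\mathsf{Sh}(M^{\ast c}))$. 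This is the same choice already made in the proof of \cref{lem:lift-injective}.

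Finally, the Share Representation Theorem (\cref{thm:representation}) gives
\[
\mathsf{Sh}(M^{\ast c}),(X_\Gamma^\ast)^\sigma\not\forces\varphi .
\]
Hence $\not\models\varphi$, which is the contrapositive of the claim.

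The substantive work has already been done in the preceding lemmas, so no step here is a real obstacle. The only points that need care are the following.
\begin{enumerate}
\item The auxiliary model must not be confused with a member of the intended class. This is exactly why the representation step cannot be skipped.
\item Consistency of $\{\neg\varphi\}$ must be derived from non-derivability of $\varphi$, which is a routine propositional step.
\item The set $\mathsf{MCS}$ must be non-empty and the resulting structures well defined. Both follow from the consistency of $\{\neg\varphi\}$ itself.
\end{enumerate}
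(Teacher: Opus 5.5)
Your proposal is correct and is essentially the paper's own proof: contraposition, Lindenbaum extension of $\{\neg\varphi\}$, the Truth Lemma in $M^{\ast c}$, the same choice of $\sigma$ concentrated on one agent $a_0$, and then \cref{thm:representation} together with \cref{lem:share-model-ssm}. You also cite \cref{lem:aux-canonical-ssm} explicitly, which the paper leaves implicit. One small inaccuracy in your item 3: the consistency of $\{\neg\varphi\}$ only secures $\mathsf{MCS}\neq\emptyset$, whereas well-definedness of the canonical neighbourhoods and of the share valuation comes from \cref{lem:well-defined-N} and \cref{lem:share-unique}.
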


\begin{proof}
	Suppose
	\[
	\not\vdash_{\mathsf{SSL}}\varphi.
	\]
	Then $\neg\varphi$ is $\mathsf{SSL}$-consistent, and can be extended to some
	$\Gamma\in\mathsf{MCS}$.
	
	By \cref{lem:truth},
	\[
	M^{\ast c},X_\Gamma^\ast\forces \neg\varphi.
	\]
	Hence
	\[
	M^{\ast c},X_\Gamma^\ast\not\forces \varphi.
	\]
	
	Let
	\[
	G^c:=\mathbb{Z}^{(\Fac(M^{\ast c}))}
	\]
	and let
	\[
	\iota^c:\Fac(M^{\ast c})\to G^c
	\]
	be the canonical injection used in the share construction
	\[
	\mathsf{Sh}(M^{\ast c}).
	\]
	Choose any function $\sigma_\Gamma:A\to G^c$ such that
	\[
	\sum_{a\in A}\sigma_\Gamma(a)=\iota^c(X_\Gamma^\ast);
	\]
	for example, pick some $a_0\in A$ and let
	\[
	\sigma_\Gamma(a_0)=\iota^c(X_\Gamma^\ast),
	\qquad
	\sigma_\Gamma(c)=0 \text{ for } c\neq a_0.
	\]
	Then
	\[
	(X_\Gamma^\ast)^{\sigma_\Gamma}\in\Fac(\mathsf{Sh}(M^{\ast c})).
	\]
	By \cref{thm:representation},
	\[
	\mathsf{Sh}(M^{\ast c}),(X_\Gamma^\ast)^{\sigma_\Gamma}\not\forces \varphi.
	\]
	By \cref{lem:share-model-ssm}, $\mathsf{Sh}(M^{\ast c})$ is a simplicial
	secrecy model in the original sense of \cref{def:ssm}. Therefore $\varphi$ is
	not valid on the original class of simplicial secrecy models.
	
	By contraposition,
	\[
	\models \varphi \Longrightarrow \vdash_{\mathsf{SSL}}\varphi.
	\]
\end{proof}

\begin{remark}
	The completeness proof has two distinct components. The auxiliary canonical
	model $M^{\ast c}$ supplies the usual maximally-consistent-set machinery,
	familiar from canonical constructions in modal and neighborhood
	logics~\cite{chellas1980modal,vanbenthem2014evidence}, while avoiding the
	collapse of distinct worlds into the same $A$-coloured facet. The share
	representation theorem then removes the auxiliary colour by redistributing its
	global information over the original agent colours, in a manner consonant
	with geometric representation techniques used elsewhere in simplicial
	epistemic logic~\cite{goubault2021simplicial,goubault2022kb4n,
		randrianomentsoa2023impure}. In this way, the final completeness theorem is
	obtained for the original class of pure $A$-chromatic simplicial secrecy
	models.
\end{remark}

\subsection{Strong completeness and the role of \texorpdfstring{$|A|\ge 2$}{|A|>=2}}

For semantic consequence, we write
\[
\Phi \models \varphi
\]
to mean that for every simplicial secrecy model $M$ and every facet $X\in\Fac(M)$,
if
\[
M,X\forces \psi
\qquad\text{for all } \psi\in\Phi,
\]
then
\[
M,X\forces \varphi.
\]

\begin{definition}[Derivability from assumptions]\label{def:derivability-from-assumptions}
	For $\Phi\subseteq\Lang_{KS}$ and $\varphi\in\Lang_{KS}$, we write
	\[
	\Phi\vdash_{\mathsf{SSL}}\varphi
	\]
	if and only if there exist formulas $\phi_1,\dots,\phi_n\in\Phi$ such that
	\[
	\vdash_{\mathsf{SSL}}(\phi_1\land\dots\land\phi_n)\rightarrow\varphi,
	\]
	where $n\ge 0$, and the empty conjunction is understood as $\top$.
\end{definition}

\begin{corollary}[Strong completeness of $\mathsf{SSL}$ for \texorpdfstring{$|A|\ge 2$}{|A|>=2}]\label{cor:strong-completeness}
	For every set of formulas $\Phi\subseteq\Lang_{KS}$ and every formula
	$\varphi\in\Lang_{KS}$,
	\[
	\Phi \models \varphi
	\quad\Longrightarrow\quad
	\Phi \vdash_{\mathsf{SSL}} \varphi.
	\]
\end{corollary}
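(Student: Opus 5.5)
We argue by contraposition, reusing the machinery behind \cref{thm:completeness}. Suppose $\Phi\not\vdash_{\mathsf{SSL}}\varphi$. By \cref{def:derivability-from-assumptions}, for no $\phi_1,\dots,\phi_n\in\Phi$ is $(\phi_1\land\dots\land\phi_n)\rightarrow\varphi$ a theorem. We first check that the set $\Phi\cup\{\neg\varphi\}$ is $\mathsf{SSL}$-consistent. If it were not, there would be finitely many $\phi_1,\dots,\phi_n\in\Phi$ with
\[
\vdash_{\mathsf{SSL}}(\phi_1\land\dots\land\phi_n\land\neg\varphi)\rightarrow\bot .
\]
If $\neg\varphi$ does not actually occur among the conjuncts used, we may add it harmlessly. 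Propositional reasoning then gives $\vdash_{\mathsf{SSL}}(\phi_1\land\dots\land\phi_n)\rightarrow\varphi$, contradicting the assumption. The case $n=0$ is included, with the empty conjunction read as $\top$.

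By the Lindenbaum extension lemma, extend $\Phi\cup\{\neg\varphi\}$ to some $\Gamma\in\mathsf{MCS}$. By the truth lemma (\cref{lem:truth}), $M^{\ast c},X_\Gamma^\ast\forces\psi$ for every $\psi\in\Phi$ and $M^{\ast c},X_\Gamma^\ast\not\forces\varphi$. The only point needing care is that this holds simultaneously for all formulas of the possibly infinite set $\Phi$. This is automatic, because the truth lemma is stated uniformly for all formulas at the single facet $X_\Gamma^\ast$.

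Next, transfer to a pure model exactly as in the proof of \cref{thm:completeness}. Choose $\sigma_\Gamma$ with $\sigma_\Gamma(a_0)=\iota^c(X_\Gamma^\ast)$ and $\sigma_\Gamma(c)=0$ otherwise. By \cref{thm:representation}, the facet $(X_\Gamma^\ast)^{\sigma_\Gamma}$ of $\mathsf{Sh}(M^{\ast c})$ satisfies every $\psi\in\Phi$ and fails $\varphi$. This step uses the standing assumption $|A|\ge 2$, which enters through \cref{lem:share-completion}. By \cref{lem:share-model-ssm}, $\mathsf{Sh}(M^{\ast c})$ is a simplicial secrecy model. Hence $\Phi\not\models\varphi$.

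There is no real obstacle. The canonical model does not depend on $\Phi$, and the representation theorem is pointwise and uniform in the formula, so the weak completeness argument lifts directly. The only genuine step is the finitary consistency reduction in the first paragraph.
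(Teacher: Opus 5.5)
Your proposal is correct and follows essentially the same route as the paper: the finitary consistency of $\Phi\cup\{\neg\varphi\}$, a Lindenbaum extension to $\Gamma$, the truth lemma at $X_\Gamma^\ast$ in $M^{\ast c}$, and then transfer to $\mathsf{Sh}(M^{\ast c})$ via \cref{thm:representation} and \cref{lem:share-model-ssm}. Your explicit handling of the case where $\neg\varphi$ is not among the inconsistent conjuncts, and of $n=0$, is a minor refinement that the paper leaves implicit.
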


\begin{proof}
	Suppose
	\[
	\Phi \not\vdash_{\mathsf{SSL}}\varphi.
	\]
	We claim that
	\[
	\Phi\cup\{\neg\varphi\}
	\]
	is $\mathsf{SSL}$-consistent in the finitary sense.
	
	For otherwise, there would exist formulas $\phi_1,\dots,\phi_n\in\Phi$ such
	that
	\[
	\vdash_{\mathsf{SSL}}
	(\phi_1\land\dots\land\phi_n\land\neg\varphi)\rightarrow\bot.
	\]
	By propositional reasoning, this yields
	\[
	\vdash_{\mathsf{SSL}}
	(\phi_1\land\dots\land\phi_n)\rightarrow\varphi,
	\]
	contradicting
	\[
	\Phi\not\vdash_{\mathsf{SSL}}\varphi.
	\]
	Hence $\Phi\cup\{\neg\varphi\}$ is consistent.
	
	Extend it to some $\Gamma\in\mathsf{MCS}$. Then
	\[
	\Phi\subseteq\Gamma
	\qquad\text{and}\qquad
	\neg\varphi\in\Gamma.
	\]
	By \cref{lem:truth},
	\[
	M^{\ast c},X_\Gamma^\ast\forces \Phi
	\qquad\text{and}\qquad
	M^{\ast c},X_\Gamma^\ast\not\forces \varphi.
	\]
	
	Let
	\[
	G^c:=\mathbb{Z}^{(\Fac(M^{\ast c}))}
	\]
	and let
	\[
	\iota^c:\Fac(M^{\ast c})\to G^c
	\]
	be the canonical injection used in the share construction
	\[
	\mathsf{Sh}(M^{\ast c}).
	\]
	Choose any $\sigma_\Gamma:A\to G^c$ such that
	\[
	\sum_{a\in A}\sigma_\Gamma(a)=\iota^c(X_\Gamma^\ast).
	\]
	By \cref{thm:representation},
	\[
	\mathsf{Sh}(M^{\ast c}),(X_\Gamma^\ast)^{\sigma_\Gamma}\forces \Phi
	\qquad\text{and}\qquad
	\mathsf{Sh}(M^{\ast c}),(X_\Gamma^\ast)^{\sigma_\Gamma}\not\forces \varphi.
	\]
	By \cref{lem:share-model-ssm}, this is a countermodel in the original class
	of simplicial secrecy models. Hence
	\[
	\Phi\not\models \varphi.
	\]
	Therefore, by contraposition,
	\[
	\Phi\models\varphi
	\quad\Longrightarrow\quad
	\Phi\vdash_{\mathsf{SSL}}\varphi.
	\]
\end{proof}

\begin{remark}\label{rem:one-agent-role}
	The restriction $|A|\ge 2$ is essential for the completeness results above.
	If $A=\{a\}$, then every pure $A$-chromatic simplicial epistemic model has
	singleton facets, so $\sim_a$ is the identity relation and
	\[
	K_a\varphi\leftrightarrow\varphi
	\]
	is valid on the whole model class. Thus the one-agent class is too small to
	serve as a complete semantics for the full S5 part of $\mathsf{SSL}$.
	
	Technically, the same restriction is reflected in the share construction:
	the share-completion lemma requires a second agent colour in order to
	redistribute the auxiliary global information while keeping one designated
	share fixed. For this reason, the metatheoretic results proved in this
	section are naturally stated for the genuinely multi-agent case.
\end{remark}

\section{Conclusion and Future Work}\label{sec:conclusion}

\subsection{Summary of the paper}

In this paper we developed a logic of secrecy on simplicial models. The central
idea has been that secrecy should not be treated merely as a definable
combination of knowledge and ignorance. Instead, we introduced \emph{simplicial
	secrecy models}, which preserve the standard simplicial semantics for epistemic
knowledge while enriching it with a vertex-based neighborhood layer attached to
agents' local states. This yields a primitive secrecy operator $S_a$ whose truth
depends both on ordinary simplicial knowledge and on whether the truth set of a
formula is designated as secret at the owner's current local state. On top of
this semantics, we formulated the axiomatic system $\mathsf{SSL}$, showing that
it captures the owner-local yet genuinely non-normal behavior of secrecy, and we
established soundness for the full class of simplicial secrecy models together
with completeness, for the genuinely multi-agent case $|A|\ge 2$, by means of
an auxiliary-colour canonical construction and a representation theorem back
into pure $A$-chromatic simplicial secrecy models. In this way, the paper
provides a primitive, vertex-based, and geometrically grounded account of
secrecy in simplicial epistemic semantics.

\subsection{Future directions}

A first direction concerns the proof theory of secrecy over restricted classes
of secrecy neighborhoods. The present system $\mathsf{SSL}$ is exact for the
current semantics in the genuinely multi-agent case $|A|\ge 2$, but it
remains to determine which additional principles become valid once one
imposes natural structural constraints on the secrecy neighborhoods. This
would extend the present analysis in the broader spirit of refined
neighborhood proof theory~\cite{chellas1980modal,vanbenthem2014evidence}.

A second direction is \emph{coalition secrecy}. In many multi-agent settings, a
proposition may remain secret from each individual agent while ceasing to be
secret from a group that pools its local information. Since simplicial models
are especially well suited to representing combinations of local states, they
provide a natural setting for studying secrecy relative to coalitions and its
interaction with distributed knowledge~\cite{fagin1995reasoning,
	goubault2023semisimplicial,galimullin2024varieties,
	balbiani2024dynamicdistributed}.

A third direction is \emph{secrecy under failures}. One of the main motivations
for simplicial semantics comes from distributed computation, where crash
failures, omissions, and partial communication failures are central
phenomena~\cite{herlihy2014distributed}. Extending the present framework to
fault-sensitive simplicial settings may reveal new connections between secrecy,
loss of local information, and epistemic robustness under failures; existing
simplicial work on epistemic logic with failures suggests a natural starting
point for such an investigation~\cite{goubault2022kb4n}.

A fourth direction is a \emph{dynamic logic of secrecy on simplicial models}.
Secrets in multi-agent systems are rarely static: they may be revealed,
preserved, weakened, or transformed by public announcements, private
communication, or protocol executions. It would therefore be valuable to study
how the vertex-based secrecy neighborhoods introduced here evolve under
epistemic updates, building on existing dynamic and communication-oriented
developments in simplicial and distributed epistemic
logic~\cite{goubault2021simplicial,vanditmarsch2021equality,
	velazquez2021communication,castaneda2024pattern}.

More generally, we hope that the framework developed here helps clarify how
secrecy should be understood in local-state semantics. In Kripke-style
approaches, secrecy is typically analyzed as a pattern of knowledge and
ignorance over worlds. In the simplicial setting proposed here, it becomes a
structural feature of an agent's position in a combinatorial epistemic space.
This suggests that secrecy is not merely a derivative epistemic condition, but
also a geometric phenomenon of distributed information.


\end{document}